\newcommand{\subparagraph}{}
\titlespacing*{\section}{15pt}{1.2\baselineskip}{0.9\baselineskip}
\theoremstyle{remark}
\newtheorem{theorem}{\bf Theorem}
\newtheorem{assumption}{\bf Claim}
\newtheorem{corollary}{\bf Corollary}
\newtheorem{remark}{\bf Remark}
\newtheorem{lemma}{\bf Lemma}
\long\def\comment#1{}
\newcommand\figref{Figure~\ref}
\newcommand{\ben}{\begin{enumerate}}
\newcommand{\een}{\end{enumerate}}
\newcommand{\beq}{\begin{equation}}
\newcommand{\eeq}{\end{equation}}
\newcommand{\bi}{\begin{itemize}}
\newcommand{\ei}{\end{itemize}}
\DeclareMathOperator*{\argmin}{arg\,min}
\DeclareMathOperator*{\argmax}{arg\,max}
\newcommand{\PP}{\mathbb{P}}
\newcommand{\RR}{\mathbb{R}}
\newcommand{\EE}{\mathbb{E}}
\newcommand{\av}{{\bf a}}
\newcommand{\cv}{{\bf c}}
\newcommand{\ev}{{\bf e}}
\newcommand{\mv}{{\bf m}}
\newcommand{\rv}{{\bf r}}
\newcommand{\sv}{{\bf s}}
\newcommand{\xv}{{\bf x}}
\newcommand{\yv}{{\bf y}}
\newcommand{\zv}{{\bf z}}
\newcommand{\Am}{{\bf A}}
\newcommand{\Id}{{\bf I}}
\newcommand{\Zm}{{\bf Z}}
\newcommand{\Lc}{{\cal L}}
\newcommand{\Sc}{{\cal S}}
\newcommand{\alphav}{\hbox{\boldmath$\alpha$}}
\newcommand{\thetav}{\hbox{\boldmath$\theta$}}
\newcommand{\piv}{\hbox{\boldmath$\pi$}}
\newcommand{\rhov}{\hbox{\boldmath$\rho$}}
\newcommand{\SNR}{{\sf SNR}}
\title{SPARCs for Unsourced Random Access}
\author{Alexander Fengler, Peter Jung, Giuseppe Caire
\thanks{The authors are with the Communications and Information Theory Group,
Technische Universit\"{a}t Berlin (\{fengler, peter.jung,
caire\}@tu-berlin.de).}
\thanks{Parts of this paper were presented in
the 2019 and 2020 IEEE International Symposium on Information Theory (ISIT) \cite{Fen2019c}.}
}
\begin{document}

\maketitle

\begin{abstract}
    Unsourced random-access (U-RA) is a type of grant-free random access with a
    virtually unlimited number of users, of which only a certain number $K_a$ are
    active on the same time slot. Users employ exactly the same codebook, and
    the task of the receiver is to decode the list of transmitted messages.
    We present a concatenated coding
    construction for U-RA on the AWGN channel,
    in which a sparse regression code (SPARC) is used as an
    inner code to create an effective outer OR-channel. Then an outer code is used
    to resolve the multiple-access interference in the OR-MAC. We propose a modified
    version of the approximate message passing (AMP) algorithm as an inner decoder
    and give a precise asymptotic analysis of the error probabilities of the AMP decoder
    and of a hypothetical optimal
    inner MAP decoder. This analysis shows that the
    concatenated construction under optimal decoding can achieve a vanishing per-user error probability
    in the limit of large blocklength and a large number of active users at sum-rates up to the symmetric
    Shannon capacity, i.e. as long as $K_aR < 0.5\log_2(1+K_a\SNR)$.
    This extends previous point-to-point optimality results about SPARCs to the unsourced
    multiuser scenario.
    Furthermore, we give an optimization algorithm to find the power allocation for the inner
    SPARC code that minimizes the $\SNR$ required to achieve a given target per-user error probability with 
    the AMP decoder.
\end{abstract}

\begin{keywords}
    Internet of Things (IoT), Machine Type Communication (MTC), Unsourced Random Access, Sparse Regression Code (SPARC), Approximate Message Passing (AMP).
\end{keywords}

\section{Introduction}
\label{sec:intro}
In new application scenarios of wireless networks such as Internet-of-Things (IoT),
it is envisioned that
a very large number of devices (referred to as users) are sending data to a common
access point. 
Typical examples thereof include sensors for monitoring smart infrastructure or biomedical devices. 
This type of communication is characterized by short messages and sporadic activity.
The large number of users and the sporadic nature of the transmission makes
it very wasteful to allocate dedicated transmission resources to all the users.
In contrast to these requirements,
the traditional information theoretic treatment of the multiple-access uplink channel
is focused on few users $K$, large blocklength $n$ and coordinated transmission,
in the sense that each user is given an individual distinct
codebook, and the $K$ users agree on which rate $K$-tuple inside the capacity
region to operate \cite{Ahl1973,Lia1973,El1980}.
Mathematically, this is reflected by considering the limit of infinite message- and
blocklength while keeping the rate and the number of users fixed.
This approach does not capture the bursty random arrival of messages in real world
multiple access networks \cite{Gal1985}, which lead to the widespread success
of packet based random-access models \cite{Abr1970,Cas2007,Liv2011a,Pao2015}. 
Such models are based on simplified collision channels \cite{Mas1985},
which ignore the underlying physical communication and are thereby
limited in the achievable performance \cite{Eph1998,Ber2016}.
An alternative information theoretic route,
more suited to capture the short messages of machine-type communication,
was taken in recent works like \cite{Che2017,Pol2017}, where the number of users $K$
is taken to infinity along with the blocklength. 
It was shown that the information theoretic limits
may be drastically different
when the number of users grows together with the blocklength.

On a high level, we distinguish between {\em grant-based} and
{\em grant-free} approaches.
In a grant-based protocol the active users are identified and the base-station (BS)
can then allocate transmission resources to the active users, while in a grant-free protocol the
users transmit their data right away without awaiting the approval of the BS.
For a recent overview of grant-free protocols see \cite{Che2021a}.
A novel grant-free random access paradigm, referred to as unsourced random-access (U-RA),
was suggested in \cite{Pol2017}. In U-RA each user employs the same codebook and the task
of the decoder is to recover the list of transmitted messages irrespective of the identity of
the users. The number of \emph{inactive}
users in such a model can be arbitrarily large and the performance of the system depends
only on the number of \emph{active} users $K_a$. Furthermore, 
a transmission protocol without the need for
a subscriber identity
is well suited for mass production. These features make U-RA particularly interesting for
the aforementioned IoT applications. 

In \cite{Pol2017} the U-RA model for the real adder AWGN-MAC was introduced and
a finite-blocklength random coding bound on the achievable energy-per-bit over noise power spectral density
($E_b/N_0$)
was established.
In following works several
practical approaches were suggested which successively reduced the
gap to the random coding achievability bound \cite{Vem2017,Ama2020a,Mar2019,Pra2020a}.
The model has been extended to fading \cite{Kow2019a} and MIMO channels \cite{Fen2021c}.
A concatenated coding approach for the U-RA problem on the real adder AWGN was proposed
in \cite{Ama2020a}. The idea is to split each transmission 
into $L$ subslots. In each subslot the active users send a column from a common
\emph{inner} coding matrix, while the symbols across all subslots are chosen from a
common \emph{outer tree code}. 
We build upon the finding of \cite{Ama2020a} and its similarity to sparse regression codes
(SPARCs) to give an improved inner decoder and a complete asymptotic error analysis.
SPARCs were introduced in \cite{Jos2012} as a class of channel codes for the point-to-point
AWGN channel, which can achieve rates up to Shannon capacity under maximum-likelihood decoding.
Later, it was shown that SPARCs can achieve capacity under
approximate message passing (AMP) decoding with either power
allocation \cite{Rus2017} or spatial coupling \cite{Bar2016a, Bar2017a, Hsi2018a, Rus2020}.
AMP is an iterative low-complexity algorithm for solving random linear
estimation problems or generalized versions thereof \cite{Don2009a,Ran2011,Bay2011}.
A recent survey on SPARCs can be found in \cite{Ven2019a}.

One of the appealing features of the AMP algorithm is that it is possible to analyse its
asymptotic error probability, averaged over certain random matrix ensembles,
through the so called state evolution (SE) equations \cite{Bay2011,Ber2020}.
Interestingly the SE equations can also be obtained as the stationary points of the
replica symmetric (RS) potential, an expression that was first calculated through the non-rigorous
replica method \cite{Tan2002,Guo2005c}. It was shown that
in random linear estimation problems the stationary points of the RS potential also
characterize the symbols-wise posterior distribution of the input elements and therefore
also the error probability of several optimal estimators like the minimum-mean-square error (MMSE)
estimator \cite{Guo2005c,Guo2009}.
The difference between the AMP and the MMSE estimate is that
the MMSE estimate always corresponds to the global minimum of the RS-potential, while the
AMP algorithm gets `stuck' in local minima. The rate below which a local minimum appears
was called the \emph{algorithmic} or \emph{belief-propagation} threshold in \cite{Guo2009, Krz2012a,Bar2017a}.
It was shown in \cite{Krz2012a,Bar2020b} that, despite the existence of local minima in the RS-potential,
the AMP algorithm can
still converge to the global minimum when used with spatially coupled matrices.
Although the RS-potential was derived by (and named after) the non-rigorous replica method,
it was recently proven to hold rigorously \cite{Ree2016,Bar2019}. The proof
of \cite{Bar2019} is more general in the sense that it includes the case
where the unknown, to be estimated, vector of information symbols
consists of blocks of size $2^J$ and each block is considered to be drawn iid from some distribution
on $\RR^{2^J}$.\\
Our main contribution in this work are as follows
\bi
\item We extend the concept of sparse regression codes to the unsourced random access setting
    by making use of the tree code of \cite{Ama2020a}. 
\item For the resulting inner-outer concatenated coding scheme,
    we introduce a matching outer channel model,
    analyse the achievable rates on this outer channel, and compare it to existing practical solutions.
\item We propose a modified approximate message passing algorithm as an inner decoder and
    analyse its asymptotic error probability through its SE.
    We use the connections between SE and the RS formula to find
    the error probability of a hypothetical MAP decoder. 
\item We find that the error probability of the inner decoder admits a simple
    closed form in the limit of $K_a,J\to\infty$ with $J = \alpha \log_2 K_a$
    for some $\alpha >1$. The limit was also considered in \cite{Ama2020a}, motivated by the
    fact that $J$ is the number of bits required to encode the identity of each of up to $K_\text{tot}$
    users if $K_a = K_\text{tot}^{1/\alpha}$.
    We show that the per-user error probability of the concatenated scheme with inner MAP decoding
    vanishes in the
    limit of large blocklength and infinitely many users,
    if the sum-rate is smaller than the symmetric Shannon capacity
    $0.5\log_2(1+K_a\SNR)$. 
    This shows that an unsourced random access scheme can,
    even with no coordination between users,
    achieve the same symmetric
    rates as a non-unsourced scheme. 
\item Using the results from the asymptotic analysis we identify parameter regions
    where the AMP decoder can achieve the same error probability as the MAP decoder. In parameter regions
    where there is a gap between the achievable error probability of the AMP and the MAP decoder we propose a method
    for finding an optimal power allocation that is able to improve the performance of the AMP decoder significantly.
\item We provide finite-length simulations to show the efficiency of the proposed coding scheme and the accuracy
    of the analytical predictions.
\ei
The paper is organized as follows.
In Section \ref{sec:model} we describe the channel model.
In Section \ref{sec:coding} we introduce the concatenated coding scheme.
In Section \ref{sec:inner} we introduce the inner AMP decoder and the optimal, but uncomputable,
MAP decoder and analyse their asymptotic error probabilities.
In Section \ref{sec:decision} we analyse the quantization step, which is necessary for a binary-input
outer decoder.
In Section \ref{sec:outer} we formulate the outer channel and give converse and achievability results.
In Section \ref{sec:conc} we analyse the concatenated code.
In Section \ref{sec:pa} we give an algorithm to optimize the power allocation and
in Section \ref{sec:practical} we introduce a low-complexity approximation of the suggested 
AMP algorithm.
In Section \ref{sec:sims} we give finite-length simulations and compare them to the analytical results.

\section{Channel Model}
\label{sec:model}
Let $K_a$ denote the number of active users,
$n$ the number of available channel uses and
$B = nR$ the size of a message in bits.
The spectral efficiency is given by $\mu = K_aR$.
The channel model used is
\beq
\yv = \sum_{i=1}^{K_\text{tot}} q_i \xv_i + \zv, 
\label{eq:basic_channel}
\eeq
where each $\xv_i \in \mathcal{C} \subset \RR^n$
is taken from a common codebook $\mathcal{C}$ and $q_i\in\{0,1\}$ are binary variables
indicating whether a user is active.
The number of active users is denoted as $K_a = \sum_{i=1}^{K_\text{tot}}q_i$.  
The codewords are assumed to be normalized as
$\|\xv_i\|_2^2 = nP$ for a given energy-per-symbol $P$, and the noise vector $\zv$ is Gaussian iid
$z_i\sim\mathcal{N}(0,N_0/2)$,
such that $\SNR = 2P/N_0$ denotes the per-user $\SNR$.
All the active users pick one of the $2^B$ codewords from $\mathcal{C}$,
based on their message $W_k\in[1:2^B]$.
The decoder of the system produces a list $g(\yv)$ of at most $K_a$ messages.
An error is declared if one of the transmitted messages is missing in the output list $g(\yv)$
and we define the per-user probability of error as:
\beq
P_e = \frac{1}{K_a}\sum_{k=1}^{K_a} \PP(W_k \notin g(\yv)).
\eeq
Note that the error
is independent of the user identities in general and especially independent
of the inactive users.
The performance of the system is measured in terms of the required
$E_b/N_0 := P/(RN_0)$ for a target $P_e$ and the described coding construction is called \emph{reliable}
if $P_e \to 0$ as $n\to\infty$.
\section{Concatenated Coding}
\label{sec:coding}
In this work we focus on a special type of codebook,
where each transmitted codeword is created in the following way:
First, the $B$-bit message $W_k$ of user $k$ is mapped to an $LJ$-bit codeword from some common \emph{outer}
codebook. Then each of the $J$-bit sub-sequences is mapped to an index $i_k(l) \in [1:2^J]$
for $l=[1:L]$ and $k=[1:K_a]$. The inner codebook is based on a set of $L$
coding matrices $\Am_l\in\RR^{n\times 2^J}$.
Let $\av^{(l)}_i$ with
$i=[1:2^J]$ denote the columns of $\Am_l$. The inner codeword of user $k$
corresponding to the sequence of indices $i_k(1),...,i_k(L)$ is then created as
\beq
\xv_k = \sum_{l=1}^L \sqrt{P_l}\av^{(l)}_{i_k(l)}.
\label{eq:inner_encoding}
\eeq
The columns $\av^{(l)}_i$ of $\Am_l$ are assumed to be zero mean and
scaled such that $\|\av^{(l)}_i\|_2^2 = 1$
and the power coefficients $P_l$ are chosen such that $\EE[\|\xv_k\|_2^2] \leq nP$ where
the expectation is taken over all choices of indices $(i_k(1),...,i_k(L))$.
The above encoding model can be written in matrix form as
\beq
\yv = \sum_{k=1}^{K_a} \Am \mv_k + \zv = \Am \left(\sum_{k=1}^{K_a} \mv_k\right) + \zv.
\label{eq:inner_channel}
\eeq
where $\Am = (\Am_1|...|\Am_L)$ and $\mv_k \in \RR^{L2^J}$ is a non-negative
vector satisfying
$m_{k,(l-1)2^J + i_k(l)} = \sqrt{P_l}$ and zero otherwise, for all $l=[1:L]$.
Let $\thetav = \sum_{k=1}^{K_a} \mv_k$ and let $\sv$ denote the support of $\thetav$ with multiplicity,
such that $\thetav = \left(\sqrt{P_1}\sv^1|...|\sqrt{P_L}\sv^L\right)^\top$.
That is, the components of $\sv^l$
indicate how many active users have chosen a specific column of $\Am_l$.
We refer to $\sqrt{P_l}\sv^l$ as the $l$-th \emph{section} of $\thetav$.
The linear structure allows to write the channel \eqref{eq:inner_channel} as a concatenation of the
inner point-to-point channel $\thetav\to \Am\thetav + \zv$ and the
outer binary input adder MAC $(\mv_1,...,\mv_{K_a}) \to \sv$.
We will refer to those as the
\emph{inner} and \emph{outer channel}, the corresponding encoder and decoder
will be referred to as \emph{inner} and \emph{outer encoder/decoder} and
the aggregated system of inner and outer encoder/decoder as the \emph{concatenated system}.
The per-user inner rate in terms of bits per channel use (c.u.) is given by
$R_\text{in} := LJ/n$ and the outer rate is given by $R_\text{out} = B/LJ$.
For the sake of the analysis we shall consider a random ensemble of codes where
the matrices $\Am_l$ are generated with i.i.d. Gaussian components
$\sim\mathcal{N}(0,1/n)$ and the outer encoded indices $i_k(l)$
are distributed uniformly and independently over $[1:2^J]$.
Furthermore, we assume that the
power coefficients are uniformly $P_l \equiv nP/L$, such that the power constraint 
$\EE[\|\xv_k\|_2^2] \leq nP$ is fulfilled on average over the code ensemble.
The uniform power allocation will be relaxed
in Section \ref{sec:pa}, where we will consider a non-uniform power allocation.
\section{Inner Channel}
\label{sec:inner}
In this section we focus on the inner decoding problem of recovering $\sv$ from
\beq
\yv = \Am \thetav + \zv = \sqrt{\hat{P}}\Am \sv + \zv
    \label{eq:yAs_plus_z}
\eeq
where $\hat{P} = nP/L$.
Let $k_i \in [0:K_a]$ for $i \in [1:2^J]$
be non-negative integers.
The probability of observing a specific $\sv^l$ is given by:
\beq
p\left(\sv^l = (k_1,...,k_{2^J})^\top\right) = 2^{-K_aJ}\frac{K_a!}{k_1!\cdots k_{2^J}!}
\label{eq:multinomial}
\eeq 
if $\sum_{i=1}^{2^J} k_i = K_a$ and zero otherwise.
This is a multinomial distribution with uniform event probabilities. The marginals of such a distribution are known to
be Binomial, i.e.:
\beq
p_k := \PP(s^l_i=k) = {K_a\choose k}2^{-kJ}(1-2^{-J})^{K_a-k}
\label{eq:binomial}
\eeq
and specifically, the probability of observing a zero is:
\beq
p_0 := \PP(s^l_i = 0) = (1-2^{-J})^{K_a}.
\label{eq:p0}
\eeq
We define two estimators for $\sv$. The first is a variant of the approximate message passing (AMP) algorithm,
which we will refer to as AMP-estimator.
An estimate of $\sv$ is obtained by iterating the following equations:
\beq
\begin{split}
    \thetav^{t+1} &= f_t(\Am^\top\zv^t + \thetav^t) \\
    \zv^{t+1}   &= \yv - \Am\thetav^{t+1} +
    \frac{2^JL}{n}\zv^{t}\langle f^\prime_t(\Am^\top\zv^t+\thetav^t)\rangle
\end{split}
\label{eq:amp}
\eeq
where the function $f_t:\RR^{2^JL}\to\RR^{2^JL}$ is defined componentwise
$f_t(\xv) = (f_{t,1}(x_1),...,f_{t,2^JL}(x_{2^JL}))^\top$ and each component is given by

\beq
f_{t,i}(x) = \frac{\sqrt{\hat{P}}}{Z(x)}\sum_{k=0}^{K_a}p_kk\exp\left(\frac{1}{2\tau_t^2}\left(x-k\sqrt{\hat{P}}\right)^2\right)
    \label{eq:eta_add}
\eeq
with $\tau^2_t = \|\zv^t\|_2^2/n$, $p_0$ as in \eqref{eq:p0} and
\beq
Z(x) = \sum_{k=0}^{K_a}p_k\exp\left(\frac{1}{2\tau_t^2}\left(x-k\sqrt{\hat{P}}\right)^2\right).
\eeq 
$\langle\xv\rangle = (\sum_{i=1}^N x_i)/N$ in \eqref{eq:amp} denotes the arithmetic mean of a vector, $f^\prime_t$
denotes the componentwise derivative of $f_t$ and
$\zv^0 = \mathbf{0}$ is chosen as the initial value.
After the equations \eqref{eq:amp}
are iterated for some fixed number of iterations $T_\text{max}$,
a final estimate of $\sv$ is obtained by quantizing $\thetav^{T_\text{max}}$ to the nearest
integer multiple of $\sqrt{\hat{P}}$ and dividing by $\sqrt{\hat{P}}$.
Note, that each of the functions $f_{t,i}$ in \eqref{eq:eta_add} is chosen as the posterior-mean estimator (PME)
of the component $\theta_i$ in a scalar Gaussian channel with noise variance $\tau_t^2$. 
This is justified by the remarkable property of the AMP algorithm
that the terms $\Am^\top\zv^t + \thetav^t$ are distributed approximately
like $\mathcal{N}(\thetav, \tau_t^2\Id)$, i.e. like the true signal in iid Gaussian noise \cite{Bay2011,Ber2020}.

The second estimator that we analyse is the symbol-by-symbol maximum-a-posteriori (SBS-MAP)
estimator of $\sv$
\beq
\hat{s}^l_i = \argmax_{s \in [0:K_a]} \PP(s^l_i = s|\yv,\Am),
\label{eq:sbs-map}
\eeq
which minimizes the SBS error probability $\PP(\hat{s}^l_i \neq s^l_i)$ but is infeasible
to compute in practice.
Let 
\beq
P_e^\text{in} = \frac{1}{L2^J}\sum_{l=1}^{L}\sum_{j=1}^{2^J} \PP(\hat{s}^l_j \neq s^l_j)
\eeq
denote the inner per-user SBS error rate for some symbol-wise estimator $\hat{s}^l_j$,
and let $P_e^\text{MAP}$ and $P_e^\text{AMP}$
denote the corresponding inner error rates
of the MAP and the AMP estimator respectively.
\subsection{Asymptotic Error Analysis}
\label{subsec:asymp}
\newcommand{\Ein}{\mathcal{E}_\text{in}}
\newcommand{\Eopt}{\mathcal{E}^\text{opt}_\text{in}}
\newcommand{\Ealg}{\mathcal{E}^\text{alg}_\text{in}}
The error analysis is based on the
self-averaging property of the random linear recovery problem
\eqref{eq:yAs_plus_z} in the asymptotic limit $L,n\to\infty$ with
a fixed $J$ and fixed $R_\text{in}$. That is, although $\Am, \sv$ and $\zv$ are random
variables, the error probability of both mentioned estimators converges sharply
to its average value. The convergence behavior is fully characterized by the external parameters
$J, R_\text{in}, \SNR$ and $K_a$.
It is known that the asymptotic estimation error of the AMP algorithm can be
analysed by the so called \emph{state evolution} (SE) equations \cite{Bay2011}. 
\begin{theorem}
    \label{thm:se_amp}
    In the limit $n,L \to \infty$ for fixed $J$ and fixed $R_\text{in}$ the mean-square-error (MSE)
    of the AMP estimate \eqref{eq:amp} converges to the MSE
    of estimating $s$ in the scalar Gaussian channel
    \beq
        r = \sqrt{\eta \hat{P}} s + z
        \label{eq:decoupled_scalar}
    \eeq 
    where $s$ is distributed according to the binomial distribution $p(s=k)$
    specified in \eqref{eq:binomial}, i.e. the \emph{marginal}
    distribution of a single section of $\sv$,
    and $z\sim\mathcal{N}(0,1)$ is independent
    of $s$. 
    The factor $\eta\geq0$ is given as the smallest non-negative
    solution of 
    \beq
        \frac{\mathrm{d}}{\mathrm{d}\eta} i^\text{RS}_\text{AMP}(\eta) = 0
    \eeq
    where $i^\text{RS}_\text{AMP}(\eta)$ is given by
    \beq
    i^\text{RS}_\text{AMP}(\eta) = 2^JI(\eta\hat{P}) + \frac{2^J}{2\beta}[(\eta-1)\log_2(e)-\log_2(\eta)]
        \label{eq:RS-scalar-potential}
    \eeq 
    $I(\eta\hat{P})$ denotes the mutual information between $r$ and $s$
    in the scalar Gaussian channel \eqref{eq:decoupled_scalar}. 
    $\hfill\square$
\end{theorem}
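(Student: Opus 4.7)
The plan is to invoke a state evolution (SE) theorem for AMP with a separable (component-wise) denoiser acting on a signal with block-iid structure, and then match the fixed points of the resulting scalar SE recursion with the stationary points of $i^\text{RS}_\text{AMP}(\eta)$. The crucial distinction from Theorem~\ref{thm:se_opt} is that the denoiser $f_t$ in \eqref{eq:eta_add} is built from the \emph{marginal} binomial \eqref{eq:binomial}, so each coordinate is effectively denoised independently even though the true joint within a section is multinomial; this is what turns the vector potential $i^\text{RS}$ into the scalar potential $i^\text{RS}_\text{AMP}$.

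First I would check that the iteration \eqref{eq:amp} is in the standard form that SE theorems require: the Onsager correction uses the factor $2^JL/n=\beta$, which is the correct aspect ratio of $\Am$, and $f_t$ is the Bayes-optimal (hence Lipschitz, therefore pseudo-Lipschitz of order two) estimator for a binomial prior corrupted by $\mathcal{N}(0,\tau_t^2)$ noise. Invoking the SE result (in the block-iid form covered in \cite{Bar2019}) then yields that the ``effective observation'' $\Am^\top\zv^t+\thetav^t$ is asymptotically distributed as $\thetav+\tau_t\wv$ with $\wv$ iid $\mathcal N(0,1)$ independent of $\thetav$, and $\tau_t^2$ obeys the scalar recursion
\[
\tau_{t+1}^2 \;=\; \sigma^2 \;+\; \frac{1}{\beta}\,\mathrm{mmse}_\text{bin}\!\left(\hat P/\tau_t^2\right),
\]
where $\mathrm{mmse}_\text{bin}(\gamma)$ is the MMSE of estimating $s$ from $\sqrt{\gamma}\,s+z$ under the marginal prior \eqref{eq:binomial}. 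This is exactly the scalar channel \eqref{eq:decoupled_scalar} after the identification $\eta=\sigma^2/\tau_t^2$.

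Next I would relate fixed points of this recursion to stationary points of $i^\text{RS}_\text{AMP}$ by differentiating \eqref{eq:RS-scalar-potential} and applying the I-MMSE identity $\partial I(\gamma)/\partial \gamma=\tfrac{1}{2}\,\mathrm{mmse}_\text{bin}(\gamma)$: setting $\partial i^\text{RS}_\text{AMP}/\partial \eta=0$ reproduces the SE fixed-point equation up to reparametrization. The $2^J$ multiplier in front of $I(\eta\hat P)$ in \eqref{eq:RS-scalar-potential} arises because the separable denoiser treats each of the $2^J$ coordinates of a section as an independent scalar channel (hence $2^J I$ replaces the vector quantity $I_{2^J}$ appearing in \eqref{eq:RS-vector-potential}), while the Gaussian-comparison term $(2^J/2\beta)[(\eta-1)\log_2 e-\log_2\eta]$ is common to both potentials.

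Finally, since AMP is initialized at $\thetav^0=\mathbf 0$, the sequence $\tau_t^2$ is monotone non-increasing (a standard consequence of SE with a Bayes-optimal separable denoiser on a non-negative prior) and therefore converges to the largest fixed point accessible from $\tau_0^2=\lim_{n}\|\yv\|_2^2/n$; in $\eta$-coordinates this is precisely the \emph{smallest local} minimizer of $i^\text{RS}_\text{AMP}$, which explains the algorithmic threshold phenomenon. The hard step is justifying that the scalar SE applies even though the coordinates within a section $\sv^l$ are dependent through the multinomial constraint $\sum_i s^l_i=K_a$: the argument is that a separable pseudo-Lipschitz denoiser only senses the empirical one-dimensional marginal of the input, which by the symmetry of the multinomial in its arguments and the law of large numbers over the $L2^J$ coordinates of $\thetav$ coincides with the binomial law \eqref{eq:binomial}. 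Thus the within-section correlations are invisible to this particular algorithm, and the scalar potential gives the correct asymptotic description.
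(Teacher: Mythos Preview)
Your approach is essentially the paper's own: invoke the scalar SE result of \cite{Bay2011}, differentiate \eqref{eq:RS-scalar-potential}, and use the I-MMSE identity to identify the SE fixed point $\eta^{-1}=1+\beta\hat P\,\mathrm{mmse}(\eta\hat P)$ with a stationary point of the potential, then argue that the separable denoiser only ``sees'' the empirical marginal so the within-section multinomial dependence is irrelevant. Two small corrections: (i) your displayed SE recursion has the wrong prefactor --- it should read $\tau_{t+1}^2=\sigma^2+\beta\hat P\,\mathrm{mmse}_\text{bin}(\hat P/\tau_t^2)$, not $\sigma^2+\tfrac{1}{\beta}\,\mathrm{mmse}_\text{bin}(\cdot)$, since $\beta=2^JL/n$ is the column-to-row ratio and the $\hat P$ comes from $\thetav=\sqrt{\hat P}\sv$; (ii) the right reference for justifying the scalar SE despite the dependent coordinates is \cite{Bay2011} (which only requires convergence of the empirical marginal), not the block-iid framework of \cite{Bar2019} --- the latter would deliver the \emph{vector} SE underlying Theorem~\ref{thm:se_opt}, whereas here the whole point is that a separable denoiser collapses everything to the scalar channel.
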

\begin{proof}
    The theorem is merely a restatement of the SE result in \cite{Bay2011} which states that the MSE
    of $\thetav^t$ can be described asymptotically by a scalar Gaussian channel where the effective noise
    variance follows the recursion
    \beq
    \tau^2_{t+1} = 1 + \beta\hat{P}\EE[(s - f_t(\hat{P}s+\tau_t^2 z))^2]
    \eeq 
    with $\tau_0^2 = \|\yv\|_2^2/n$ and $s,z$ jointly distributed as in \eqref{eq:decoupled_scalar}.
    It is well known that in general the minimum-mean-square error (MMSE) can be
    achieved by the PME $\EE[s|y]$.
    Since $f_t$
    was chosen precisely as the PME of $\sqrt{\hat{P}}s$ in a scalar Gaussian channel of the form
    \eqref{eq:decoupled_scalar}
    with $\eta = 1/\tau_t^2$,
    it holds that
    \beq
    \EE[(s - f_t(\hat{P}s+\tau_t^2 z))^2] = \text{mmse}(\eta\hat{P}),
    \eeq
    where we introduce the MMSE function
    $\text{mmse}(\eta\hat{P}) = \EE[(s - \EE[s|y])^2]$ in a Gaussian channel \eqref{eq:decoupled_scalar}.
    Since the recursion starts at a high noise (low $\eta$) point and because
    $\text{mmse}(\eta\hat{P})$ is monotonically decreasing in $\eta$ the point of convergence is given by
    the smallest $\eta$ for which
    \beq
        \eta^{-1} = 1 + \beta\hat{P}\text{mmse}(\eta\hat{P})
        \label{eq:SE}
    \eeq 
    holds. 
    Setting the derivative of \eqref{eq:RS-scalar-potential} with respect to $\eta$ to zero
    shows that \eqref{eq:SE} is precisely the condition for a stationary point.
    This can be seen using the I-MMSE theorem \cite{Guo2005a}, which states that in a Gaussian additive
    noise channel
    it holds that:
    \beq
    \frac{1}{\log_2(e)}\frac{\mathrm{d}}{\mathrm{d}\eta}I(\eta\hat{P})
     = \frac{\hat{P}}{2}\text{mmse}(\eta\hat{P})
    \eeq
\end{proof}
The mutual information in \eqref{eq:RS-scalar-potential} is bounded by the entropy of the input distribution
for $\eta\to 0$ and so $\lim_{\eta\to 0}i^{RS}(\eta) = \infty$. Furthermore, \eqref{eq:RS-scalar-potential}
is continuously differentiable and
therefore the smallest stationary point, i.e. the smallest $\eta$ for which $\mathrm{d} i^\text{RS}(\eta)/\mathrm{d} \eta = 0$, is necessarily
either a local minimum or a saddle point but never a local maximum. 
Note that although the assumed distribution on $\sv$ is not iid the AMP algorithm in
\eqref{eq:amp} uses a separable denoiser. In this case the SE result of \cite{Bay2011} does
not require $\sv$ to be iid, but only that its empirical marginal distributions converge to some limit,
which is required
for the calculation of the SE. In fact, the presented AMP algorithm \eqref{eq:amp} is not optimal
since it does not make full use of the distribution of $\sv$, i.e. it ignores the correlation
among different components of $\sv$ within a section. The optimal AMP algorithm would use
the vector PME of $\sv$ in the following Gaussian vector channel as a denoiser:
\beq
\rv^l = \sqrt{\eta \hat{P}}\sv^l + \zv^l
\label{eq:decoupled}
\eeq 
where $\sv^l\in\RR^{2^J}$ is distributed according to the vector distribution $p(\sv^l)$,
specified in \eqref{eq:multinomial}, the distribution of a single section of $\sv$,
and $\zv^l\in\RR^{2^J}$ is Gaussian iid with $\mathcal{N}(0,1)$ components, independent
of $\sv^l$. 
The vector PME can be expressed as follows
\beq
f_{t,l}^\text{opt} = \frac{1}{Z(\rv^l)}\sum_{\sv^l} \sv^l \exp\left(-\frac{1}{2}\left\|\rv^l - \sqrt{\eta\hat{P}}\sv^l\right\|^2_2\right)p(\sv^l),
\label{eq:vector_pme}
\eeq 
where $\sum_{\sv^l}$ denotes the sum over all possible configurations within one section and $Z(\rv^l)$ is a normalisation
factor.
Note, that the number of possible configurations of $\sv_l$, and therefore the number of terms in the sum
$\sum_{\sv^l}$, grows exponentially in
$J$ and $K_a$ which makes the direct computation of the vector PME infeasible.
Nonetheless, according to \cite{Ber2020}, the MSE of this infeasible version of AMP can be tracked in the
same way as in Theorem \ref{thm:se_amp} but with the vector channel \eqref{eq:decoupled}
instead of the scalar channel and
$\eta\geq0$ is given as the smallest stationary point of the potential
\beq
i^\text{RS}(\eta) = I_{2^J}(\eta\hat{P}) + \frac{2^J}{2\beta}[(\eta-1)\log_2(e)-\log_2(\eta)]
\label{eq:RS-vector-potential}
\eeq 
where $I_{2^J}(\eta\hat{P})$ denotes the mutual information between $\rv^l$ and $\sv^l$
in the Gaussian \emph{vector} channel \eqref{eq:decoupled} and $\beta = 2^JL/n = R_\text{in}2^J/J$
is the aspect ratio of the matrix $\Am$ in \eqref{eq:yAs_plus_z}. Note, that for $J=0$, i.e. blocks of size one, \eqref{eq:RS-vector-potential}
coincides \eqref{eq:RS-scalar-potential}.

An integral part of the proof of the SE equations in \cite{Bay2011,Ber2020} is that 
the intermediate terms $\Am^\top\zv^t + \thetav^t$, appearing in \eqref{eq:amp}, are indeed asymptotically distributed 
as in the decoupled Gaussian channel \eqref{eq:decoupled}.
Therefore, the error statistics of any estimator that is applied
to $\Am^\top\zv^\infty + \thetav^\infty$ after convergence ($t\to\infty$) can be analysed by studying the
same estimator on the Gaussian channel \eqref{eq:decoupled} at the appropriate channel strength $\eta\hat{P}$. 

A crucial point is that the asymptotic AMP performance is given by the smallest stationary point of
the potential function. In particular, the MSE of AMP with the optimal vector PME \eqref{eq:vector_pme} is described
by the smallest stationary point of 
\eqref{eq:RS-vector-potential}, while the MSE of AMP with the scalar PME \eqref{eq:eta_add} is described by 
the smallest stationary point of \eqref{eq:RS-scalar-potential}.
Nonetheless, the potential functions also contains information about the performance of
the optimal MMSE estimate $\hat{\sv} = \EE[\sv|\Am,\yv]$ \cite{Bar2020a}.
Specifically, the asymptotic MMSE is given by the MSE of the PME of $\sv$ in the Gaussian vector channel
\eqref{eq:decoupled}, similar to the analysis of AMP with the vector PME \eqref{eq:vector_pme},
but the effective channel strength $\eta$
is now the \emph{global} minimizer of \eqref{eq:RS-vector-potential}. 
We illustrate the used terminology in \figref{fig:potential_viz}. Note, that
it is possible that the smallest stationary point and the global minimum coincide. In that
case the AMP estimate coincides with the MMSE estimate.
\begin{figure}
    \centering
    \includegraphics[width=0.8\linewidth]{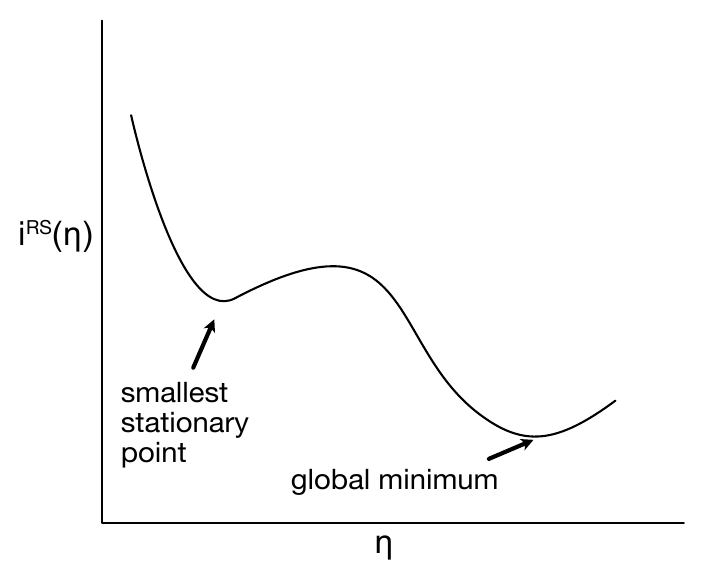}
    \caption{Illustration of the RS-potential
    }
    \label{fig:potential_viz}
\end{figure}
For $\sv$ with iid components, i.e. the special case $J=0$ where \eqref{eq:RS-vector-potential} reduces to \eqref{eq:RS-scalar-potential}, this connection was first discovered, using the non-rigorous replica method,
in \cite{Tan2002} for binary iid signals $\sv$ and generalized to arbitrary iid $\sv$ in
\cite{Guo2005c,Ran2011}. For this reason,
the potential function \eqref{eq:RS-scalar-potential} was termed the \emph{replica symmetric} (RS) potential.
This heuristic discovery was later confirmed rigorously in
\cite{Ree2016,Bar2019}. The proof in \cite[Ch. 4]{Bar2019}
is based on an adaptive interpolation approach and includes the case of block iid signals
which are shown to be described by the global minimizer of \eqref{eq:RS-vector-potential}.

Even in the iid signal case, the original results obtained by the replica method were slightly stronger then the rigorized
versions in \cite{Ree2016, Bar2019, Bar2020a}.
Specifically, they included the so called \emph{decoupling property} \cite{Tan2002,Guo2005c,Guo2009},
which informally states that, in the
case of an iid prior,
the conditional posterior distribution of $\sv$ in the model \eqref{eq:yAs_plus_z}
indeed behaves like the conditional posterior distribution induced by the scalar Gaussian
channel \eqref{eq:decoupled_scalar}.
Motivated by the similarity of the RS potentials in the iid and in the block
iid cases, respectively in \eqref{eq:RS-scalar-potential} and \eqref{eq:RS-vector-potential}, we conjecture that the decoupling
principle holds (subject to the validity of the replica-based analysis) also
for the  block iid case, when the length of the blocks $2^J$ is constant with $n$
and $L$ . This is summarized in the following
\begin{assumption}[Decoupling Property]
    \label{conj:decoupling}
    For an arbitrary section index $l$ let $\sv^l_p$ be a sample from the
    marginal posterior distribution $p_{\sv^l|\yv,\Am}$ where $\yv$ is created
    according to model \eqref{eq:yAs_plus_z} and $\sv^l_0$ being distributed according to
    the prior distribution of one section \eqref{eq:multinomial}.
    We say that the decoupling property holds in some specified limit
    if the joint distribution of $(\sv^l_0,\sv^l_p)$ converges to the joint
    distribution of $(\sv^l_0,\sv^l)$ defined as follows.
    $\sv^l_0$ is again distributed according to the prior
    of a single section while, conditional on $\sv^l_0$,
    $\sv^l$ is distributed according to the posterior distribution $p_{\sv^l|\rv^l}$
    evaluated at the output of a Gaussian vector channel 
    $\rv^l = \sqrt{\eta\hat{P}}\sv^l_0 + \zv^l$ with input $\sv^l_0$.
    The factor $\eta\geq0$ is given as the \emph{global} minimizer of
    \eqref{eq:RS-vector-potential}.
\end{assumption}
\begin{remark}
    The decoupling property implies that the error statistics of every estimator that is
    based purely on the marginal
    posterior distribution, including the SBS-MAP estimator, is asymptotically given by the error statistic of the same estimator but with the
    true posterior distribution replaced by a posterior distribution in a degraded
    Gaussian channel,
    where the degradation
    factor $\eta$ is given as the solution of the minimisation of \eqref{eq:RS-vector-potential}.
    Therefore, asymptotically, once $\eta$ has been determined,
    the error of the SBS-MAP estimator can be computed as the error of
    the estimator $\hat{s}_i = \argmax \PP (s^l_i = s| \rv^l)$ in the channel \eqref{eq:decoupled}.
    Claim \ref{conj:decoupling} is supported by several reasons:
    \begin{enumerate}
        \item It was proven in \cite[Ch. 4]{Bar2019}
            that the mutual information between $\sv$ and $\yv$ in \eqref{eq:yAs_plus_z} converges
            to the RS potential \eqref{eq:RS-vector-potential}.
            Through a limiting argument together
            with the I-MMSE theorem it is possible to conclude the convergence of the posterior-mean
            estimation error of $\sv$ from \eqref{eq:yAs_plus_z}
            to the MMSE in the Gaussian vector channel \cite[Corollary 7]{Bar2020a}.
        \item The replica-based proof of the scalar decoupling property in \cite{Guo2005c} consists of two steps.
            First it is shown that, subject to the validity of the replica analysis, the
            mutual information $I(\yv;\sv)$ in \eqref{eq:yAs_plus_z} converges to \eqref{eq:decoupled_scalar}.
            Then it is shown that all the moments of 
            the joint distribution of $(\sv^l_0,\sv^l)$ converge for an arbitrary index $l$
            to the joint distribution of input
            and posterior output in a degraded Gaussian channel.
            The second step is obtained by essentially \emph{repeating the 
            same calculation} as in the first step.
        \item In \cite{Tan2002} it was argued that the mutual information is equivalent
            to the cumulant generating function and as such it contains the information
            about all the moments of the
            input-output distribution.
            Therefore the convergence of the mutual information would imply the decoupling property. 
            Nonetheless, it is non-trivial to confirm such a claim in the given setting. 
        \item There is an strong connection between the SE of AMP and the replica method.
            As mentioned, the potential \eqref{eq:RS-vector-potential}
            also describes the performance of the optimal, computationally infeasible,
            AMP algorithm.
            The SE of AMP though has been proven rigorously \cite{Ber2020},
            even for the case of block iid input distributions,
            and it is an integral part of the proof of the SE that
            the intermediate terms $\Am^\top\zv + \thetav^t$, appearing in \eqref{eq:amp},
            are asymptotically distributed 
            as in the decoupled Gaussian channel \eqref{eq:decoupled}.
            The major role that SE plays in the rigorous proof of the
            RS-potential in \cite{Bar2020a} leads us to believe that the decoupling property should also
            translate.
    \end{enumerate}
    Nevertheless, the decoupling property, as stated here,
    is not supported by rigorous proofs in the current literature,
    Showing the decoupling property for block iid signals (wether via replica
    analysis or in a fully rigorous way) remains an interesting open problem
    for future work.
\end{remark}

In the remaining chapter we give a mathematical analysis of the vector-channel based potential 
\eqref{eq:RS-vector-potential} and the scalar-channel based potential
\eqref{eq:RS-scalar-potential} and show that the componentwise PME in \eqref{eq:eta_add} is the best
componentwise approximation of the vector PME and furthermore,
in the typical sparse setting, i.e. $K_a \ll 2^J$, the difference between the minima of the
two potential functions \eqref{eq:RS-vector-potential} and \eqref{eq:RS-scalar-potential}
is negligibly small and therefore the global minimizer of the potential \eqref{eq:RS-scalar-potential}
can be used to approximate the global minimizer of \eqref{eq:RS-vector-potential}.
This means that the error probability of both the presented
AMP estimator and, through the decoupling property, the SBS-MAP estimator can be characterized by the local and global minimizers
of the potential function \eqref{eq:RS-scalar-potential} based in the mutual information
in a scalar Gaussian channel.
The connection between \eqref{eq:RS-scalar-potential} and
\eqref{eq:RS-vector-potential} is made precise in the following theorem:
\begin{theorem}
    \label{thm:error_bound}
    Let $\eta_\text{opt}$ be the global minimizer of \eqref{eq:RS-vector-potential} and
    let $\tilde{\eta}_\text{opt}$ be the \emph{global} minimizer of \eqref{eq:RS-scalar-potential}
    then $\eta_\text{opt} > \tilde{\eta}_\text{opt}$ and
    \beq
    \eta_\text{opt} - \tilde{\eta}_\text{opt} = \mathcal{O}\left(\frac{R_\text{in}\log K_a}{\sqrt{J}}\right)
        \label{eq:eta_bound}
    \eeq 
    $\hfill\square$
\end{theorem}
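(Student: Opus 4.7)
The idea is to view the two RS potentials as a perturbation of each other, to bound the perturbation uniformly in $\eta$, and to translate the potential gap into a minimizer displacement. The potentials \eqref{eq:RS-vector-potential} and \eqref{eq:RS-scalar-potential} differ only in their mutual-information term; setting
\begin{equation*}
\Delta(\eta) := 2^J I(\eta \hat P) - I_{2^J}(\eta \hat P) = i^\text{RS}_\text{AMP}(\eta) - i^\text{RS}(\eta),
\end{equation*}
the identity $H(\rv^l) = I_{2^J}(\eta\hat P) + 2^J H(z)$ (valid because $\zv^l$ is iid across components) together with subadditivity $H(\rv^l) \leq \sum_i H(r_i)$ gives $\Delta(\eta) \geq 0$. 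So the scalar potential dominates the vector potential pointwise.

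Next I would bound $\Delta(\eta)$ uniformly. The joint law $p_{\rv^l}$ and the product law $\prod_i p_{r_i}$ are pushforwards of $p_{\sv^l}$ and $\prod_i p_{s_i}$, respectively, under the \emph{same} iid Gaussian convolution kernel. The data processing inequality for the KL divergence therefore yields
\begin{equation*}
\Delta(\eta) = D\bigl(p_{\rv^l}\,\|\,{\textstyle\prod_i} p_{r_i}\bigr) \leq D\bigl(p_{\sv^l}\,\|\,{\textstyle\prod_i} p_{s_i}\bigr) = T(\sv^l),
\end{equation*}
where $T(\sv^l) := \sum_i H(s_i) - H(\sv^l)$ is the total correlation of the multinomial input and, crucially, is independent of $\eta$.

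The heart of the argument is a small bound on $T(\sv^l)$. My plan is to Poissonize: Multinomial$(K_a,\mathrm{unif}_{2^J})$ is the conditional distribution of $2^J$ iid $\mathrm{Poi}(K_a/2^J)$ variables given that their sum equals $K_a$. Combined with the Poisson approximation to $\mathrm{Bin}(K_a,2^{-J})$ (accurate in the sparse regime $K_a\ll 2^J$) and Stirling estimates, one obtains $H(\sv^l) = 2^J H(\mathrm{Poi}(K_a/2^J)) - H(\mathrm{Poi}(K_a)) + o(1)$ and $2^J H(s_i) = 2^J H(\mathrm{Poi}(K_a/2^J)) + o(1)$, hence
\begin{equation*}
T(\sv^l) = H(\mathrm{Poi}(K_a)) + o(1) = \tfrac{1}{2}\log_2(2\pi e K_a) + O(1) = \mathcal{O}(\log K_a).
\end{equation*}
Equivalently, starting from the exact identity $H(\sv^l) = K_a J - \EE[\log_2(K_a!/\prod_i s_i!)]$ and comparing with $2^J H(\mathrm{Bin}(K_a,2^{-J}))$ gives the same order.

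The final step is to convert the uniform bound $\Delta(\eta) \leq T(\sv^l) = \mathcal{O}(\log K_a)$ into a displacement of the global minimizer. Both potentials are $C^2$ in $\eta$; at the scalar minimizer $\tilde\eta_\text{opt}$ the second derivative of $i^\text{RS}_\text{AMP}$ is strictly positive and, using the I-MMSE identity together with the explicit Binomial MMSE in the sparse regime, it can be shown to grow at least like a constant times $2^{\delta J}$ for any $\delta<1$ on a neighborhood of $\tilde\eta_\text{opt}$. A quadratic expansion then gives $|\eta_\text{opt} - \tilde\eta_\text{opt}| = \mathcal{O}(\log K_a/2^{(1-\delta)J})$, matching \eqref{eq:eta_bound}. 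The ordering $\eta_\text{opt} > \tilde\eta_\text{opt}$ comes from the fixed-point forms $\eta^{-1}-1 = \beta \hat P\cdot \mathrm{tr}(\mathrm{MMSE}(\eta\hat P))/2^J$ (vector) and $\tilde\eta^{-1}-1 = \beta\hat P\cdot \mathrm{mmse}(\tilde\eta\hat P)$ (scalar), together with the strict inequality $\mathrm{tr}(\mathrm{MMSE})/2^J < \mathrm{mmse}$: the vector posterior-mean estimator exploits $\sum_i s_i = K_a$ to produce a smaller residual variance, so the right-hand side of the vector fixed-point equation is smaller and $\eta_\text{opt}$ is forced to be larger. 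The main obstacle I anticipate is exactly this Hessian lower bound, because the scalar potential can admit several local minima (this is the very reason the algorithmic and optimal thresholds differ in Figures~\ref{fig:SE_alg}--\ref{fig:SE_opt}); ruling out jumps of the global minimum between basins under the perturbation is precisely what necessitates the slack $\delta<1$ rather than $\delta=1$.
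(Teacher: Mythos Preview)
Your first three steps are essentially the paper's argument in different clothing: the total correlation $T(\sv^l)=D(p_{\sv^l}\,\|\,\prod_i p_{s_i})$ is exactly the KL divergence the paper computes in Appendix~\ref{appendix:kl}, and your data-processing bound $\Delta(\eta)\le T(\sv^l)$ is the integrated form of the paper's Verd\'u identity \eqref{eq:verdu}. Indeed, $\Delta'(\eta)$ is a positive constant times $2^J\text{mmse}(\eta\hat P)-\text{mmse}_{2^J}(\eta\hat P)\ge 0$, so your uniform bound $\Delta(\eta)\le\Delta(\infty)=T(\sv^l)$ is equivalent to the $L^1$ bound $\|\text{mmse}-\text{mmse}_{2^J}/2^J\|_{L^1}=\mathcal O(\log K_a/2^J)$ in \eqref{eq:mmse_L1}. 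Up to here the two arguments coincide.

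The gap is in your last step. The second derivative of $i^\text{RS}_\text{AMP}$ is
\[
\partial_\eta^2 i^\text{RS}_\text{AMP}(\eta)=\frac{\log_2 e}{2}\Bigl[2^J\hat P^2\,\text{mmse}'(\eta\hat P)+\frac{2^J}{\beta\eta^2}\Bigr],
\]
and since $2^J/\beta=J/R_\text{in}$ while the first bracket is negative, the Hessian at the good minimizer $\tilde\eta_\text{opt}\approx 1$ is only $\mathcal O(J)$, not $c\,2^{\delta J}$. A quadratic-growth argument with a perturbation of size $\mathcal O(\log K_a)$ and curvature $\mathcal O(J)$ yields a displacement of order $\sqrt{\log K_a/J}$, which is far from \eqref{eq:eta_bound}; even if your curvature claim held, the quadratic expansion produces a $\sqrt{\cdot}$ dependence, not the linear one you assert. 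The paper does \emph{not} attempt to control the minimizer through the potential itself: it stays at the level of the derivatives (the MMSE functions), upgrades the $L^1$ bound on $\text{mmse}-\text{mmse}_{2^J}/2^J$ to a \emph{pointwise} bound $\mathcal O(2^{-(1-\delta)J})$ almost everywhere (Appendix~\ref{appendix:eta_bound}), and then perturbs the fixed-point equations \eqref{eq:SE} and \eqref{eq:SE_vector} directly. That is where the crucial $2^{-J}$ factor enters, and why the slack $\delta$ appears as an exponent loss in the $L^1\!\to\!$pointwise upgrade rather than from any curvature estimate. Your uniform bound on $\Delta$ contains the same $L^1$ information as the paper's, but to reach \eqref{eq:eta_bound} you must differentiate first and then localize, not the other way around.
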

\begin{proof}
Analogous to \eqref{eq:SE} the optimality condition for $\eta_\text{opt}$
can be found by setting the derivative of \eqref{eq:RS-vector-potential}
to zero and using the I-MMSE theorem for a Gaussian vector channel.
This gives the condition
\beq
\eta^{-1} = 1 + \beta\hat{P}\frac{\text{mmse}_{2^J}(\eta\hat{P})}{2^J}
\label{eq:SE_vector}
\eeq 
where $\text{mmse}_{2^J}(\eta\hat{P})$ is the MMSE of estimating $\sv^l$ in the Gaussian vector
channel \eqref{eq:decoupled}.
Let us introduce the mismatched MSE function. For an arbitrary probability distribution
$q: [0:K_a]^{2^J}\to [0,1]$ we define
\beq
\text{mse}_q(t) = \EE\|\sv-\hat{\sv}_q(\sqrt{t}\sv + \Zm,t)\|_2^2
\label{eq:mse_def}
\eeq
with
\beq
\hat{\sv}_q(\rv,t) =
\sum_{\sv \in [0:K_a]^{2^J}} \sv \frac{\exp(-\|\rv - \sqrt{t}\sv\|_2^2/2)q(\sv)}{\sum_{\sv'} \exp(-\|\rv - \sqrt{t}\sv'\|_2^2/2)q(\sv')}
\eeq
The expression in \eqref{eq:mse_def} is the MSE of a (mismatched) PME in a Gaussian vector channel of the form \eqref{eq:decoupled}
with respect to some prior distribution $q(\sv)$,
which may differ form the true prior $p_\sv(\sv)$.
It is clear that
from the minimality of the MMSE function that $\text{mmse}_{2^J}(t)\leq\text{mse}_{q}(t)$
for all $t$ with equality if $q(\sv) = p_\sv(\sv)$.
This means, that calculating the fixed-point of \eqref{eq:SE_vector},
with $\text{mmse}_{2^J}(t)$ replaced by $\text{mse}_q(t)$ for any choice of $q(\sv)$ gives an
upper bound on $\eta_\text{opt}$.
The $L^1$-distance between the functions $\text{mmse}_{2^J}(t)$ and
$\text{mse}_q(t)$ is quantified
by the following result from \cite{Ver2010}
\beq
\begin{split}
&\frac{1}{2}\|\text{mse}_q - \text{mmse}_{2^J}\|_{L^1}\\
&= \frac{1}{2}\int_0^\infty[\text{mse}_q(t) - \text{mmse}_{2^J}(t)]\mathrm{d}t = D(p_\sv\ \|\ q).
    \label{eq:verdu}
\end{split}
\eeq
where $D(p_\sv\ \|\ q)$ denotes the KL-divergence between the distributions $p_\sv$ and $q$.
We focus on product distributions of the form $q(\sv) = \prod_{i=1}^{2^J}q_i(s_i)$, since
for such distributions the vector MSE function \eqref{eq:mse_def} becomes the sum of
scalar MSE functions, which are easy to calculate.
Moreover, a simple calculation in Appendix \ref{appendix:product} shows that the $L^1$
distance \eqref{eq:verdu} is minimized by
the product distribution whose factors $q_i$ match the marginals of $p_\sv$.
We prove in Appendix \ref{appendix:kl} that the KL-divergence between the multinomial
distribution $p_\sv$ and the product distribution of its marginals satisfies
\beq
D\left(p_\sv\ \middle\|\ \prod_{i=1}^{2^J}p_i(s_i)\right) = \mathcal{O}(\log K_a)
\label{eq:kl}
\eeq
By substituting $\eta\hat{P} = t$ we get from \eqref{eq:verdu}:
\beq
\begin{split}
    &\int_0^1 \text{mse}_q(\eta\hat{P}) - \text{mmse}_{2^J}(\eta\hat{P})\mathrm{d}\eta \\
    &= \frac{1}{\hat{P}}\int_0^{\hat{P}}\text{mse}_q(t) - \text{mmse}_{2^J}(t)\mathrm{d}t \\
    &< \frac{D\left(p_\sv\ \middle\|\ \prod_{i=1}^{2^J}p_i(s_i)\right)}{\hat{P}}
\end{split}
\eeq 
Furthermore, since all marginals of $p_\sv$ are identical and given by the binomial distribution, the mismatched
MSE function of $q(\sv) = \prod p_i(s_i)$ takes the form
\beq
\text{mse}_q(t) = 2^J \text{mmse}(t),
\eeq 
where $\text{mmse}(t)$ is the MMSE function in the scalar Gaussian channel \eqref{eq:decoupled_scalar}
that appears in \eqref{eq:SE}. So it follows from \eqref{eq:verdu} and \eqref{eq:kl} that
as a function of $\eta \in [0,1]$
\beq
\left\|\text{mmse}(\cdot\hat{P}) - \frac{\text{mmse}_{2^J}(\cdot\hat{P})}{2^J}\right\|_{L^1} 
= \mathcal{O}\left(\frac{\log K_a}{2^J\hat{P}}\right)
    \label{eq:mmse_L1}
\eeq
Therefore, with $J\to\infty$, the difference between the per-component vector MMSE
function in \eqref{eq:SE_vector} and the scalar MMSE function in \eqref{eq:SE} 
converges exponentially fast in $L^1$ norm to zero.
We show in Appendix \ref{appendix:eta_bound} that exponentially
fast convergence in $L^1$-norm implies exponentially fast pointwise 
convergence everywhere except on a set of arbitrary small measure.

The right-hand sides (RHSs) of \eqref{eq:SE_vector} and \eqref{eq:SE} are multiplied by $\beta\hat{P}$.
With the scaling conditions in Theorem \ref{thm:se_amp} $\beta = R_\text{in}2^J/J$. Therefore, the difference
between the RHS of \eqref{eq:SE_vector} and \eqref{eq:SE} converges pointwise to zero with
rate $\mathcal{O}(R_\text{in}\log K_a/(\delta\sqrt{J}))$ everywhere except on a set of measure $\mathcal{O}(\delta J^{-1/2})$.
Since MMSE functions of Gaussian channels are non-increasing, by a well known theorem in real analysis,
their pointwise convergence implies uniform convergence in all continuity points of the limit functions.
Later, we will show that the common limit function of the mmse functions is continuous on two disjoint intervals
divided by a single discontinuity point.
Therefore, since uniform convergence preserves continuity, the pointwise convergence holds for all $\eta$ except for a single point.
We have established that the difference of the right hand sides of the stationary point equations
\eqref{eq:SE} and \eqref{eq:SE_vector} converges pointwise to zero.
Due to the smoothness of the MMSE functions \cite[Proposition 7]{Guo2011} the convergence carries over to the solutions
of the \eqref{eq:SE} and \eqref{eq:SE_vector} if $\eta_\text{opt}$ and $\tilde{\eta}_\text{opt}$
are away from the discontinuity point of the limiting function, which will follow later from Theorem \ref{thm:limit}.
This concludes the proof of 
\eqref{eq:eta_bound}.
\end{proof}
Theorem \ref{thm:error_bound} allows to calculate
the minima of the potential \eqref{eq:RS-vector-potential} by solving the scalar fixed point equation
\eqref{eq:SE}
numerically. It guarantees that the error will be small, since typically $2^J$
is much larger than $K_a$. Nonetheless, the MMSE function in equation \eqref{eq:SE}
can be further simplified if $J$ grows
large. That is because the coefficients $p_k$ defined in \eqref{eq:binomial},
which can be expressed as
\beq
p_k = p_0\frac{{K_a\choose k}}{(2^J-1)^k}
\eeq 
decay exponentially with $kJ$. This suggests that for large $J$ we can drop all $p_k$
with $k\geq 2$:
\begin{theorem}
    \label{thm:or_mmse}
    Let $\text{mmse}_\text{OR}(t)$ be the MMSE function of estimating the binary variable
    $s_\text{OR}\in \{0,1\}$ in the scalar Gaussian channel
    \beq
        r = \sqrt{t}s_\text{OR} + z
        \label{eq:decoupled_or}
    \eeq 
    where $p(s_\text{OR}=0) = p_0$ and $p(s_\text{OR} = 1) = 1-p_0$ and $z\sim\mathcal{N}(0,1)$
    independent of $s_\text{OR}$. Then $\text{mmse}_\text{OR}(t) \geq \text{mmse}(t)$
    for all $t> 0$ and 
    \beq
    \text{mmse}_\text{OR}(t) - \text{mmse}(t) =
    \mathcal{O}\left(\frac{K_a^2}{2^{2J}}\right)
    \label{eq:diff_mmse_or}
    \eeq 
    $\hfill\square$
\end{theorem}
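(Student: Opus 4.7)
The plan is to establish the quantitative bound by comparing explicit posterior-mean representations of the two MMSE functions, and to address the sign of the inequality by recognizing $\text{mmse}_\text{OR}(t)$ as (or as a close relative of) a mismatched MSE whose excess over $\text{mmse}(t)$ is controlled by a KL-type quantity.

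For the magnitude bound, I would write both MMSE functions in closed form. For the binomial channel, $\text{mmse}(t) = \EE[s^2] - \int g(r)^2/h(r)\,dr$, where $g(r) = \sum_{k=0}^{K_a} k\, p_k\, \phi(r - k\sqrt{t})$ and $h(r) = \sum_{k=0}^{K_a} p_k\, \phi(r - k\sqrt{t})$, and an analogous expression holds for $\text{mmse}_\text{OR}(t)$ with the sums restricted to $k\in\{0,1\}$ and the $k=1$ coefficient replaced by $1-p_0$. The three pieces of error to control are the second-moment discrepancy $|\EE[s^2] - (1-p_0)| = |\sum_{k\geq 2}(k^2-1)p_k|$, and the $L^1$ differences $\int |g-g_\text{OR}|\,dr$ and $\int |h-h_\text{OR}|\,dr$, each bounded by $\sum_{k\geq 2} k\, p_k$ together with $|p_1-(1-p_0)| = \sum_{k\geq 2} p_k$. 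A Taylor expansion in $K_a/2^J$ shows all three are $\mathcal{O}(K_a^2/2^{2J})$. Combined with the Lipschitz continuity of the functional $(g,h)\mapsto \int g^2/h\,dr$ on the relevant function class (the denominator is uniformly bounded below by $p_0\,\phi(r)$), this yields $|\text{mmse}(t) - \text{mmse}_\text{OR}(t)| = \mathcal{O}(K_a^2/2^{2J})$ uniformly in $t > 0$.

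For the direction of the inequality, the natural route is to interpret $\text{mmse}_\text{OR}(t)$ as a mismatched MSE, namely the MSE of estimating the binomial $s$ from $r = \sqrt{t}\,s + z$ using the posterior-mean estimator built from the Bernoulli prior $\tilde{p}$ supported on $\{0,1\}$. The universal optimality of the true posterior mean then yields $\text{mmse}_\text{OR}(t) \geq \text{mmse}(t)$ immediately, and the Verd\'u identity already invoked in the proof of Theorem \ref{thm:error_bound} bounds the integrated gap by $2 D(p\|\tilde{p})$. Since $\tilde{p}$ assigns zero mass to $\{2,\dots,K_a\}$, I would regularize with $\tilde{p}_\varepsilon = (1-\varepsilon)\tilde{p} + \varepsilon p$, pass to the limit $\varepsilon \to 0$, and read off a leading KL contribution of order $K_a^2/2^{2J}$, then transfer this integrated estimate to a pointwise-in-$t$ bound via smoothness of the MMSE functional as in the proof of Theorem \ref{thm:error_bound}.

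The principal obstacle is reconciling this mismatched-MSE interpretation with the theorem's literal definition of $\text{mmse}_\text{OR}(t)$ as the MMSE of the binary variable $s_\text{OR}$ in its own scalar Gaussian channel. At $t = 0$ the latter reduces to $\mathrm{Var}(s_\text{OR}) = p_0(1-p_0)$ while $\text{mmse}(0) = \mathrm{Var}(s) = K_a p(1-p)$, and their difference $K_a(K_a-1)p^2 > 0$ for $K_a \geq 2$ points to the opposite sign. Closing the loop therefore requires either an equivalence argument showing that the two definitions agree to within $\mathcal{O}(K_a^2/2^{2J})$ in the sparse regime, or a direct comparison of the explained variances for each $t > 0$ exploiting the additional structure of the multi-valued binomial channel; this is the step where I would expect the technical work to concentrate.
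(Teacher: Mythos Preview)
Your mismatched-MSE interpretation is exactly how the paper proceeds: in its proof, $\text{mmse}_\text{OR}(t)$ is taken to be the MSE of the Bernoulli-prior posterior-mean estimator $f^\text{OR}$ applied to the \emph{binomial} signal $s$, so that $\text{mmse}_\text{OR}(t) \geq \text{mmse}(t)$ is immediate from optimality of the true posterior mean. The discrepancy you flag between this and the theorem's literal wording (the true MMSE in the Bernoulli channel) is a genuine inconsistency in the paper's statement, not a defect in your reasoning; the paper simply proves the result for the mismatched quantity, which is in any case what feeds into the subsequent fixed-point comparison. So the ``principal obstacle'' you anticipate is resolved by fiat rather than by an equivalence argument.

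For the magnitude bound the paper takes a more elementary route than either of yours. It expands
\[
\text{mmse}_\text{OR}(t) - \text{mmse}(t) = \sum_{k=0}^{K_a} p_k\,\EE_z\Bigl[\bigl(k - f^\text{OR}(\sqrt{t}k+z)\bigr)^2 - \bigl(k - f(\sqrt{t}k+z)\bigr)^2\Bigr],
\]
bounds the tail $k\geq 2$ by $\sum_{k\geq 2} k^2 p_k = \mathcal{O}(K_a^2/2^{2J})$ using $0\leq f^\text{OR}\leq 1$, and for $k\in\{0,1\}$ performs a case split on the sign of $Z(r) - Z^\text{OR}(r)$, obtaining pointwise bounds on $f^\text{OR}(r)^2 - f(r)^2$ and on $f(r) - f^\text{OR}(r)$ that reduce to factors like $((1-p_0)/p_1)^2 - 1$ or $\sum_{k\geq 2} k p_k$, each of order $K_a^2/2^{2J}$. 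This sidesteps the difficulty in your Lipschitz route: the lower bound $h(r) \geq p_0 \phi(r)$ decays in $r$, so the Lipschitz constant of $(g,h)\mapsto \int g^2/h$ is not obviously finite without a more careful localization argument. Your Verd\'u-identity alternative runs into $D(p\|\tilde p)=\infty$ (as you note) and after regularization would give only an integrated bound, requiring the same $L^1$-to-pointwise transfer used in Theorem~\ref{thm:error_bound}; the paper's direct computation yields the pointwise-in-$t$ estimate in one step.
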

\begin{proof}
    See Appendix \ref{appendix:or_mmse}.
\end{proof}
We choose the nomenclature \emph{OR} in $\text{mmse}_\text{OR}$ because the distribution of
$s_\text{OR}$ arises as the marginal distribution of $\sv$
if we assume that it is not the sum of the individual messages $\mv_k$
but the OR-sum:
\beq
    \sv = \bigvee_{k=1}^{K_a} \mv_k
\eeq 
Furthermore, let $I_\text{OR}(t)$ be the input-output mutual information
in the channel \eqref{eq:decoupled_or} and 
\beq
i^\text{RS}_{J,\text{OR}}(\eta) = I_\text{OR}(\eta\hat{P}) + \frac{1}{2\beta}[(\eta-1)\log_2(e)-\log_2(\eta)]
\label{eq:RS-OR-potential}
\eeq 
the corresponding RS-potential.
A consequence of Theorems \ref{thm:error_bound} and \ref{thm:or_mmse} is that we can use \eqref{eq:RS-OR-potential}
to find both the global minimizer of \eqref{eq:RS-vector-potential} and the local minimizer
of \eqref{eq:RS-scalar-potential}.
\begin{corollary}
    \label{cor:eta_bound}
    Let $\eta_\text{opt}$ be the global minimizer of \eqref{eq:RS-vector-potential},
    let $\tilde{\eta}_\text{opt}^\text{OR}$ be the global minimizer of \eqref{eq:RS-OR-potential},
    let $\eta_\text{alg}$ be the smallest local minimizer of \eqref{eq:RS-scalar-potential} and
    let $\tilde{\eta}^\text{OR}_\text{alg}$ be the smallest local minimizer of \eqref{eq:RS-OR-potential},
    then $\eta_\text{opt} > \tilde{\eta}_\text{opt}^\text{OR}$
    \beq
    \eta_\text{opt} - \tilde{\eta}_\text{opt}^\text{OR} = \mathcal{O}\left(\frac{R_\text{in}\log K_a}{\sqrt{J}}\right)
        \label{eq:eta_bound_opt}
    \eeq
    Furthermore, if $\eta_\text{opt} \neq \eta_\text{alg}$ then
    $\eta_\text{alg} > \tilde{\eta}_\text{alg}^\text{OR}$ and:
    \beq
    \eta_\text{alg} - \tilde{\eta}_\text{alg}^\text{OR} = \mathcal{O}\left(\frac{R_\text{in}\log K_a}{\sqrt{J}}\right)
        \label{eq:eta_bound_alg}
    \eeq
    $\hfill\square$
\end{corollary}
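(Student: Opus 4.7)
The strategy is to decompose each difference in the corollary via the triangle inequality and to reuse the MMSE-perturbation machinery already developed in the proof of Theorem \ref{thm:error_bound}, now fed with the pointwise bound from Theorem \ref{thm:or_mmse}.

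For the global-minimizer bound \eqref{eq:eta_bound_opt}, I would write
\[
\eta_\text{opt}-\tilde{\eta}_\text{opt}^\text{OR} \;=\; (\eta_\text{opt}-\tilde{\eta}_\text{opt})+(\tilde{\eta}_\text{opt}-\tilde{\eta}_\text{opt}^\text{OR}),
\]
where $\tilde{\eta}_\text{opt}$ is the global minimizer of the scalar binomial potential \eqref{eq:RS-scalar-potential} as in Theorem \ref{thm:error_bound}. The first summand is controlled at the claimed rate $\mathcal{O}(\log K_a/2^{(1-\delta)J})$ by Theorem \ref{thm:error_bound}. For the second, both $\tilde{\eta}_\text{opt}$ and $\tilde{\eta}_\text{opt}^\text{OR}$ are characterised by stationarity conditions of the form $\eta^{-1}=1+c\,m(\eta\hat{P})$ obtained from \eqref{eq:RS-scalar-potential} and \eqref{eq:RS-OR-potential} through the I-MMSE theorem, with $m=\text{mmse}$ and $m=\text{mmse}_\text{OR}$ respectively. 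Theorem \ref{thm:or_mmse} gives $\text{mmse}_\text{OR}-\text{mmse}=\mathcal{O}(K_a^2/2^{2J})$ pointwise, and the smoothness of Gaussian MMSE functions --- the very property invoked in the last step of the proof of Theorem \ref{thm:error_bound} --- transfers a pointwise bound on the right-hand sides to a pointwise bound on the fixed points. Since $\mathcal{O}(K_a^2/2^{2J})$ is absorbed into $\mathcal{O}(\log K_a/2^{(1-\delta)J})$ for any $0<\delta<1$ in the large-$J$ regime, the asserted rate follows. Strictness, $\tilde{\eta}_\text{opt}>\tilde{\eta}_\text{opt}^\text{OR}$, is a plain monotonicity argument: $\text{mmse}_\text{OR}\geq\text{mmse}$ pointwise and both are strictly decreasing in $\eta$, so the fixed-point curve of the OR problem sits strictly above that of the binomial problem and its crossing with the decreasing hyperbola $1/\eta$ occurs at a strictly smaller $\eta$. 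Combined with $\tilde{\eta}_\text{opt}<\eta_\text{opt}$ from Theorem \ref{thm:error_bound}, this yields the strict inequality in \eqref{eq:eta_bound_opt}.

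The algorithmic bound \eqref{eq:eta_bound_alg} admits a simpler, one-step version of the same argument, since both $\eta_\text{alg}$ and $\tilde{\eta}_\text{alg}^\text{OR}$ are local minimizers of \emph{scalar} potentials that differ by an $\mathcal{O}(K_a^2/2^{2J})$ perturbation (Theorem \ref{thm:or_mmse} plus the I-MMSE relation). The hypothesis $\eta_\text{opt}\neq\eta_\text{alg}$ guarantees that \eqref{eq:RS-scalar-potential} has at least two local minima, so $\eta_\text{alg}$ is an isolated, non-degenerate critical point strictly separated from the global one. A standard implicit-function / Morse-lemma argument then produces a local minimum of the perturbed OR potential within $\mathcal{O}(K_a^2/2^{2J})$ of $\eta_\text{alg}$, and this local minimum is necessarily the \emph{smallest} one of the perturbed potential because an $\mathcal{O}(K_a^2/2^{2J})$ perturbation is too small to create a new local minimum at a smaller $\eta$. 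The strict inequality $\eta_\text{alg}>\tilde{\eta}_\text{alg}^\text{OR}$ follows once more from the shift-of-curve argument applied locally around $\eta_\text{alg}$, and the rate $\mathcal{O}(K_a^2/2^{2J})$ is again absorbed into $\mathcal{O}(\log K_a/2^{(1-\delta)J})$.

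The main obstacle I anticipate is exactly this last step for local minima. Near a bifurcation of the scalar potential the Hessian degenerates, so a fixed MMSE perturbation can in principle split, merge, or annihilate local minima. The hypothesis $\eta_\text{opt}\neq\eta_\text{alg}$ excludes only the degenerate case of a single shared minimum; it does not by itself quantify the Hessian at $\eta_\text{alg}$. In the regime of interest --- $J$ large with $K_a\ll 2^J$ --- the landscape is stable and the implicit-function argument is routine, but turning ``stable landscape'' into a quantitative Hessian lower bound at the smallest local minimum, uniform in the relevant parameters, is where the genuine technical work sits; everything else in the proof is bookkeeping around Theorems \ref{thm:error_bound} and \ref{thm:or_mmse}.
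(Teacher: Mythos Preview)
Your proposal is correct and matches the paper's approach: the paper gives no explicit proof of the corollary, stating only that it is ``a consequence of Theorems~\ref{thm:error_bound} and~\ref{thm:or_mmse}'', and your triangle-inequality decomposition together with the fixed-point perturbation argument is exactly the natural way to combine those two theorems. Your write-up is in fact more detailed than the paper's treatment, including the explicit monotonicity argument for the strict inequalities and the honest discussion of the degeneracy issue near bifurcations of the scalar potential --- a point the paper glosses over entirely.
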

Note that we have only shown that the difference of the MMSE functions in the channels
\eqref{eq:decoupled} and \eqref{eq:decoupled_or}
converges to zero as $J\to\infty$.
This shows that those functions converge to the same limiting function.
The derivation of this limiting function itself is the subject of the following section.
\subsection{The $J\to\infty$ limit}
\label{sec:limit}
The problem with the numerical evaluation of the stationary points of \eqref{eq:RS-scalar-potential},
even when using Theorem~\ref{thm:or_mmse}, is that
$2^{-J}$ is very small. 
We were able to evaluate the minima of \eqref{eq:RS-scalar-potential} only up to around $J=60$.
\footnote{In a double-precision arithmetic $1-2^{-J}$ evaluates as equal to one at around $J=60$ \cite{IEE2019}.
A more involved implementation with a higher precision may allow the evaluation at higher $J$.}
So even though \eqref{eq:diff_mmse_or}
guarantees that the common $J\to\infty$ limit of the two MMSE functions in the channels 
\eqref{eq:decoupled_or} and \eqref{eq:decoupled} exists, it is not obvious how to calculate it numerically.
To solve these problems we calculate the limit of \eqref{eq:RS-OR-potential}
analytically in a regime where both
$K_a$ and $J$ go to infinity with a fixed ratio $\alpha := J/\log_2 K_a$ for some $\alpha>1$.
The parameter 
$\alpha$ determines the sparsity in the vector $\sv$, e.g. for $K_a = 300$ and $J=15$ we get
$\alpha\sim 1.82$. In this limit $K_a/2^J = K_a^{1-\alpha} \to 0$, i.e. the sparsity
in $\sv$ goes to zero and the error term in Theorem \ref{thm:or_mmse} vanishes.
The RS-potential \eqref{eq:RS-OR-potential} provides a characterisation of the
AMP performance (and by Corollary \ref{cor:eta_bound} and Claim \ref{conj:decoupling}
also of the SBS-MAP performance) when sparsity and the aspect ratio $\beta$ are kept fixed.
The following analysis provides a way of predicting the performance when the aspect ratios are growing
large while at the same time the sparsity becomes small.
We find that a non-trivial limit of the MMSE function exists in the
\emph{energy-efficient} regime, i.e. for $R_\text{in},P \to 0$ with fixed sum-rate
$S_\text{in} = K_a R_\text{in}$ and fixed energy-per-coded-bit
$\mathcal{E}_\text{in} = \SNR/(2R_\text{in})$. Note, that in this limit $R_\text{in} = S/K_a$, i.e.,
the differences in \eqref{eq:eta_bound_opt} and \eqref{eq:eta_bound_alg} go to zero.

\begin{theorem}
    \label{thm:limit}
    In the limit $K_a,J \to \infty$, $R_\text{in},\SNR \to 0$ with fixed 
    $\mathcal{E}_\text{in}$, $S_\text{in}$ and $J = \alpha \log_2 K_a$ for 
    some $\alpha > 1$ the pointwise limit of the RS-potential \eqref{eq:RS-OR-potential}
    is given by
    (up to additive or multiplicative terms that are independent of $\eta$ and therefore do not
    influence the stationary points of $i^\text{RS}(\eta)$):
    \beq
    \begin{split}
        i^\text{RS}_\infty(\eta)&:=\lim_{J\to\infty} i^\text{RS}_{J,\text{OR}}(\eta) =\\
                                &\eta S_\text{in} \mathcal{E}_\text{in}[1-\theta(\eta-\bar{\eta})]\\
                                &+\frac{S_\text{in}}{\log_2 e}\left(1-\frac{1}{\alpha}\right)\theta(\eta-\bar{\eta})\\
                                &+ \frac{1}{2}[(\eta-1)-\ln \eta]
    \end{split}
    \label{eq:rs_limit}
    \eeq
    where
    \beq
        \theta(x):=
        \begin{cases}
            1,\quad \text{if } x > 0\\
            \frac{1}{2},\quad \text{if } x=0\\
            0,\quad \text{if } x<0\\
        \end{cases}
    \label{def:theta}
    \eeq
    and
    \beq
    \bar{\eta} = \frac{1-\frac{1}{\alpha}}{\mathcal{E}_\text{in}\log_2 e}
    \label{def:eta_bar}
    \eeq
    Furthermore, for $\eta \in (0,\bar{\eta})\cup(\bar{\eta},1]$ 
    \beq
    \lim_{J\to\infty}\frac{\mathrm{d}}{\mathrm{d}\eta}i^\text{RS}_\text{J,OR}(\eta)
    = \frac{\mathrm{d}}{\mathrm{d}\eta}i^\text{RS}_\infty(\eta)
    \eeq
    \hfill$\square$
\end{theorem}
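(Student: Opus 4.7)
The plan is to rescale $i^\text{RS}_{J,\text{OR}}(\eta)$ by a single $\eta$-independent factor $c_J$ and then to evaluate the two summands pointwise using the sharp asymptotics of the mutual information of a very sparse Bernoulli input in scalar AWGN. Under the prescribed scaling, $R_\text{in}=S_\text{in}/K_a$, $J=\alpha\log_2 K_a$, and (after normalizing the noise variance to one) $\hat P=JP/R_\text{in}=2J\mathcal E_\text{in}$. Hence $2^J/(2\beta)=J/(2R_\text{in})=JK_a/(2S_\text{in})$, which grows like $K_a\log K_a$, and the natural rescaling is by $c_J:=JK_a\log_2 e/S_\text{in}$. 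Using $(\eta-1)\log_2 e-\log_2\eta=\log_2 e\cdot[(\eta-1)-\ln\eta]$, dividing the quadratic-log summand of \eqref{eq:RS-OR-potential} by $c_J$ yields exactly $\tfrac12[(\eta-1)-\ln\eta]$, which matches the last summand of $i^\text{RS}_\infty(\eta)$.

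Next, I would locate the detection threshold. The Bernoulli parameter $q:=1-p_0=1-(1-2^{-J})^{K_a}$ satisfies $q=K_a^{1-\alpha}(1+o(1))\to 0$ since $K_a 2^{-J}=K_a^{1-\alpha}$ and $\alpha>1$; hence $\ln(1/q)=(\alpha-1)\ln K_a+o(\ln K_a)$. Writing $t:=\eta\hat P=2\eta J\mathcal E_\text{in}$ and $c_\eta:=t/[2\ln(1/q)]$, a short algebraic computation yields $c_\eta=\eta/\bar\eta$ with $\bar\eta=(1-1/\alpha)/(\mathcal E_\text{in}\log_2 e)$. Hence $\eta<\bar\eta$ (respectively $\eta>\bar\eta$) corresponds exactly to being below (above) the classical single-shot detection threshold of a Bernoulli$(q)$ source in AWGN.

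The core of the argument is then the sparse-Bernoulli asymptotic: for $s\sim\text{Bernoulli}(q)$ observed through $r=\sqrt t\,s+z$ with $z\sim\mathcal N(0,1)$, one has $\mathrm{mmse}_\text{OR}(2c\ln(1/q))/q\to\mathbf 1_{c<1}$ as $q\to 0$ for every $c\ne 1$. This follows from a direct analysis of the posterior $P(s=1\mid r)=qe^{\sqrt t r-t/2}/(1-q+qe^{\sqrt t r-t/2})$ by splitting the integral at the log-likelihood-ratio crossover $r\approx\sqrt t/2+\ln(1/q)/\sqrt t$ and applying Gaussian tail estimates. The I-MMSE identity $dI_\text{OR}/dt=\tfrac12\mathrm{mmse}_\text{OR}$ then gives, for $\eta<\bar\eta$, $I_\text{OR}(\eta\hat P)=q\eta J\mathcal E_\text{in}\log_2 e\cdot(1+o(1))$, and for $\eta>\bar\eta$, $I_\text{OR}(\eta\hat P)=H(q)\cdot(1+o(1))=q(\alpha-1)\log_2 K_a\cdot(1+o(1))$. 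Multiplying the MI term by the prefactor $2^J=K_a^\alpha$ that appears in \eqref{eq:RS-scalar-potential}, and then dividing by $c_J$, both regimes produce finite limits: $\eta S_\text{in}\mathcal E_\text{in}$ when $\eta<\bar\eta$ and $(1-1/\alpha)S_\text{in}/\log_2 e$ when $\eta>\bar\eta$. Packaging these via the Heaviside function $\theta$ and observing that substitution of $\bar\eta$ makes the two regime-expressions agree on the common value $(1-1/\alpha)S_\text{in}/(2\log_2 e)$ at $\eta=\bar\eta$, the limit is continuous at $\bar\eta$ with $\theta(0)=\tfrac12$, which recovers exactly \eqref{eq:rs_limit}.

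The principal obstacle will be the rigorous control of the sparse-Bernoulli MMSE transition, together with uniformity of convergence on compact sets bounded away from $c=1$ so that the limit commutes with the I-MMSE integral. A separate and more delicate analysis is needed in a shrinking window around $c=1$ to show that the contribution of this window to the integrated MI is negligible; the standard tool is careful Gaussian integral estimates for the posterior mean and variance under both conditional densities $\phi(r)$ and $\phi(r-\sqrt t)$.
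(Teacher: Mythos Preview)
Your approach is correct and reaches the same limit, but it is genuinely different from the paper's. The paper works directly with the output entropy: it writes $I_\text{OR}=p_0H_0+(1-p_0)H_1$, replaces $-\ln p(y)$ by the log--sum--exp minimum $\min\{y^2/2,\ (y-\sqrt{\eta\hat P})^2/2+\ln(2^J/K_a)\}$, identifies the crossover point $\gamma$, and then applies dominated convergence to the two Gaussian integrals $H_0,H_1$. The threshold $\bar\eta$ emerges as the value at which the shifted crossover $\gamma'=\gamma-\sqrt{\eta\hat P}$ changes sign, giving the two regimes of $\lim H_1/J$ directly. Your route instead passes through the I--MMSE identity and the sparse--Bernoulli step law $\mathrm{mmse}_\text{OR}(2c\ln(1/q))/q\to\mathbf 1_{c<1}$, then integrates. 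Conceptually this ties the result more tightly to the state--evolution picture and makes the phase transition at $\eta=\bar\eta$ appear as the familiar single--shot detection threshold, which is appealing. The cost is exactly the one you flag: you must justify exchanging the $q\to 0$ limit with the integral $\int_0^{\eta\hat P}\mathrm{mmse}_\text{OR}$, which needs uniform control of the MMSE on compacts away from $c=1$ \emph{and} a separate argument that the contribution from a shrinking window around $c=1$ is $o(q\ln(1/q))$. The paper's entropy calculation sidesteps this entirely, since dominated convergence is applied once to a single Gaussian integral with an explicitly bounded integrand; it is therefore shorter and more self--contained, while your approach is more informative about \emph{why} $\bar\eta$ is the right threshold.
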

\begin{proof}
    See Appendix \ref{appendix:proof_limit}.
\end{proof}
The stationary points of $i^\text{RS}_\infty(\eta)$ can then be calculated analytically,
resulting in simple conditions:
\begin{theorem}
    \label{thm:inner}
    $\eta^* = 1$ is a global minimizer of $i^\text{RS}_\infty(\eta)$, if and only if
    \beq
    S_\text{in}\left(1-\frac{1}{\alpha}\right) < \frac{1}{2}\log_2 (1 + 2S_\text{in}\mathcal{E}_\text{in})
    \label{eq:inner_thm_opt}
    \eeq
    and $\eta^*_\text{loc} = (1+2S_\text{in}\mathcal{E}_\text{in})^{-1}$
    is a local minimizer of $i^\text{RS}_\infty(\eta)$ if and only if 
    \beq
    2S_\text{in} \geq \log_2 e\left(1 - \frac{1}{\alpha}\right)^{-1} - \frac{1}{\mathcal{E}_\text{in}}
    \label{eq:inner_thm_alg}
    \eeq
    \hfill$\square$
\end{theorem}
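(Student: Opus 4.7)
My plan is to exploit the piecewise structure of $i^\text{RS}_\infty(\eta)$ imposed by the Heaviside term. On the interval $(0,\bar\eta)$ it reduces to
\[
f_L(\eta) := S_\text{in}\mathcal{E}_\text{in}\,\eta + \tfrac{1}{2}\bigl[(\eta-1)-\ln\eta\bigr],
\]
while on $(\bar\eta,\infty)$ it reduces to
\[
f_R(\eta) := \tfrac{S_\text{in}}{\log_2 e}\bigl(1-\tfrac{1}{\alpha}\bigr) + \tfrac{1}{2}\bigl[(\eta-1)-\ln\eta\bigr].
\]
The first observation is that the two pieces agree at the junction: $f_L(\bar\eta)=f_R(\bar\eta)$ is immediate from the definition of $\bar\eta$, so $i^\text{RS}_\infty$ is continuous (the jump in $\theta$ is exactly cancelled by the drop of the linear term) and each piece is strictly convex because the common second derivative $1/(2\eta^2)$ is positive. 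Hence every candidate local minimizer is either the interior critical point of one of the pieces or the kink at $\bar\eta$, and by a direct inspection of one-sided derivatives the kink is never a local minimizer in any of the regimes below.

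Next I differentiate. Solving $f_L'(\eta)=S_\text{in}\mathcal{E}_\text{in}+\tfrac{1}{2}(1-1/\eta)=0$ gives the unique root $\eta^*_\text{loc}=(1+2S_\text{in}\mathcal{E}_\text{in})^{-1}$, and solving $f_R'(\eta)=\tfrac{1}{2}(1-1/\eta)=0$ gives $\eta=1$. Each of these is a genuine local minimizer of $i^\text{RS}_\infty$ precisely when it lies in the interior of the appropriate sub-interval. Rearranging $\eta^*_\text{loc}<\bar\eta$ algebraically, using the explicit formula for $\bar\eta$, produces exactly $2S_\text{in} > \log_2 e\,(1-1/\alpha)^{-1}-1/\mathcal{E}_\text{in}$, which is the condition in Part 2. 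Similarly $1>\bar\eta$ becomes $\mathcal{E}_\text{in}\log_2 e>1-1/\alpha$.

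For Part 1 I substitute the two candidates back into their pieces. A short calculation yields $f_L(\eta^*_\text{loc})=\tfrac{1}{2}\ln(1+2S_\text{in}\mathcal{E}_\text{in})$ and $f_R(1)=\tfrac{S_\text{in}}{\log_2 e}(1-1/\alpha)$, and the inequality $f_R(1)<f_L(\eta^*_\text{loc})$ is, after multiplication by $\log_2 e$, exactly the stated condition $S_\text{in}(1-1/\alpha)<\tfrac{1}{2}\log_2(1+2S_\text{in}\mathcal{E}_\text{in})$. To upgrade this value comparison to a true iff I would partition the $(S_\text{in},\mathcal{E}_\text{in})$-plane according to where $\eta^*_\text{loc}$ and $\eta=1$ sit relative to $\bar\eta$: (i) both interior local minima, $\eta^*_\text{loc}<\bar\eta<1$ --- the comparison above settles things directly; (ii) only $\eta^*_\text{loc}$ is interior, $\bar\eta\geq 1$ --- the elementary bound $\ln(1+x)\leq x$ applied together with $\mathcal{E}_\text{in}\log_2 e\leq 1-1/\alpha$ shows the stated inequality fails, consistent with $\eta=1$ not being a local minimizer; (iii) only $\eta=1$ is interior, $\eta^*_\text{loc}\geq\bar\eta$ --- here the elementary inequality $-\ln x + x - 1\geq 0$ at $x=\eta^*_\text{loc}$, combined with the ratio $\bar\eta/\eta^*_\text{loc}\leq 1$, forces $f_L(\eta^*_\text{loc})\geq f_R(1)$, so $\eta=1$ is already the global minimum and the stated condition is automatically obeyed.

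The individual steps are all elementary calculus and one-line inequalities; the main obstacle I anticipate is purely organizational, namely the bookkeeping of the three regimes so that all the strict-vs-non-strict inequalities in the statement line up consistently. In particular the boundary case $\eta^*_\text{loc}=\bar\eta$ is delicate: the right-derivative $f_R'(\bar\eta)=-S_\text{in}\mathcal{E}_\text{in}<0$ there, so the function keeps descending past the kink and $\eta^*_\text{loc}$ is not literally a local minimizer, which is where the weak inequality in the statement of Part 2 has to be read in the limiting sense.
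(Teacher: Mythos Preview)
Your approach is correct and essentially the same as the paper's: differentiate the piecewise potential, identify the two candidate critical points $\eta^*_\text{loc}=(1+2S_\text{in}\mathcal{E}_\text{in})^{-1}$ and $\eta=1$, evaluate the potential there, and compare. Your treatment is in fact more careful than the paper's---the paper only notes that condition \eqref{eq:inner_thm_opt} implies $\bar\eta<1$ and does not explicitly verify case~(iii), whereas you close that gap via $\tfrac{a}{1+a}\le\ln(1+a)$; you also check explicitly that the kink at $\bar\eta$ is never a local minimizer and correctly flag that the weak inequality in \eqref{eq:inner_thm_alg} must be read in the limiting sense at the boundary $\eta^*_\text{loc}=\bar\eta$.
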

\begin{proof}
    According to Theorem \ref{thm:limit} the derivative of $i^\text{RS}_\infty(\eta)$ in
    \eqref{eq:rs_limit} is given by
    \beq
    \frac{\partial i^\text{RS}_\infty}{\partial \eta}(\eta) = S_\text{in}\mathcal{E}_\text{in}[1-\theta(\eta-\bar{\eta})] + \frac{1}{2}\left(1-\frac{1}{\eta}\right)
    \eeq
    for $\eta\neq\bar{\eta}$.
    Therefore, the stationary points of $i^\text{RS}_\infty(\eta)$ are
    \beq
        \eta_0^* = (1+2S_\text{in}\mathcal{E}_\text{in})^{-1}
    \eeq
    and
    \beq
        \eta_1^* = 1.
    \eeq
    The first point $\eta_0^*$ is stationary if and only if $\eta_0^* < \bar{\eta}$,
    which,
    after rearranging, gives precisely condition \eqref{eq:inner_thm_alg}.
    Also note, that the second derivative of
    $i^\text{RS}_\infty$ is $(4\eta)^{-2}$, so it is non-negative for all $\eta>0$.
    Therefore the stationary points are indeed minima. A local maximum may appear only at $\eta=\bar{\eta}$
    where $i^\text{RS}_\infty$ is not differentiable.
    The values of $i^\text{RS}_\infty$ at the minimal points are
    \beq
    \begin{split}
    i^\text{RS}_\infty(\eta^*_0) 
    &= \frac{S_\text{in}\mathcal{E}_\text{in}}{1 + 2S_\text{in}\mathcal{E}_\text{in}} + 
    \frac{1}{2}\left[\frac{-2S_\text{in}\mathcal{E}_\text{in}}{1 + 2S_\text{in}\mathcal{E}_\text{in}}+\ln(1+2S_\text{in}\mathcal{E}_\text{in})\right] \\
    &= \frac{\log_2 (1 + 2S_\text{in}\mathcal{E}_\text{in})}{2\log_2 e}
    \end{split}
    \eeq
    if $\eta^*_0< \bar{\eta}$, and
    \beq
    i^\text{RS}_\infty(\eta^*_1) = \frac{S_\text{in}}{\log_2 e}\left(1 - \frac{1}{\alpha}\right)
    \eeq
    It is apparent that $i^\text{RS}_\infty(\eta^*_1)$ is the global minimum if and only if condition
    \eqref{eq:inner_thm_opt} is fulfilled.
    We implicitly used here that $\bar{\eta}\leq 1$,
    that is because condition \eqref{eq:inner_thm_opt} implies $\bar{\eta}< 1$,
    which can be seen by solving inequality \eqref{eq:inner_thm_opt} for $\mathcal{E}_\text{in}$.
\end{proof}
\section{Hard Decision}
\label{sec:decision}
In this section we calculate the symbol detection error probabilities, assuming that the output
of the inner decoder for each position $s_i^l$ is given as the true signal in independent
Gaussian noise with an effective channel
strength $\eta\hat{P}$.
For the AMP algorithm this assumption is well justified as mentioned in
the beginning of Section \ref{sec:inner}
while for the optimal symbol-wise detector this assumption is justified
only by Claim \ref{conj:decoupling}. Note, that the MAP estimate on $s_i^l$ from
a vector observation $\rv^l$ in the channel \eqref{eq:decoupled} is the one which
maximises $p(s_i^l|\rv^l)$. Such an estimate is hard to analyse in general so we resort to analysing
the suboptimal estimator that maximises $p(s_i^l|r^l_i)$. 
This error probability of this estimator is described by the error probability of MAP estimation in
the scalar Gaussian channel \eqref{eq:decoupled_scalar} with the appropriate effective channel strength
$\eta_\text{opt}$. As we will later show, the suboptimal estimation is sufficiently good
in the sparse regime $K_a \ll 2^J$.

The previous section described
in depth how the effective channel strength $\eta\hat{P}$ can be calculated for the AMP and MAP estimation
respectively. For the detection, we
consider only the problem of deciding between $s=0$ and $s\geq1$ from the Gaussian observation $r$,
specified by \eqref{eq:decoupled_scalar}.
We focus only on the support information for two reasons.
First, because the outer code considered in this paper makes use only of the
support information. Second, because we are interested in the typical setting 
 where $K_a \ll 2^J$. The intuition is that, in this regime,
collisions are so rare
that the error-per-component of treating each component as if collisions are impossible
is negligibly small. Let $\hat{\rho}$ be an estimate of $\mathds{1}(s\geq 1)$ given an observation
of $s$ in Gaussian noise according to \eqref{eq:decoupled_scalar}.
We define two types of errors,
the probability of missed detections (Type I errors)
\beq
    p_\text{md} = p(\hat{\rho} = 0 |s \geq 1)
\eeq
and the probability of false alarms (Type II errors)
\beq
    p_\text{fa} = p(\hat{\rho} = 1|s=0).
\eeq
From the Neyman-Pearson Lemma \cite{Poo1994}, the optimal
trade-off between the two types of errors is achieved by choosing $\hat{\rho} = 1$
whenever
\beq
    \frac{p(r|s \geq 1)}{p(r|s = 0)} \geq \theta,
    \label{eq:neyman}
\eeq
where $\theta$ is some appropriately chosen threshold.
If we assume that $s$ takes on only binary values with $p(s>1) = 0$ and $p(s=1) = 1-p_0$,
i.e. the same OR-approximation introduced
in Theorem \ref{thm:or_mmse},
a straightforward calculation shows that by varying $\theta$
the trade-off between $p_\text{md}$ and $p_\text{fa}$ follows the curve defined
by the equation
\beq
Q^{-1}(p_\text{md}) + Q^{-1}(p_\text{fa}) = \sqrt{\eta\hat{P}}
\label{eq:tradeoff}
\eeq 
where $Q(x)$ denotes  the Q-function.
In \figref{fig:neymanp} we plot this curve for $K_a = 300$, $J = 12$ and various values of $\eta\hat{P}$
together with the curves obtained from the precise evaluation of the Neyman-Pearson error probabilities
without the OR-approximation. It is apparent that the difference between those two is barely
recognizable.
By choosing $\theta = p_0/(1-p_0)$ we get the SBS-MAP estimator,
which minimizes the total error probability
\beq
p(\hat{\rho}\neq\rho) = p_\text{fa}p_0 + p_\text{md}(1-p_0).
\eeq
Nonetheless, we find that is useful for practical purposes to vary the threshold
$\theta$ to balance false alarms and missed detections in a way that is adapted to the outer
decoder.
\begin{figure}
    \centering
    \includegraphics[width=\linewidth]{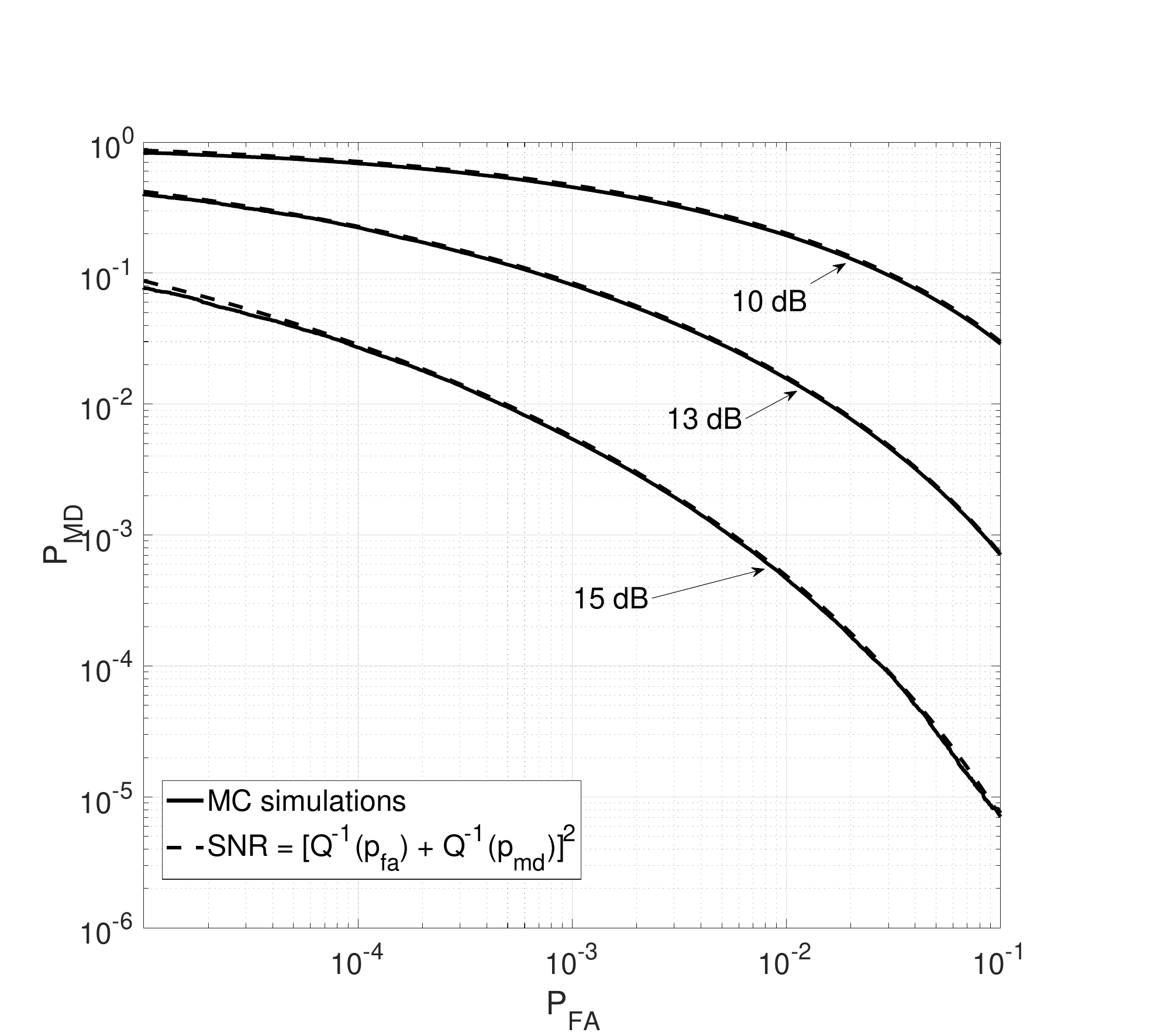}
    \caption{Detection error trade-off for $K_a = 300$ and $J=12$. The solid curves are created by
        Monte-Carlo simulations of the channel \eqref{eq:decoupled_scalar}
        with subsequent likelihood-ratio test, where
        the threshold $\theta$ is varied.
        The dashed lines are calculated by \eqref{eq:tradeoff}. 
    }
    \label{fig:neymanp}
\end{figure}
\section{Outer Channel}
\label{sec:outer}
Let us assume that the inner decoder (SBS-MAP or AMP)
has recovered the support of $\sv = \sum_{k=1}^{K_a} \mv_k$ such that the symbol-wise error probabilities
are given by $p_\text{md}$ and $p_\text{fa}$. We have shown in the previous two sections
how $p_\text{md}$ and $p_\text{fa}$ can be found in the asymptotic limit. The support
of $\sv$ is given by the OR-sum of the messages:
\beq
\rhov = \text{supp}(\sv) = \bigvee_{k=1}^{K_a} \mv_k
\eeq
Let $\hat{\rhov}$ denote the estimated support vector. It can be interpreted as the output
of $L$ uses of a vector OR-MAC
with symbol-wise, asymmetric noise.
We denote the lists of active indices in the $l$-th section by
\begin{equation}
\label{eq:support_lists} 
\Sc_l = \left \{ i \in [1:2^J] : \hat{\rho}^l_i = 1 \right \}. 
\end{equation}
Given $L$ such lists, the outer decoder
is tasked to recover the list of transmitted messages up to permutation. 
An outer code is a subset $\mathcal{C}\subset [1:2^J]^L$ of $|\mathcal{C}| = 2^{JLR_\text{out}}$
codewords. An outer codeword can be equivalently represented as either a set of $L$ indices in
$[1:2^J]$ or as a binary vector in $\RR^{L2^J}$ with a single one in each section of size $2^J$.

Classical code constructions for the OR-MAC, like \cite{Kau1964,Dya1983},
have been focussed on zero-error decoding, 
which does not allow for non-zero per-user-rates as $K_a\to\infty$,
see e.g. \cite{Gyo2008} for a recent survey. 
Capacity bounds for the OR-MAC under the given input constraint have been derived in \cite{Cha1981}
and \cite{Gra1997},
where it was called the ``T-user M-frequency noiseless MAC without
intensity information'' or ``A-channel''. An asynchronous version of this channel was studied in
\cite{Han1996}. Note, that the capacity bounds in the literature are combinatorial and hard to evaluate
numerically for large numbers of $K_a$ and $2^J$.
In the following we will show that, in the typical case of $K_a \ll 2^J$,
a simple upper bound on the achievable rates based on the componentwise entropy is already tight.

From the channel coding theorem for
discrete memoryless channels \cite{El1980} it is known
that a code with per-user-rate $R_\text{out}$ [bits/coded bits]
and an arbitrarily small error probability exists if and only if
\beq
R_\text{out} < \frac{I(\mv_1,...,\mv_{K_a};\hat{\rhov})}{JK_a}.
\eeq
The coding theorem assumes that each user has his own codebook, so the resulting
rate constraint is an upper bound on the achievable rates of an outer code when every user
has the same
codebook. 
The mutual information is:
\begin{align}
    I(\mv_1, ..., \mv_{K_a};\hat{\sv}) = H(\hat{\sv}) - H(\hat{\sv}|\mv_1,...,\mv_{K_a})
\end{align}
where
\beq
H(\hat{\rhov}|\mv_1,...,\mv_{K_a}) = 2^J(p_0\mathcal{H}_2(p_\text{fa}) + (1-p_0)\mathcal{H}_2(p_\text{md}))
\eeq
and $\mathcal{H}_2(\cdot)$ denotes the binary entropy function.
The output entropy $H(\hat{\sv})$
for general asymmetric noise is hard to compute.
A simple upper bound on the entropy of the $2^J$-ary vector OR-channel can be
obtained by the sum of the marginal entropies of $2^J$ independent binary-input binary-output channels.
If we assume the coded messages to be uniformly distributed, i.e.
$\PP (\mv^l_k = \ev_j) = 2^{-J}$ for all $j = [1:2^J]$, then for all $i$
\beq
    p(\rho_i = 1) = 1 - p_0
\eeq 
with $p_0$ given in \eqref{eq:p0} and 
\beq
p(\hat{\rho}_i = 1) = (1-p_0)(1-p_\text{md}) + p_0p_\text{fa}
\eeq
Therefore, after reordering, we get:
\beq
    H(\hat{\sv}) 
    \leq 2^J\mathcal{H}_2((1-p_0)(1 - p_\text{fa} - p_\text{md}) + p_\text{fa})
\label{eq:iid_or_bound}
\eeq
Technically, this is only an upper bound, but we find numerically that it is very
tight and furthermore, in the next section we will show that in the noiseless case it
is actually achievable by an explicit outer code in the familiar limit $K_a,J\to\infty$
with $J=\alpha \log_2 K_a$. To find the limit of \eqref{eq:iid_or_bound} we assume that 
$p_\text{fa} \leq cK_a/2^J = cK_a^{1-\alpha}$
for some constant $c>0$, i.e. the ratio of false positives
to true positives remains at most constant as $K_a,J\to\infty$. An equivalent condition
is 
\beq
\lim_{K_a\to\infty} \frac{\log_2 p_\text{fa}}{\log_2 K_a} \leq 1 - \alpha
\label{eq:error_scaling}
\eeq
If this is not fulfilled the false positives dominate the entropy terms in the mutual information
and the achievable rates go to zero.
We use the fact that for small arguments,
the binary entropy function becomes
\beq
    \mathcal{H}_2(p) \approx p(1-\log_2 p)
\eeq 
and that $(1-p_0) \approx K_a/2^J$. With this, a straightforward calculation shows that
\beq
\lim_{K_a,J\to\infty} \frac{I(\mv_1, ..., \mv_{K_a};\hat{\sv})}{JK_a} 
\leq (1-p_\text{md})\left(1-\frac{1}{\alpha}\right)
\label{eq:outer_bound}
\eeq 
In the following we assume for simplicity
that the inner decoder works error free, i.e. $p_\text{fa} = p_\text{md} = 0$.
Interestingly, the bound $1 - \alpha^{-1}$
is achievable 
by a random code with a \emph{cover decoder}, a construct often used in group testing
literature.
Given $\text{OR}(\mathcal{L})$, the OR-combination of $\mathcal{L}$,
a list of $K_a$ codewords,
the cover decoder goes through the whole codebook and produces a list of codewords that are covered
by $\text{OR}(\mathcal{L})$.
By construction the cover decoder will find all codewords in $\mathcal{L}$
and the error probability is governed by the number of false positives $n_\text{fa}$. We
assume that if the decoder finds more than $K_a$ codewords, it discards exceeding
codewords at random until the list contains only $K_a$ codewords. Therefore the per-user error probability of the cover decoder
is given as
$P_e = n_\text{fa}/(K_a + n_\text{fa})$. We write $\cv_1 \subset \cv_2$ if a binary vector
$\cv_1$ is covered by a binary vector $\cv_2$, that is if for all $i$ with $c_{1,i} = 1$
also $c_{2,i} = 1$.
\begin{theorem}
    \label{thm:random_coding}
    Let $\mathcal{C}$ be an outer codebook of size $2^{LJR_\text{out}}$, where the position of
    each codeword in each section is chosen uniformly at random. Then the error probability
    of the cover decoder vanishes in the limit $L,K_a,J\to\infty$ with $J = \alpha \log_2 K_a$
    for some $\alpha >1$ if
    \beq
        R_\text{out} < 1 - \frac{1}{\alpha}
        \label{eq:random_coding_condition}
    \eeq 
    $\hfill\square$
\end{theorem}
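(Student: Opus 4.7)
The plan is to carry out a textbook random-coding argument tailored to the cover decoder, bounding the expected number of false alarms $n_\text{fa}$ by a single union bound over non-transmitted codewords. First I would fix an arbitrary realization of the $K_a$ transmitted messages, form the OR-vector $\rhov=\bigvee_{k=1}^{K_a}\mv_k$, and note that in any given section the coordinate-wise activation probability is $1-p_0=1-(1-2^{-J})^{K_a}$; the elementary inequality $1-(1-x)^{K_a}\leq K_a x$ yields the clean estimate $1-p_0\leq K_a\cdot 2^{-J}=K_a^{1-\alpha}$.

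Second, for any codeword $\cv$ that is not one of the $K_a$ transmitted codewords, the position of its single one in each of the $L$ sections is drawn uniformly and independently across sections, independently of $\rhov$. Hence the per-section cover event $\cv^l\subset\rhov^l$ has probability at most $K_a^{1-\alpha}$, and by independence across sections
\[
\PP(\cv\subset\rhov)\leq \bigl(K_a^{1-\alpha}\bigr)^L = 2^{-LJ(1-1/\alpha)},
\]
using $\log_2 K_a=J/\alpha$. A union bound over the at most $2^{LJR_\text{out}}$ non-transmitted codewords then gives
\[
\EE[n_\text{fa}]\leq 2^{LJ(R_\text{out}-(1-1/\alpha))}.
\]

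Finally, since the cover decoder recovers every transmitted codeword by construction, the per-user error probability satisfies $P_e=n_\text{fa}/(K_a+n_\text{fa})\leq n_\text{fa}/K_a$, so
\[
\EE[P_e]\leq K_a^{-1}\cdot 2^{LJ(R_\text{out}-(1-1/\alpha))}.
\]
Under the hypothesis $R_\text{out}<1-1/\alpha$ the exponent is strictly negative, and since $L\to\infty$ forces $LJ\to\infty$, the right-hand side vanishes; the standard existence argument extracts a deterministic codebook attaining the same bound. The only genuinely delicate point is the independence structure across the $L$ sections, which is exactly what justifies the product bound; the first-order estimate $1-p_0\leq K_a\cdot 2^{-J}$ is already tight to leading order under $J=\alpha\log_2 K_a$ and matches the converse \eqref{eq:outer_bound}, so no sharper expansion of $1-p_0$ is required.
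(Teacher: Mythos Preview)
Your argument is correct and follows essentially the same route as the paper's proof: fix the transmitted list, observe that each non-transmitted codeword is covered section by section with probability at most $K_a 2^{-J}=K_a^{1-\alpha}$, multiply across the $L$ independent sections, and then union-bound over the $2^{LJR_\text{out}}$ competitors to obtain the exponent $LJ(R_\text{out}-(1-1/\alpha))$. The only cosmetic differences are that the paper bounds $\PP(n_\text{fa}\geq 1)$ directly while you bound $\EE[n_\text{fa}]$ and then $\EE[P_e]\leq n_\text{fa}/K_a$, and that you use the clean inequality $1-(1-2^{-J})^{K_a}\leq K_a 2^{-J}$ where the paper invokes the same approximation informally.
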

\begin{proof}
    Let $\mathcal{L}$ be a list of $K_a$ arbitrary codewords from $\mathcal{C}$. Then
    \begin{align}
        P(n_\text{fa}\geq 1) 
        &= P\left(\bigcup_{\cv\notin \mathcal{L}}\{\cv\subset\text{OR}(\mathcal{L})\}\right)\\
        &\leq \sum_{\cv\notin\mathcal{L}} P\left(\cv\subset\text{OR}\left(\mathcal{L}\right)\right) \\
        &\leq 2^{LJR_\text{out}}\max_{\cv\notin \mathcal{L}} P\left(\cv\subset\text{OR}\left(\mathcal{L}\right)\right) \\
        &= 2^{LJR_\text{out}}\max_{\cv\notin \mathcal{L}} \prod_{l=1}^L P\left(\cv^l \subset\text{OR}\left(\mathcal{L}^l\right)\right) \\
        &= 2^{LJR_\text{out}} \left(\frac{K_a}{2^J}\right)^L\\
        &= 2^{LJR_\text{out} + L(1-\alpha)\log_2 K_a}\\
        &= 2^{LJ(R_\text{out}-(1-\alpha^{-1}))}
    \end{align}
    In the second line we have used the union bound. In the third line
    the non-negative sum is upper bounded by its maximum term times the number of
    codewords not in $\mathcal{L}$ which is $2^{LJR_\text{out}} - K_a \leq 2^{LJR_\text{out}}$.
    In the fourth line we have
    used that the entries of each section are chosen independently of each other.
    In the fifth line $\text{OR}(\mathcal{L}^l)$ denotes the OR-combination of the $l$-th section
    of the codewords in $\mathcal{L}$. The probability that a random
    number from $[1:2^J]$ is contained in a fixed set of size $K_a$ is given by
    $1 - (1-2^{-J})^{K_a}$, which becomes $K_a/2^J$ for small $K_a/2^J$. This probability
    is the same for all codewords not in $\mathcal{L}$ which allows to drop the maximum.
    It is apparent that the error probability vanishes for any $L$ and $J\to\infty$
    if condition \eqref{eq:random_coding_condition} is fulfilled.
\end{proof}
\begin{remark}
    The proof of Theorem \ref{thm:random_coding} can easily be extended to include false positives. 
    For that we introduce modified lists $\tilde{\text{OR}}(\mathcal{L}^l)$, which
    in addition to the list of transmitted symbols in section $l$ also contain $n_\text{fa}$
    random erroneous entries. If we assume that $n_\text{fa} = cK_a$ for some constant 
    $c>0$ the result of Theorem \ref{thm:random_coding} is unchanged. Since
    $p_\text{fa} = n_\text{fa}/2^J$, this condition is equivalent to \eqref{eq:error_scaling}.
\end{remark}
We can also derive a finite length upper bound on the achievable outer rates with the cover decoder
in a more direct, combinatorial way.
\begin{theorem}
    Any outer code that can guarantee error-free recovery under cover decoding for $K_a$ users with
    $L$-sections in the limit $K_a,J\to\infty$ with $J=\alpha\log_2 K_a$ has to satisfy:
    \beq
    R_\text{out} \leq 1 - \frac{1}{\alpha} + \frac{1}{\alpha L}
    \eeq 
    $\hfill\square$
\end{theorem}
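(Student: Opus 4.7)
The plan is to turn "error-free cover decoding" into a purely combinatorial constraint on the outer codebook $\mathcal{C}\subset[1{:}2^J]^L$ and then apply a volume/packing argument. First I would observe that the cover decoder's output depends on the transmitted list $\mathcal{L}$ only through the vector $\mathrm{OR}(\mathcal{L})\in\{0,1\}^{L2^J}$: indeed, by definition it returns the set of all codewords of $\mathcal{C}$ that are covered by this OR-sum. Therefore, error-free recovery forces the map
\beq
\phi:\binom{\mathcal{C}}{K_a}\to\{0,1\}^{L2^J},\qquad \phi(\mathcal{L})=\mathrm{OR}(\mathcal{L}),
\eeq
to be injective (otherwise two distinct $K_a$-subsets would yield the same decoder output, hence an error on at least one of them). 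This is a necessary condition, which is all I need for a converse.

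Next I would upper-bound the size of the image of $\phi$. Because each outer codeword has exactly one 1 per section and the OR of $K_a$ such codewords has at most $K_a$ ones in each of the $L$ sections, every element of $\mathrm{Im}(\phi)$ is specified by an $L$-tuple of subsets of $[1{:}2^J]$ of size at most $K_a$. Hence
\beq
|\mathrm{Im}(\phi)|\;\le\;\Bigl[(K_a+1)\tbinom{2^J}{K_a}\Bigr]^{L}.
\eeq
Combining this with injectivity and the codebook size $M=2^{LJR_\text{out}}$ yields the key inequality
\beq
\tbinom{M}{K_a}\;\le\;\Bigl[(K_a+1)\tbinom{2^J}{K_a}\Bigr]^{L}.
\eeq

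I would then take $\log_2$ on both sides and apply the standard estimates $\binom{M}{K_a}\ge(M/K_a)^{K_a}$ and $\binom{2^J}{K_a}\le(e\,2^J/K_a)^{K_a}$. Rearranging gives
\beq
LJR_\text{out}\;\le\;\log_2 K_a+L\log_2(K_a+1)/K_a\cdot(1) + L\log_2 e + LJ - L\log_2 K_a,
\eeq
so after dividing by $LJ$,
\beq
R_\text{out}\;\le\;1-\frac{\log_2 K_a}{J}+\frac{\log_2 K_a}{LJ}+\frac{\log_2 e}{J}+o(1).
\eeq
Substituting $J=\alpha\log_2 K_a$ makes $\log_2 K_a/J=1/\alpha$ and sends the remaining terms $\log_2 e/J$ and $\log_2(K_a+1)/(K_a J)$ to $0$ as $K_a\to\infty$, leaving exactly the claimed bound $R_\text{out}\le 1-1/\alpha+1/(\alpha L)$.

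The conceptual step (injectivity of $\phi$) is immediate from the definition of cover decoding; the combinatorial count of OR-supports is elementary. The only slightly delicate part, and what I expect to be the main place to be careful, is keeping track of the lower-order terms when moving from the binomial inequality to the asymptotic rate bound, so as to confirm that all residual contributions indeed vanish in the regime $K_a,J\to\infty$ with $J=\alpha\log_2 K_a$ and produce no hidden additive constant that would degrade the bound.
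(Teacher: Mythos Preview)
Your proof is correct and follows a genuinely different route from the paper. The paper partitions the codebook into $M/K_a$ disjoint blocks of $K_a$ codewords each and argues that the sets of ``sensewords'' (the elements of $[1{:}2^J]^L$ covered by the OR of each block) are pairwise disjoint; this packing argument yields the clean finite-size inequality $(M/K_a)\,K_a^{L}\le 2^{JL}$ directly, with no lower-order terms. You instead use only the much weaker consequence of error-free cover decoding, namely that $\mathcal{L}\mapsto\mathrm{OR}(\mathcal{L})$ is injective on $\binom{\mathcal{C}}{K_a}$, and then bound the size of the image by counting admissible OR-patterns section by section. What the paper's route buys is an exact combinatorial inequality without Stirling-type estimates; what yours buys is that it is the standard counting argument for cover-free families and is airtight as written --- in particular, the disjointness-of-sensewords step in the paper's argument is more delicate than it looks, since a non-codeword in $[1{:}2^J]^L$ can in fact be covered by the OR of two disjoint $K_a$-blocks of an error-free code, whereas your injectivity premise is an immediate consequence of the decoder being a deterministic function of $\mathrm{OR}(\mathcal{L})$. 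Your bookkeeping of the residual $\log_2 e/J$ and $\log_2(K_a+1)/(K_aJ)$ terms is correct, and they indeed vanish in the regime $J=\alpha\log_2 K_a$, $K_a\to\infty$.
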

\begin{proof}
    We first show that any error free code has to satisfy
    \beq
        \frac{2^{LJR_\text{out}}}{K_a} \leq \frac{2^{JL}}{K_a^{L}}
        \label{eq:packing}
    \eeq 
    To see this, assume an outer code that is error-free, i.e. for any list of $K_a$ codewords
    the OR-combination of theses codewords does not cover any other codeword
    that is not in the list. Then any two non-intersecting lists, $\mathcal{L}_1$
    and $\mathcal{L}_2$, of $K_a$ codewords create
    two non-intersecting lists of $K_a^L$ possible sensewords.
    To see that they are non-intersecting,
    note, that due to the error-free property none of the codewords in
    $\mathcal{L}_2$ is covered by the OR-combination
    of all codewords from $\mathcal{L}_1$, which we denote by $\text{OR}(\mathcal{L}_1)$.
    This means that each codeword from $\mathcal{L}_2$
    differs from $\text{OR}(\mathcal{L}_1)$ in at least one position.
    But this also means that any OR-combination of codewords from $\mathcal{L}_2$ differs
    from $\text{OR}(\mathcal{L}_1)$ in at least one position. \\
    Now divide the set of all $2^{LJR_\text{out}}$ codewords into distinct lists of length
    $K_a$, then each of these lists creates a distinct list of $K_a^L$ sensewords, whose
    total number has to be limited by the size of the space:
    \beq
        \frac{2^{LJR_\text{out}}}{K_a}K_a^L \leq 2^{JL}
    \eeq 
    This is precisely the statement of \eqref{eq:packing}. If we use the scaling condition
    $2^J = K_a^{\alpha}$ and take the limit $K_a,J\to\infty$ we get the statement of the
    theorem.
\end{proof}
For $L=1$ we get that
$R_\text{out} \leq 1$, which can obviously be achieved, since for a single section no outer
code is necessary. 
\subsection{Tree code}
The first practical coding scheme for the outer OR-MAC with the sectionized structure
has been presented in \cite{Ama2020a}. It works as follows:
The $B$-bit message is divided into blocks of size 
$b_1, b_2, \ldots, b_L$ such that $\sum_l b_l = B$ and such that
$b_1 =  J$ and $b_l < J$ for all $l = 2, \ldots, L$. 
Each subblock $l = 2, 3, \ldots, L$ is augmented to size $J$ by appending
$\pi_l = J - b_l$ parity bits,  
obtained using pseudo-random linear combinations of the information bits of the
previous blocks $l' < l$.
Therefore, there is a one-to-one association between the set of all sequences of coded blocks and
the paths of a tree of depth $L$.
The pseudo-random parity-check equations generating the parity bits 
are identical for all users, i.e., each user makes use exactly of the same outer
{\em tree code}.
For more details on the outer coding scheme, please refer to  \cite{Ama2020a}. 

Let $\Sc_l,\ l=1,...,L$
be the list of active indices in the $l$-th section, defined in \eqref{eq:support_lists}. 
Since the sections contain parity bits with parity profile $\{0,\pi_2, \ldots, \pi_L\}$, 
not all message sequences in $\Sc_1 \times \Sc_2 \times \cdots \times \Sc_L$ are possible. 
The role of the outer decoder is to identify all possible message sequences, i.e., those corresponding to
paths in the tree of the outer tree code \cite{Ama2020a}.  
The output list $\Lc$ is initialized as an empty list. Starting from $l=1$ and proceeding in order, the decoder converts the integer indices 
$\Sc_l$ back to their binary representation, separates data and parity bits, computes the parity 
checks for all the combinations with messages from the list $\Lc$ and extends only the paths 
in the tree which fulfill the parity checks. A precise analysis of the error probability
in various  asymptotic regimes is given in \cite{Ama2020a}.
Specifically, the analysis
shows that the error probability of the outer code goes to zero
in the limit $K_a,J\to \infty$ with $J = \alpha \log_2 K_a$ and some $\alpha>1$
\footnote{We deviate slightly from the notation in \cite{Ama2020a}, where the scaling parameter
$\alpha'$ is defined by
$B = \alpha'\log_2 K_a$ and the number of subslots is considered to be constant.
It is apparent that
those definitions are connected by $\alpha' = LR_\text{out}\alpha$.}
if the total number of parity bits $P = \sum_{l=2}^L \pi_l$ is chosen as (\cite[Theorem 8 and 9]{Ama2020a})
\begin{enumerate}
    \item $P = (L+\delta - 1)\log_2 K_a$ for some constant $\delta>0$ if all the parity bits
        are allocated in the last slots.
    \item $P = c(L-1)\log_2 K_a$ for some constant $c>1$ if the parity bits are allocated evenly
        at the end of each subslot except for the first.
\end{enumerate}
In the first case the complexity scales like $\mathcal{O}(K_a^{R_\text{out}L}\log K_a)$,
since there is no pruning in the first $R_\text{out}L$ subslots,
while in the second case the complexity scales linearly with $L$
like $\mathcal{O}(LK_a\log K_a)$. The corresponding outer rates are
\beq
\begin{split}
    R_\text{out} &= B/(B+P)\\
                 &= 1 - P/(B+P)\\
                 &= 1 - P/(LJ)\\
                 &= 1 - \frac{L+\delta - 1}{L\alpha}\\
                 &= 1 - \frac{1}{\alpha}  + \frac{1}{L}\frac{\delta-1}{\alpha}
\end{split}
\eeq
for the case of all parity bits in the last sections and
\beq
\begin{split}
R_\text{out} &= 1 - \frac{c(L-1)}{L\alpha}\\
             &= 1 - \frac{c}{\alpha}  - \frac{c}{L\alpha}
\end{split}
\eeq
for the case of equally distributed parity bits.
In the limit $L\to\infty$ the achievable rates are
therefore $R_\text{out} = 1 - 1/\alpha$ and $R_\text{out} = 1 - c/\alpha$ respectively,
which coincides with the asymptotic upper bound \eqref{eq:outer_bound} for $p_\text{md} = 0$ (up to a constant
for the second case). 

In \figref{fig:outer_rates_noiseless} and \figref{fig:outer_rates_alpha}
we compare empirical simulations with the developed theory.
For the empirical results we fix $B= 100$ bits and $L=8$. Furthermore, for various values
of $J$ and $K_a$ we increase the number of parity bits until a per-user error probability
$P_e < 0.05$ is reached. In practice we use a mixture of the two types of parity profile
described above. We choose the last section to be only parity bits, while the remaining parity
bits are distributed uniformly over the sections $2,..,L-1$. The entropy bounds are calculated
according to \eqref{eq:iid_or_bound} with $p_\text{fa} = p_\text{md} = 0$.
In \figref{fig:outer_rates_alpha} we plot the results
as a function of $\alpha = J/\log_2 K_a$. We can see that the achievable rates of
the tree code with the used parity profile are very well described by the formula
$R_\text{out} = 1 - \alpha^{-1}$ for different $J$ and $K_a$. Furthermore, we can see that
the line $R_\text{out} = 1 - \alpha^{-1}$ is approached from above by the upper bound 
\eqref{eq:iid_or_bound} as $J\to\infty$.
\begin{figure}
    \centering
    \includegraphics[width=\linewidth]{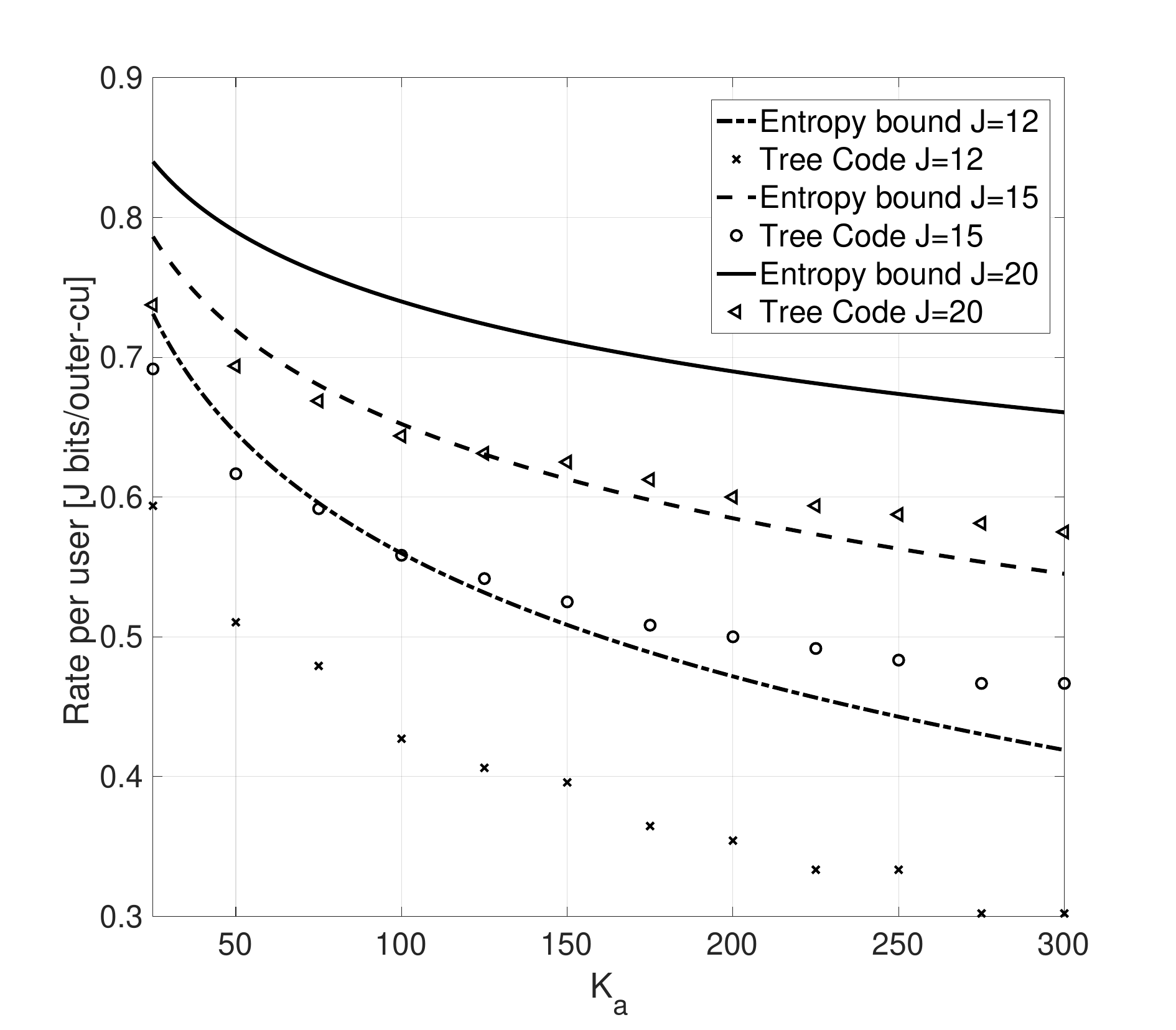}
    \caption{ Achievable rates with the tree code with $L=8$ and $B=100$ bits together
        with the upper bound \eqref{eq:iid_or_bound}.
    }
    \label{fig:outer_rates_noiseless}
\end{figure}
\begin{figure}
    \centering
    \includegraphics[width=\linewidth]{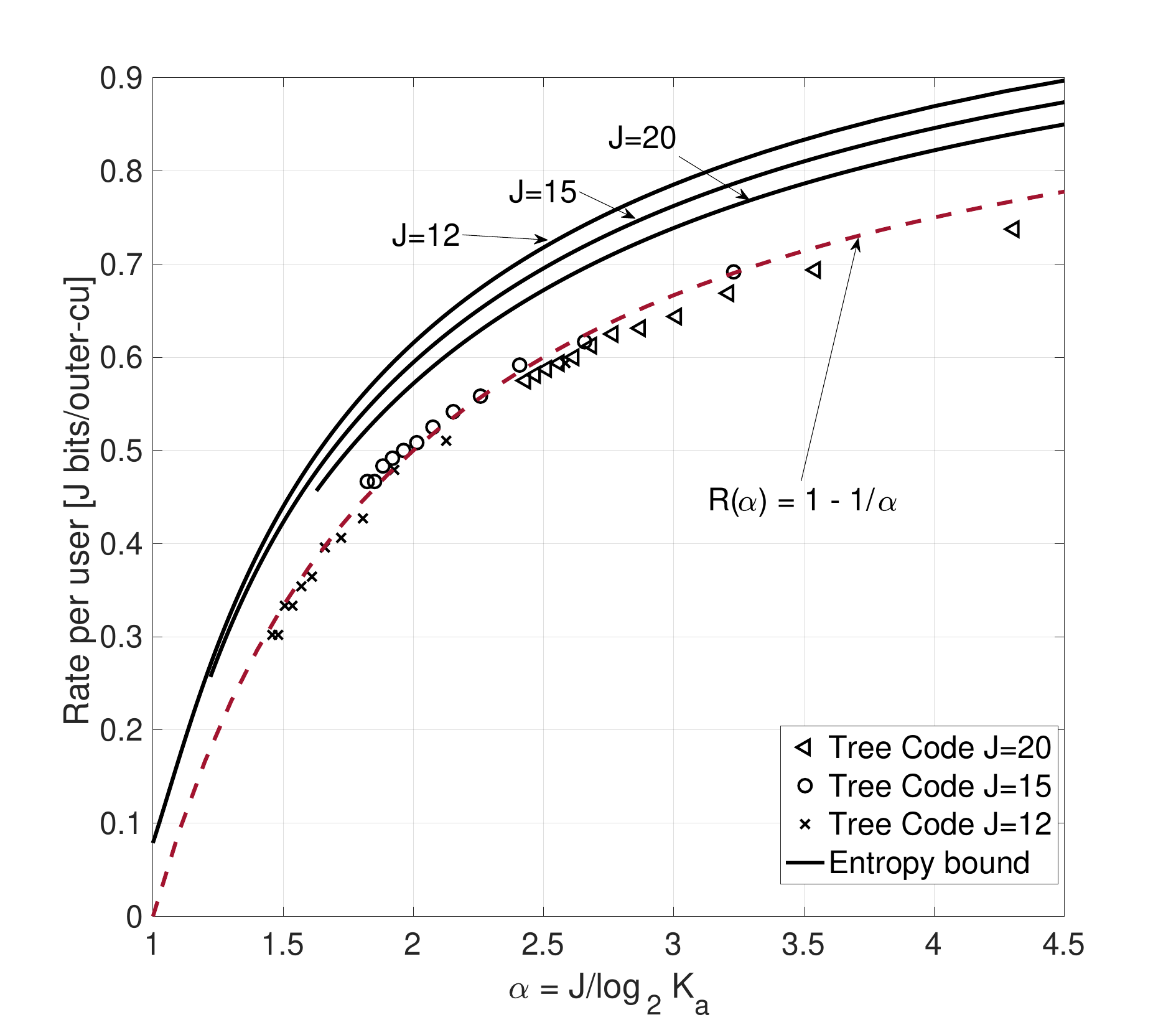}
    \caption{ Achievable rates with the tree code with $L=8$ and $B=100$ bits together
        with the upper bound \eqref{eq:iid_or_bound} as a function of $\alpha$.
    }
    \label{fig:outer_rates_alpha}
\end{figure}
\section{Analysis of the concatenated scheme}
\label{sec:conc}
Let us reformulate Theorem \ref{thm:inner} in terms of the parameters of the concatenated code.
Then the sum-rate is given as  
\beq
S = K_a R = K_a R_\text{in}R_\text{out} = S_\text{in}R_\text{out}
\eeq 
and similarly we have 
\beq
\frac{E_b}{N_0} = \frac{P}{RN_0} = \frac{\mathcal{E}_\text{in}}{R_\text{out}}.
\eeq 
As we have shown in the previous section,
the best achievable outer rate is $R_\text{out} = 1 - \alpha^{-1}$, which 
turns out to coincide with the factor appearing in Theorem \ref{thm:inner}. 
Since the channel strength in the inner channel is given by
$\eta\hat{P}$ and $\hat{P} = n\SNR/L = J\SNR/R_\text{in} = 2J\mathcal{E}_\text{in}$,
the channel strength goes to infinity with $J$ and the error probabilities $p_\text{fa}$ and
$p_\text{md}$ in the outer channel
vanish according to \eqref{eq:tradeoff}.
An important condition to get the asymptotic limit
$R_\text{out} = (1-p_\text{md})(1 - \alpha^{-1})$ in \eqref{eq:outer_bound} is \eqref{eq:error_scaling},
i.e. that the ratio of
false positives to true positives remains at most constant.
This condition requires that
the channel strength in the inner channel has to grow fast enough to ensure that 
the probability of false alarms vanishes faster than the sparsity. In the following 
Theorem we show that the scaling condition for the error probability puts
a constraint on the factor $\eta$.
\begin{theorem}
    \label{thm:error_scaling}
    Let $\eta\hat{P}$ be the channel strength in the scalar Gaussian channel
    \eqref{eq:decoupled_scalar}. In the limit $K_a,J\to\infty$ with $J = \alpha \log_2 K_a$
    for some $\alpha >1$
    the condition \eqref{eq:error_scaling} is fulfilled if and only if 
    \beq
        \eta \geq \bar{\eta}
    \eeq 
    where $\bar{\eta}$ is given in \eqref{def:eta_bar}.
    $\hfill\square$
\end{theorem}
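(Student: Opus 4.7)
The plan is to reduce the question to a direct analysis of the Neyman--Pearson trade-off \eqref{eq:tradeoff} in the prescribed asymptotic regime, using the classical Gaussian tail estimates $Q(x) \leq e^{-x^2/2}$ and the sharper $-\ln Q(x) = x^2/2 + O(\log x)$ as $x\to\infty$. The key substitution is that in the regime $J=\alpha\log_2 K_a$ with fixed $\mathcal{E}_\text{in}$ and $S_\text{in}$, the effective channel amplitude becomes $\sqrt{\eta\hat{P}} = \sqrt{2\eta J\mathcal{E}_\text{in}} = \sqrt{2\eta\alpha\mathcal{E}_\text{in}\log_2 K_a}$, which grows like $\sqrt{\log K_a}$, while the scaling target $p_\text{fa}\lesssim K_a^{1-\alpha}$ corresponds on the $Q$-inverse scale to $Q^{-1}(p_\text{fa}) \sim \sqrt{2(\alpha-1)\ln K_a}$. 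Balancing these two $\sqrt{\log K_a}$ terms will give precisely the condition $\eta \geq \bar\eta$.

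For the \emph{if} direction, I would fix the decision threshold so that $p_\text{md}$ stays bounded (for concreteness $p_\text{md}=1/2$, so that $Q^{-1}(p_\text{md})=0$). By \eqref{eq:tradeoff} this gives $p_\text{fa} = Q(\sqrt{\eta\hat{P}})$, and the Chernoff bound $Q(x)\leq e^{-x^2/2}$ immediately yields $\log_2 p_\text{fa} \leq -\eta\hat{P}(\log_2 e)/2 = -\eta\alpha\mathcal{E}_\text{in}(\log_2 e)\log_2 K_a$. Dividing by $\log_2 K_a$ and demanding the result to be at most $1-\alpha$ recovers exactly $\eta \geq \bar\eta$.

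For the \emph{only if} direction, I would invoke the Neyman--Pearson envelope in reverse: any decision rule with $p_\text{md}$ bounded away from $1$ must satisfy $Q^{-1}(p_\text{fa}) \leq \sqrt{\eta\hat{P}} - Q^{-1}(p_\text{md}) \leq \sqrt{\eta\hat{P}} + O(1)$, and if $p_\text{md}$ is allowed to shrink the bound only becomes tighter. Applying the sharp tail asymptotic $-\ln p_\text{fa} = \tfrac12 Q^{-1}(p_\text{fa})^2(1+o(1))$ gives $-\log_2 p_\text{fa} \leq \eta\alpha\mathcal{E}_\text{in}(\log_2 e)\log_2 K_a\,(1+o(1))$, so $\log_2 p_\text{fa}/\log_2 K_a \geq -\eta\alpha\mathcal{E}_\text{in}\log_2 e$ in the limit. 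Compatibility with \eqref{eq:error_scaling} then forces $\eta\alpha\mathcal{E}_\text{in}\log_2 e \geq \alpha-1$, i.e.\ $\eta\geq\bar\eta$.

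The main technical obstacle I foresee is carefully handling the two layers of approximation so that the if-and-only-if is clean. First, \eqref{eq:tradeoff} was derived under the OR approximation from Theorem \ref{thm:or_mmse}; I would need to check that the error it introduces in $p_\text{fa}$ is of order $K_a^2/2^{2J}=K_a^{2(1-\alpha)}$, which is strictly smaller than the target scaling $K_a^{1-\alpha}$ whenever $\alpha>1$, so it does not affect the leading logarithm. Second, the Gaussian tail asymptotic contributes only subleading $O(\log\log K_a)$ corrections, which must be shown to be $o(\log_2 K_a)$ so that they do not disturb the balance of the $\log_2 K_a$ coefficients. Both checks are routine, but they are the only things preventing the above string of substitutions from being immediate.
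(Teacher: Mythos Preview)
Your proposal is correct and follows essentially the same route as the paper: fix $p_\text{md}$ at a constant (the paper uses $Q^{-1}(p_\text{md})<\epsilon$, you use $p_\text{md}=1/2$), apply the Gaussian tail bound to $p_\text{fa}=Q(\sqrt{\eta\hat P}-O(1))$ with $\hat P=2J\mathcal{E}_\text{in}$ and $J=\alpha\log_2 K_a$, and compare the resulting exponent $-\eta\alpha\mathcal{E}_\text{in}\log_2 e$ to $1-\alpha$. The paper only writes out the ``if'' direction and asserts the converse is similar, whereas you sketch both; your remarks on the OR-approximation error and the $O(\log\log K_a)$ corrections are more careful than the paper but not needed for the leading-order balance.
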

\begin{proof}
    We show only the direction
    \beq
        \eta \geq \bar{\eta} \implies
        \lim_{K_a\to\infty} \frac{\log_2 p_\text{fa}}{\log_2 K_a} \leq 1 - \alpha
    \eeq 
    the reverse implication can be shown similarly.\\
    We can choose the points $p_\text{md}$ and $p_\text{fa}$ on the curve defined by
    \eqref{eq:tradeoff} in a way that $Q^{-1}(p_\text{md})<\epsilon$ for some constant $\epsilon >0$.
    Therefore, for $\hat{P} = 2J\mathcal{E}_\text{in}$
    \beq
        Q^{-1}(p_\text{fa}) \geq \sqrt{\eta\hat{P}} - \epsilon 
        = \sqrt{2\eta J \mathcal{E}_\text{in}} - \epsilon 
    \eeq 
    so
    \beq
    p_\text{fa} \leq Q(\sqrt{2\eta J \mathcal{E}_\text{in}} - \epsilon)
    \eeq 
    and
    \begin{align}
        \lim_{K_a\to\infty}\frac{\log_2 p_\text{fa}}{\log_2 K_a}
        &\leq
        \lim_{K_a\to\infty}\frac{1}{\log_2(e)}\left(\frac{-\eta J\mathcal{E}_\text{in}}{\log_2 K_a}
        + \frac{\mathcal{O}(\sqrt{J})}{\log_2 K_a}\right)\\
        &= -\frac{\eta \alpha \mathcal{E}_\text{in}}{\log_2(e)}
        \label{eq:lim_log_p_fa}
    \end{align}
    where the first line follows from the standard bound
    on the Q-function $Q(x) \leq (2\pi)^{-1/2}\exp(-x^2/2)/x$.
    By reordering, we can see that $\eqref{eq:lim_log_p_fa}<1-\alpha$ if 
    \beq
    \eta > \frac{1-\alpha^{-1}}{\log_2(e)\mathcal{E}_\text{in}} = \bar{\eta}
    \eeq
\end{proof}
The consequences of Theorem \ref{thm:error_scaling} for the concatenated code
are summarized in the following Corollary.
\begin{corollary}
    \label{cor:conc}
    Let $n,L,J,K_a \to \infty$ and $R,\SNR \to 0$ with fixed $E_b/N_0 = \SNR/(2R)$, $S=K_aR$
    and $J = \alpha \log_2 K_a$ for any $\alpha >1$. In this limit there is an outer
    code such that the concatenated code described in Section \ref{sec:coding} with a 
    random Gaussian codebook
    can be decoded with the AMP algorithm \eqref{eq:amp} and $P_e \to 0$
    if and only if
    \beq
    S < \frac{1}{2}\left(\log_2 e - (E_b/N_0)^{-1}\right)
        \label{eq:conc_alg}
    \eeq 
    If the SBS-MAP estimator is used as inner decoder and the decoupling property holds true 
    (Claim~\ref{conj:decoupling}), reliable decoding is possible
    if and only if
    \beq
        S < \frac{1}{2}\log_2 (1 + K_a\SNR)
        \label{eq:conc_cor}
    \eeq
    \hfill$\square$
\end{corollary}
\begin{proof}
    The statement follows immediately from Theorems \ref{thm:inner} and \ref{thm:error_scaling} together with
    the relation $\eta_\text{loc}^* < \bar{\eta}\leq 1$, which is discussed in the proof 
    of Theorem \ref{thm:inner}.
\end{proof}
\begin{remark}
    \normalfont
    In the case $K_a = 1$ no outer code is necessary, so $R_\text{in} = R$ and furthermore
    $S_\text{in} = R$ and $2S_\text{in}\mathcal{E}_\text{in} = \SNR$.
    Hence, if $K_a=1$ is fixed and $J\to\infty$,  which corresponds to
    $\alpha \to \infty$, then Corollary \ref{cor:conc} recovers the statements of
    \cite{Jos2012,Rus2017,Bar2017a}, i.e. that SPARCs are reliable at rates up to the Shannon
    capacity $0.5\log_2 (1 + \SNR)$
    under optimal decoding. Also the algorithmic threshold \eqref{eq:conc_alg}
    coincides with the
    result of \cite{Bar2017a}.
    In that sense Theorem \ref{thm:inner} and Corollary \ref{cor:conc} are
    an extension of \cite{Bar2017a} and show that SPARCs can achieve the optimal rate 
    limit in the unsourced random access scenario. However, notice that the concept of
    our proof technique is simpler, since we make use of Theorem \ref{thm:or_mmse}, which states
    that not only the sections are described by a decoupled channel model, but in the limit
    $J\to\infty$ also the
    individual components. So the result of Theorem \ref{thm:inner}
    can be derived from the stationary points of a simple scalar-to-scalar function.
\end{remark}
\begin{remark}
    In general, most classical multiple-access variants on the AWGN,
    where all the users are assumed to have their own codebook,
    can be represented as sparse recovery problems
    like \eqref{eq:inner_channel}.
    For that, let
    $K_a = 1$ and identify the number of section with the number of users. 
    The matrices $\Am_1,...,\Am_L$ are then the codebooks
    of the individual users and $P_l$ are the transmit power coefficients of different users:
    \begin{itemize}
        \item Fixed $L$ in the limit $J,n\to\infty$ describes the classical
            AWGN Adder-MAC from \cite{El1980}, where each user has his own codebook.
        \item $L,J,n\to\infty$, where only a fraction of the sections are non-zero
            describes the many-access channel treated in \cite{Che2017}
        \item $J$ fixed and $L,n\to\infty$ describes specific version of
            the many-access MAC treated in \cite{Zad2019,Pol2017}
    \end{itemize}
    It is interesting, that in the first case Theorem \ref{thm:inner} gives the
    correct result, after letting $\alpha\to\infty$, $K_a = 1$ and $L=K$.
    The case of $J,n\to\infty$ at finite $L$ is not directly covered though by our analysis
    framework. 
    Nonetheless, our empirical results (e.g. \figref{fig:SE_alg}) show a good
    agreement with the SE predictions even for small $L$. 
    Especially the case $L=1$ is interesting
    since it resembles the U-RA formulation with random coding, as it was already noted in \cite{Zad2019}.
\end{remark}
In \figref{fig:SE_opt} and \figref{fig:SE_alg} we visualize the results of Theorem
\ref{thm:inner}. For that we fix $\alpha = 2$.
For various values of $J$ we set $K_a$ such that $J = \alpha \log_2 K_a$.
For each value of $R_\text{in}$ we then calculate 
$\eta_\text{opt}$ and $\eta_\text{alg}$ using
the approximations of Theorem \ref{thm:error_bound} and Corollary \ref{cor:eta_bound}. 
We repeat this process with increasing $\SNR$
until $\eta_\text{opt}\hat{P}$ and $\eta_\text{alg}\hat{P}$ resp. reach a value of
$(Q^{-1}(p_\text{md})+Q^{-1}(p_\text{fa}))^2$
where the error probabilities are chosen as $p_\text{md} = 0.05/L$, with $L=8$, and $p_\text{md} = 0.01K_a/2^J$.
These are the solid lines in \figref{fig:SE_opt} and \figref{fig:SE_alg}.
The dashed lines are the threshold lines from Theorem \ref{thm:inner}.
Additionally, in \figref{fig:SE_alg} we plot empirical results,
obtained by Monte-Carlo simulations with $L=8$,
where the inner channel with the AMP decoder is simulated with increasing $\SNR$ until
the error probabilities satisfy
$p_\text{md}<0.05/L$ and $p_\text{md}<0.01K_a/2^J$, matching the values above.
We can observe several interesting effects. We can see that the asymptotic
trade-off curve $S_\text{in}(1-\alpha^{-1}) = 0.5\log_2(1+2S_\text{in}\mathcal{E}_\text{in})$
is approached from below by the curves for finite $J$. Also the finite length curves
exhibit a region for small $S_\text{in}$, where $\mathcal{E}_\text{in}$ stays almost constant
up to some value of $S_\text{in}$, and then it starts to grow linearly. Such a behavior
was also observed in e.g. \cite{Zad2019} in the context of finite-blocklength multiple access
on the AWGN channel.
This constant
regions becomes smaller with increasing $J$ and disappears completely in the asymptotic
limit.
We can also see that
there is a region of $S_\text{in}$ in that the algorithmic curve stays almost constant
and matches the optimal curve. That is the region where there is only one unique minimum
in the RS-potential. The empirical simulations in \figref{fig:SE_opt} confirm the qualitative
behavior of the calculated curves. The required energy stays constant over a large region of
$S_\text{in}$ until some point, where it start to grow rapidly.
Note, that Theorem \ref{thm:se_amp} assumes infinite $L$,
but nevertheless the theoretical results match the empirical
simulations with $L=8$ very precisely. 
According to Theorem \ref{thm:inner} in the limit $J\to\infty$
the required energy grows to infinity as
$S_\text{in}$ approaches $\log_2(e)/[2(1-1/\alpha)]$. We can observe that this value
is increased for finite $J$.
In general, we can observe that the asymptotic algorithmic limit is very slowly
approached from \emph{above}. That has the interesting consequence, that for each 
$S_\text{in}$ there is a value $J^*$ below which the required energy decreases with
$J$ and above which the required energy starts to increase again
(see \figref{fig:SE_J} where this is visualized for $S_\text{in} = 2$ with the same parameters
as above).
This reveals a limitation of the AMP algorithm, when used for sparse recovery,
that only manifests when the size of the sections grows large while the inner rate is
held constant. This goes against the intuition that the AMP performance should get better with
increasing sections sizes because random fluctuations are averaged out. 
The sub-optimality of AMP and other iterative algorithms like belief-propagation has been
noted several times, e.g. \cite{Krz2012a,Rus2017,Hsi2018a,Bar2020a}.
The given analysis quantifies the sub-optimality from a different point of view, i.e.
in the information theoretically interesting
regime of constant inner rate. 
\begin{figure}
    \centering
    \includegraphics[width=\linewidth]{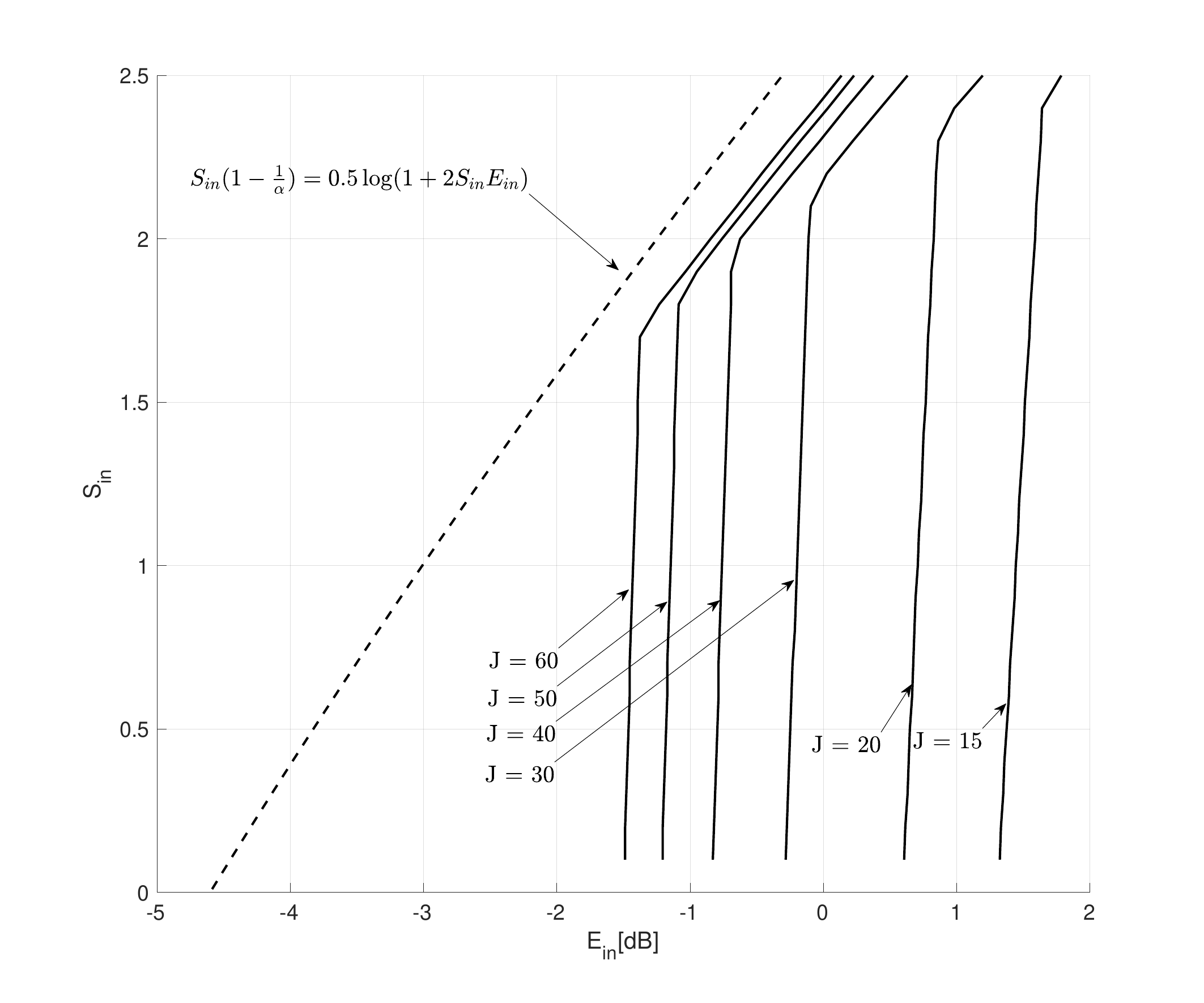}
    \caption{ Required $\mathcal{E}_\text{in}$ to reach specific target error probabilities
        under optimal decoding and $J = \alpha \log_2 K_a$ for $\alpha = 2$
        according to Claim \ref{conj:decoupling} and Theorem \ref{thm:error_bound}.
        The dashed line is the asymptotic limit according to Theorem \ref{thm:inner}.
    }
    \label{fig:SE_opt}
\end{figure}
\begin{figure}
    \centering
    \includegraphics[width=\linewidth]{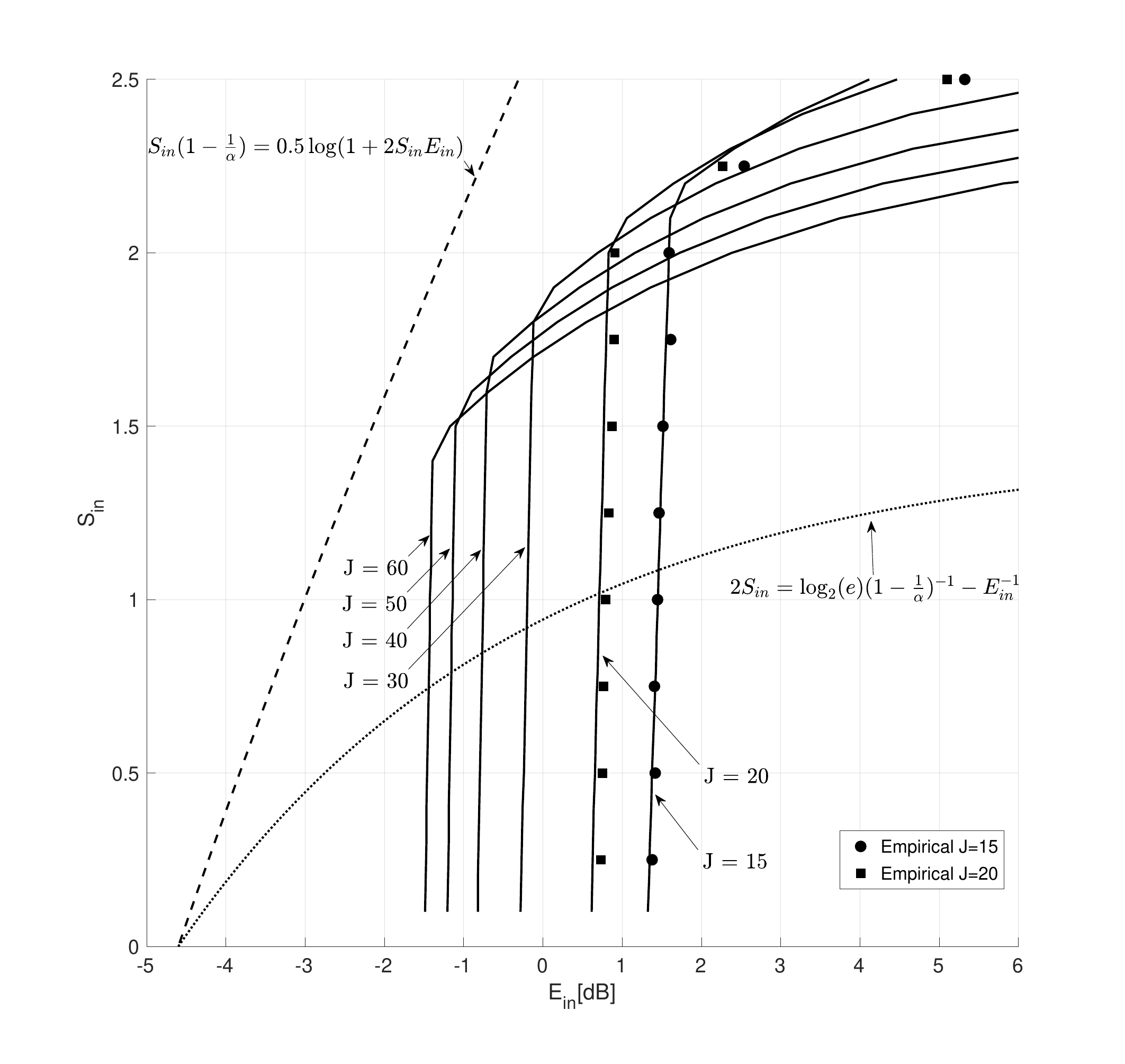}
    \caption{ Required $\mathcal{E}_\text{in}$ to reach specific target error probabilities
        with AMP decoding and $J = \alpha \log_2 K_a$ for $\alpha = 2$
        according to Theorem \ref{thm:se_amp}.
        The empirical simulations
        were conducted with $L=8$.
        The dashed lines are the asymptotic limits according to Theorem \ref{thm:inner}.
    }
    \label{fig:SE_alg}
\end{figure}
\begin{figure}
    \centering
    \includegraphics[width=\linewidth]{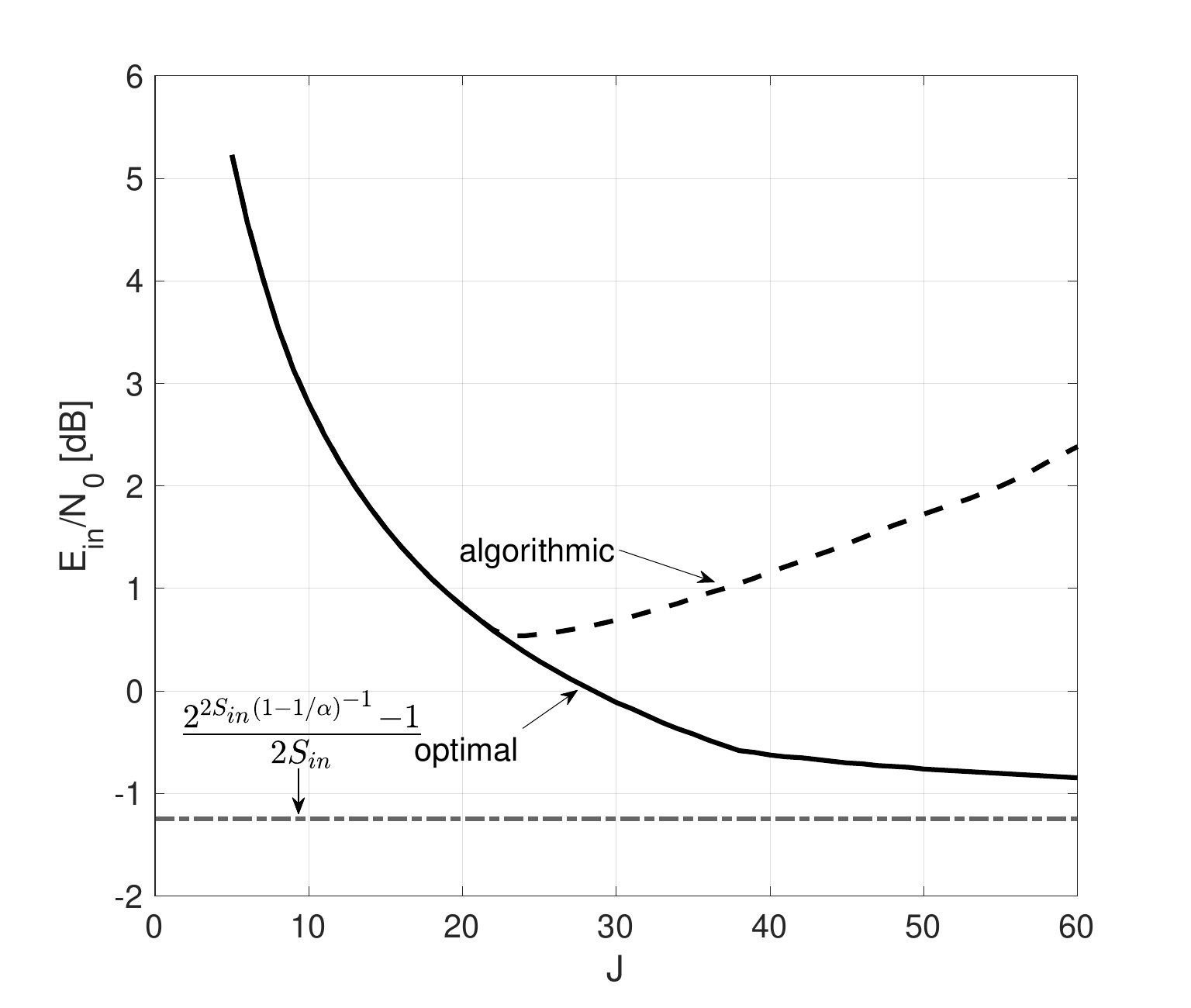}
    \caption{ Required $\mathcal{E}_\text{in}$ to reach specific target error probabilities
        with $\alpha = 2$ and $S_\text{in} = 2$
        according to Theorems \ref{thm:se_amp},\ref{thm:inner} and Claim \ref{conj:decoupling}.
        The results show that the AMP estimate becomes increasingly suboptimal once $J$
        passes a certain threshold around $J^*\approx 22$.
    }
    \label{fig:SE_J}
\end{figure}
\section{Optimizing the Power Allocation}
\label{sec:pa}
The foregoing asymptotic analysis has
important implications for the code design. We have empirically observed that
there is a critical number of users at which the required energy-per-bit increases sharply
and that this critical number gets smaller as $J$ grows larger.
According to the analytic insight developed in the previous section
this behavior is to be attributed to the sub-optimality of the decoder, see \figref{fig:SE_alg},
and we expect the required energy to decrease further with $J$ if an optimal decoder is used.
For single user sparse regression codes it is possible to get rid of the local minimum in the
potential function. This has been achieved through a non-uniform power allocation in
\cite{Rus2017} and through spatial coupling in \cite{Bar2016a, Bar2017a, Hsi2018a, Rus2020}.
More generally, it was shown
in \cite{Yed2012,Yed2014} that in a set of identical copies
of a recursive equation, which are described by some potential function,
when the copies of the recursion are coupled in a special way, the potential
of the coupled system 
has only one minimum, which coincides with the global minimum of the scalar potential
function, even if the uncoupled potential has a local, non-global minimum.
This phenomenon was termed \emph{threshold saturation} in the context
of belief-propagation decoding \cite{Kud2011}. 

We focus on the power allocation approach, because it is easy to obtain an optimized
power allocation for an AMP algorithm with separable denoising functions
and a given input distribution. We present a linear programming algorithm to optimize the
power allocation for the AMP algorithm that follows very closely the optimization procedure
of \cite{Cai2004a}, which was developed in the context of CDMA.

The denoising function in \eqref{eq:eta_add} was specific to a uniform power allocation $P_l = P/L$ for all
$l$. For a generic power allocation we replace the
componentwise denoising functions $f_{t,i}$ with
ones that depend on the section index:
\beq
f_{t,i}^l(x) = \frac{\sqrt{\hat{P}_l}}{Z(x)}\sum_{k=0}^{K_a}p_kk\exp\left(\frac{1}{2\tau_t^2}\left(x-k\sqrt{\hat{P_l}}\right)^2\right)
    \label{eq:eta_add_PA}
\eeq
where $\hat{P}_l = nP_l/L$ and
\beq
Z(x) = \sum_{k=0}^{K_a}p_k\exp\left(\frac{1}{2\tau_t^2}\left(x-k\sqrt{\hat{P_l}}\right)^2\right)
\eeq
To analyse the error probability of this modified AMP algorithm in the asymptotic limit
$L\to\infty$, we assume that the powers $P_l$ take values only in a finite set
$\{\Pi_1,...,\Pi_I\}$
and that the ratio of sections which use $\Pi_i$ is given by $\alpha_i = |\{l:P_l = \Pi_i\}|/L$
satisfying $\sum_{i=1}^I \alpha_i = 1$ and $\sum_{i=1}^I \alpha_i\Pi_i = P$.
We assume that these ratios stay constant as $L\to\infty$. According to the generalized
SE in \cite{Ber2020} the recursive equation which describes the behavior of the modified
AMP algorithm is given by
\beq
    \tau^2_{t+1} = \sigma_w^2 + \lim_{L\to\infty}\frac{\beta}{L2^J}
    \EE\left[\|\eta_t(\thetav + \tau_t\Zm) - \thetav\|_2^2\right]
    \label{eq:berthier}
\eeq
where $\sigma^2_w = P^{-1}$,
\beq
\thetav = (\thetav^1|...|\thetav^L) = \left(\left.\sqrt{\hat{P}_1}\sv^1\right|...\left|\sqrt{\hat{P}_L}\sv^L\right.\right)^\top
\eeq
is a rescaled version of $\sv$ and $\eta_t$ is componentwise given by $\eta_{t,i}^l = \sqrt{\hat{P}_l}f_{t,i}^l$. 
Since we choose $f_t^l = (f_{t,1}^l,...,f_{t,2^J}^l)$ to be separable the expected value in \eqref{eq:berthier} decouples as follows
\beq
\begin{split}
    &\EE\left[\|\eta_t(\thetav + \tau_tZ) - \thetav\|_2^2\right] = \\
    &\sum_{l=1}^L
    \sum_{j=1}^{2^J}
    \EE\left[\left( \sqrt{P}_lf_{t,j}^l\left(\sqrt{P}_ls^l_j + \tau_tZ\right) - \sqrt{P}_ls^l_j\right)^2\right] 
\end{split}
\eeq
As $L$ goes to infinity the sum over $l$ converges to its mean for each $j$
\beq
\begin{split}
    \lim_{L\to\infty}\frac{1}{L}
    &\sum_{l=1}^L
    \EE\left[\left( \sqrt{P}_lf_{t,j}^l\left(\sqrt{P}_ls^l_j + \tau_tZ\right) - \sqrt{P}_ls^l_j\right)^2\right] \\
    &= 
    \sum_{i=1}^I\alpha_i\hat{\Pi}_i\EE\left[\left(f_{t,i}^l\left(\sqrt{\Pi}_is + \tau_tZ\right) - s\right)^2\right] 
    \label{eq:pa_mse}
\end{split}
\eeq 
where now $s$ is a random variable distributed according to the marginal empirical distribution of $\sv$
and $\hat{\Pi}_i = n\Pi_i/L = J\Pi_i/R_\text{in}$.
This holds for each component $j$ and each $j$ has the same marginal distribution,
so the sum over $j$ becomes redundant.
\begin{remark}
    We can see that in this calculation the sums over $j$ and $l$ are interchangeable. If
    $2^J$ is large enough the sum over $j$ already converges to its mean value and the
    sum over $l$ becomes redundant. This explains heuristically why we can observe a good
    correspondence between the state evolution and the empirical performance even for
    small $L$. Technically the exponential scaling regime with $\beta\to\infty$ is not covered
    by the result of \cite{Ber2020}.
\end{remark}
Note, that the denoising functions $f_{t,i}^l$ were chosen precisely as the PME of $s$
in a scalar Gaussian channel like \eqref{eq:decoupled_scalar} with power $\sqrt{\hat{P}_l}$.
So they minimize the MSE in \eqref{eq:pa_mse} and by 
substituting $\tau^2 = \eta^{-1}$ we get the fixed point condition
\beq
\eta^{-1} = 1 + \beta\sum_{i=1}^I\alpha_i\hat{\Pi}_i\text{mmse}(\eta\hat{\Pi}_i).
\label{eq:SE_PA}
\eeq 
The function $\text{mmse}(t)$ is precisely
the same as in \eqref{eq:SE}. We can see that the right hand side of \eqref{eq:SE_PA}
is a linear combination of rescaled versions of the original MMSE function in
\eqref{eq:SE}. We can formulate the condition that \eqref{eq:SE_PA} has no local minima besides
the global minimum around $\eta = 1$ as follows:
\beq
\begin{split}
    \underset{\alphav}{\text{minimize}}\quad &\sum_{i=1}^I \alpha_i\Pi_i \\
    \text{subject to}\quad &1 + \beta\sum_{i=1}^I\alpha_i\hat{\Pi}_i\text{mmse}(\eta\hat{\Pi}_i) < \eta^{-1}-\epsilon\\
                           &\forall \eta \in [0,1-\delta],\\
                      &\sum_{i=1}^I \alpha_i = 1,\\
                      &\alpha_i \geq 0 
\end{split}
\label{eq:PA_opt}
\eeq
where $\epsilon,\delta>0$ are appropriately chosen slack variables. 
The optimization problem \eqref{eq:PA_opt} is a linear program
and therefore easily solvable. 
The discrete set of $\Pi_i$ is chosen as follows. For fixed $K_a,J,R_\text{in}$ we set
a target inner channel strength. We use Theorem \ref{thm:error_bound} to determine the smallest power
$P_\text{opt}$ such that for $\eta_\text{opt}$, the global minimizer of \eqref{eq:RS-scalar-potential},
it holds
that $\eta_\text{opt}\hat{P}_\text{opt}$ exceeds the target inner channel strength.
This $P_\text{opt}$ serves as lower bound on the set of $\Pi_i$. The upper bound
is chosen arbitrary, e.g. $5P_\text{opt}$. The $\Pi_i$ are then chosen as a
uniform discretisation of the interval $[P_\text{opt},5P_\text{opt}]$.
In \figref{fig:pa_optimization} we visualize this for $K_a = 300, J=20, R_\text{in}=0.0061$,
where we have chosen the slack parameters $\epsilon = 0.01, \delta = 0.1$. 
The solution of \eqref{eq:PA_opt} gives an optimal power distribution that
puts weight only two values, $\Pi_1 = P_\text{opt}$ and another value $P_*$ in a ratio
of $\alpha_1 \sim 0.81$ to $\alpha_2 \sim 0.19$. The total average power
$\alpha_1P_\text{opt} + \alpha_2P_*$ is about $0.5$dB smaller than $P_\text{alg}$,
the power at which \eqref{eq:RS-scalar-potential} has no local minimizers.
This means by letting one fifth of the
sections use a higher power it is possible to let the other four fifths of the section
use the optimal power without having a local convergence point.
In \figref{fig:pa_optimization} a) we plot
\beq
    g(\eta) = 1 + 
    \beta\left(\alpha_1\hat{P}_\text{opt}\text{mmse}(\eta\hat{P}_\text{opt})+\alpha_2\hat{P}_*\text{mmse}(\eta\hat{P}_*)\right)-\eta^{-1}
\eeq 
and its counterparts without power allocation. 
\figref{fig:pa_optimization} b) shows the integral of $g(\eta)$, which resemble the 
RS-potential \eqref{eq:RS-scalar-potential} with a non-uniform power allocation.
\begin{figure}
   \centering
   \subfloat[]{\includegraphics[width=\linewidth]{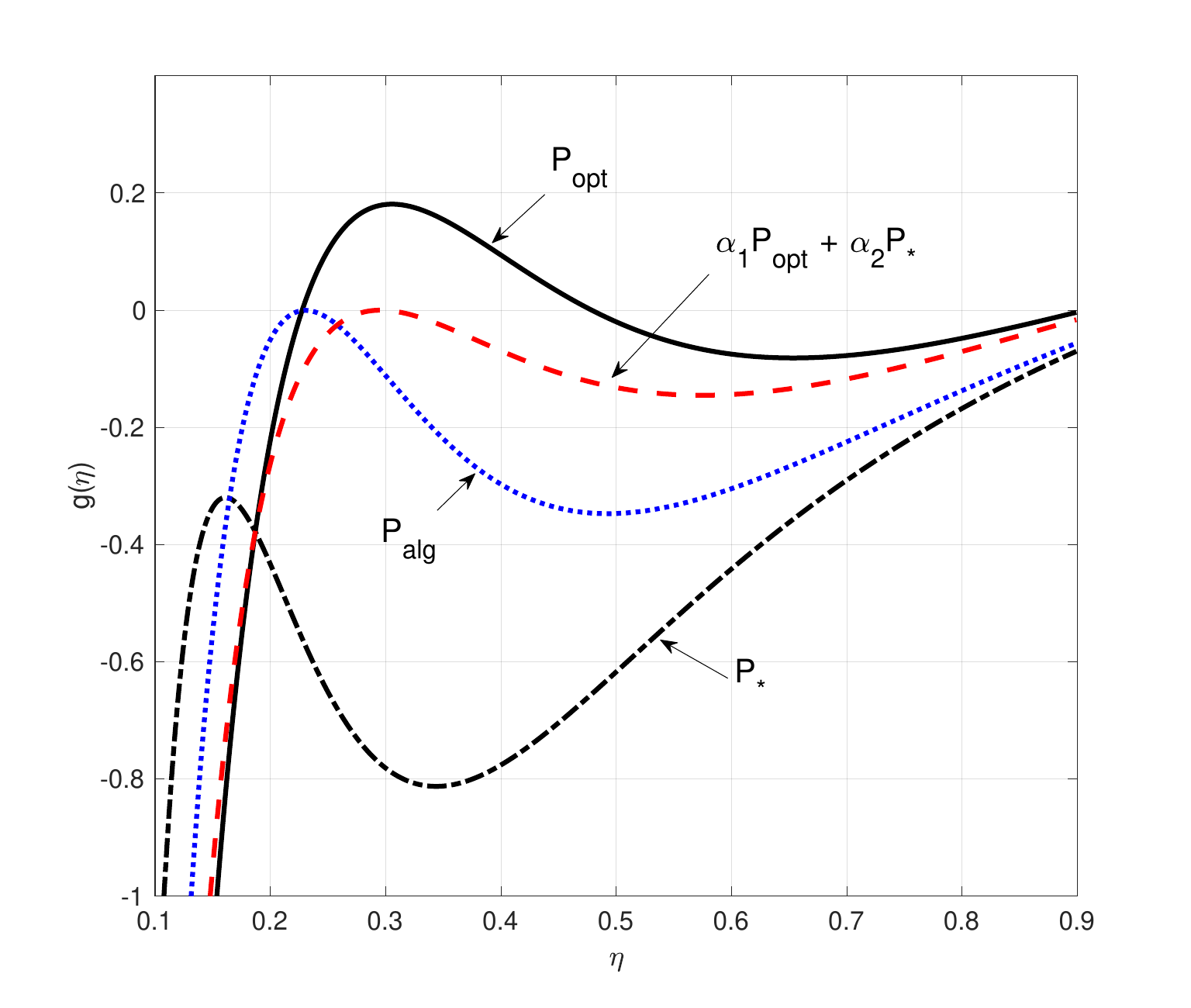}}\quad
   \subfloat[]{\includegraphics[width=\linewidth]{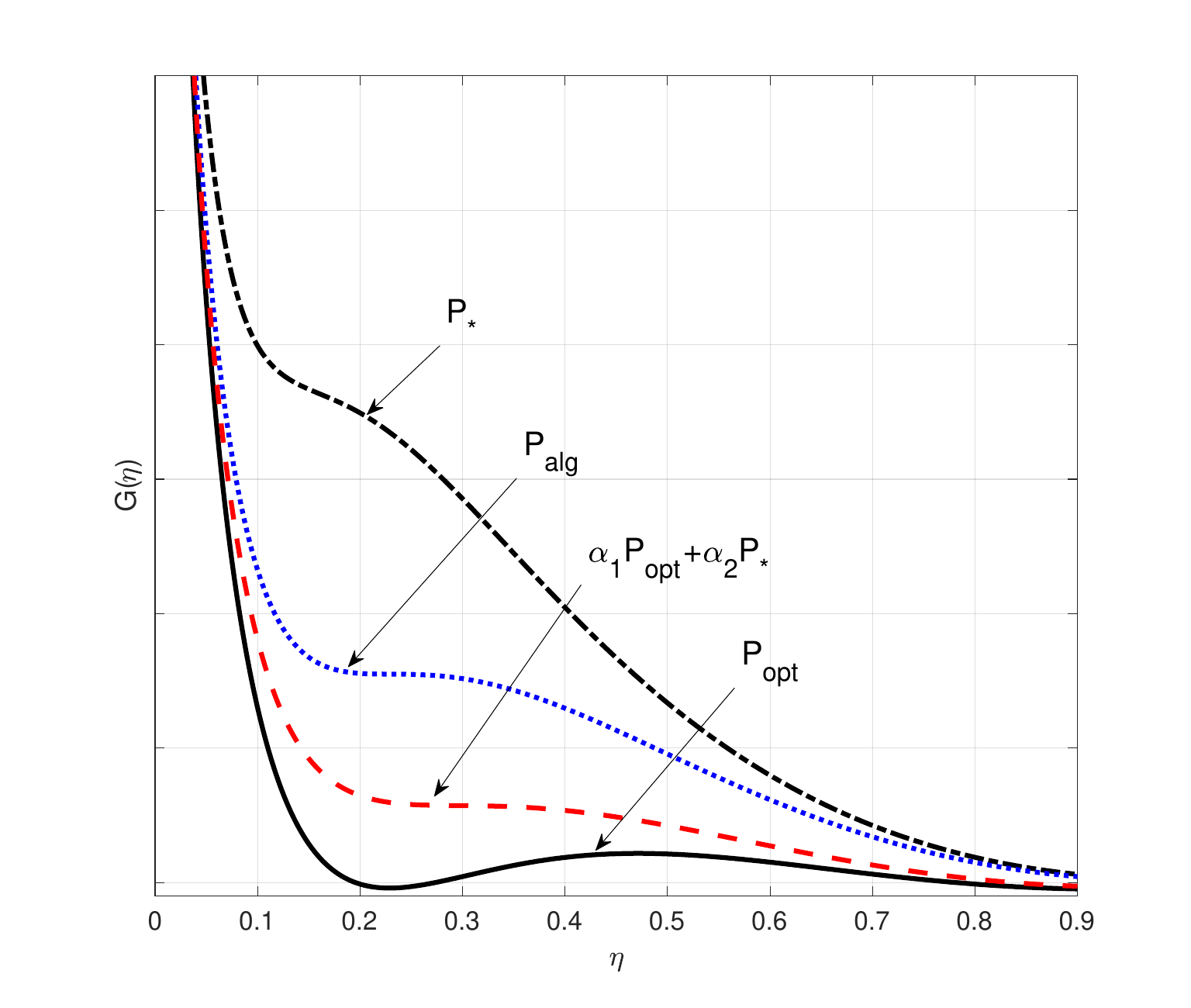}}
   \caption[Titel des Bildes]{Visualization of the solution of the optimization problem for
   $K_a = 300,R_\text{in} = 0.0061,J=20$. The solution puts weights only on two powers
   $P_\text{opt}$ and $P_*\sim 1.9P_\text{opt}$ with ratios $\alpha_1 = 0.81$ and $ \alpha_2 = 0.19$. The total
   power in the power allocated system is $\mathcal{E}_\text{in} = 1.6$dB while the algorithmic threshold
   is $\mathcal{E}_\text{in,alg} = 2.1$dB. So the power allocation in this case gives a gain of $0.5$dB.}
   \label{fig:pa_optimization}
\end{figure}
The efficiency of the power allocation is demonstrated in \figref{fig:p_e_2level} with
finite-length simulations. We choose $L=8, J=20$ and use the outer tree code with
0 parity bits in the first section, $20$ parity bits in the last, and alternating between 8 and 9 parity
bits in the remaining sections. This leaves a total of $89$ data bits. We choose those
numbers to stay below the typical number of 100 bits that is commonly used in IoT scenarios.
The blocklength is chosen as $n=26229$, which results in $R_\text{in} = 0.0061$ and
a per-user spectral efficiency of 
$R_\text{in}R_\text{out} = 0.0034$.
To distribute the power as closely as possible to the optimized power allocation obtained
above we choose two sections to have a power roughly twice as high as the remaining six
sections. We can see that by using a power allocation a gain of about 0.5-1 dB is achievable,
which matches the theoretical prediction.
With the same parameters as above but $K_a = 200$ we find that
$P_\text{alg} = P_\text{opt}$.
So the desired inner channel strength of $15$dB can be obtained by the AMP algorithm with 
a flat power allocation. In that case a non-uniform power allocation may be detrimental,
because it could introduce unwanted local minima into \eqref{eq:RS-scalar-potential}.
Indeed, simulations confirm that the 2-level power allocation that was effective for $K_a = 300$
actually worsens the performance for $K_a \leq 250$. This means that
the power allocation has to be tailored carefully to the expected parameters and
it only improves the performance if there is a gap between $P_\text{alg}$ and $P_\text{opt}$.  
\begin{figure}
    \centering
    \includegraphics[width=\linewidth]{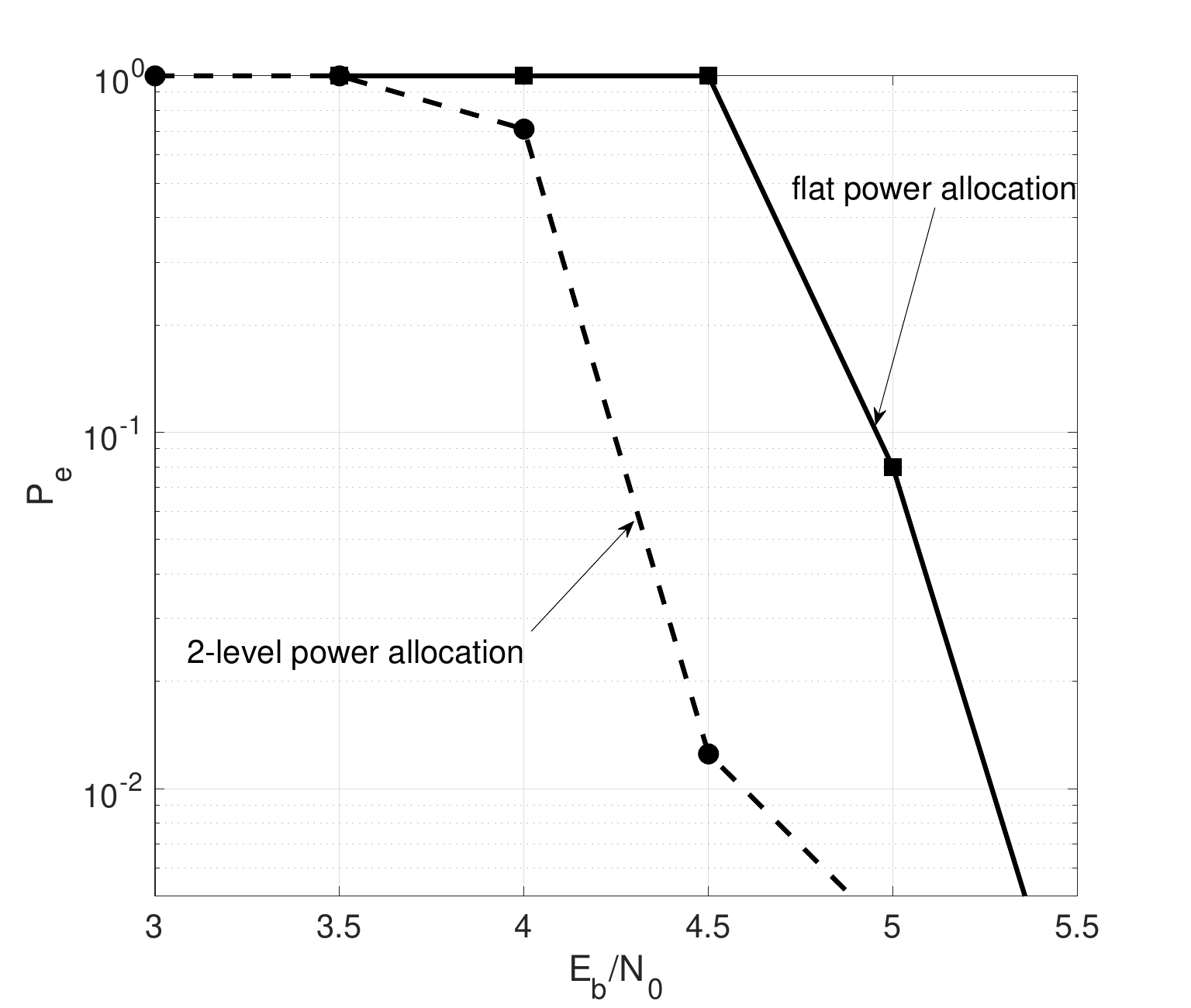}
    \caption{ $J=20, L=8, B=89$ bits, $n=26229, R_\text{out} = 0.55$, which leads to 
        $R_\text{in} = 0.0061$ and $\mu = 0.0034$. $P_e$ is the per-user-probability of error.
    }
    \label{fig:p_e_2level}
\end{figure}
\section{Considerations for Practical Implementation}
\label{sec:practical}
The denoising function in the AMP algorithm, \eqref{eq:eta_add} or its counterpart for non-uniform powers in \eqref{eq:eta_add_PA},
can be simplified if $K_a\ll2^J$.
Then the probabilities $p_k$ are very small for $k\geq 2$ and by neglecting them we
get the modified denoising function
\beq
f_{t,i}^\text{OR}(x) = \sqrt{\hat{P}}\left(1+\frac{p_0}{1-p_0}\exp\left(\frac{\hat{P}-2\sqrt{\hat{P}x}}{2\tau_t^2}\right)\right)^{-1}
\label{eq:or_estimator}
\eeq
which we denote with the suffix OR because it is the PME in the channel \eqref{eq:decoupled_or}.
In the parameter range of our simulations we can see almost no difference between the
full version \eqref{eq:eta_add} and the truncated OR-estimator \eqref{eq:or_estimator}.
For moderate values of $2^J$ compared to $K_a$,
we can improve on \eqref{eq:or_estimator} slightly by taking into account $p_2$:
\beq
\begin{split}
    f_{t,i}^\text{OR+}(x) &= \frac{\sqrt{\hat{P}}}{Z(x)}\left(p_1\exp\left(-\frac{\left(x-\sqrt{\hat{P}}\right)^2}{2\tau_t^2}\right)\right.\\
&\left.+2(1-p_0-p_1)\exp\left(-\frac{\left(x-2\sqrt{\hat{P}}\right)^2}{2\tau_t^2}\right)\right)
\label{eq:orplus_estimator}
\end{split}
\eeq
where 
\beq
\begin{split}
    Z(x) = 
    &p_0\exp\left(-\frac{x^2}{2\tau_t^2}\right)+p_1\exp\left(-\frac{\left(x-\sqrt{\hat{P}}\right)^2}{2\tau_t^2}\right) \\
    &+(1-p_0-p_1)\exp\left(-\frac{\left(x-2\sqrt{\hat{P}}\right)^2}{2\tau_t^2}\right)
\end{split}
\eeq
If necessary, further terms can be included in the same way.
\section{Finite-length Simulations}
\label{sec:sims}
In \figref{fig:ebn0}
the required $E_b/N_0$ to achieve a per-user probability $P_e<0.05$ is shown.
For the empirical curves 
with $J=15$ we use $B=100$ bits
and $n=30000$ real symbols. For $J=20$ we use $B = 89$ bits and
$n=26226$ real symbols. This results in a per-user spectral efficiency of
$\mu = 0.0033$ and $\mu = 0.0034$, which is the typical value used in comparable works
\cite{Pol2017,Pra2020a,Vem2017,Ama2020a,Mar2019}.
As an inner decoder we use the AMP algorithm \eqref{eq:amp} with \eqref{eq:orplus_estimator}
as denoiser.
After the inner decoder converged, in each section, the $K_a + \Delta$ largest entries
are declared as active and added to the list $\mathcal{S}_l$ for the outer tree code.
We use $\Delta = 50$.
For $J=15$ we use $L=16$ and the parity bits for the outer code are chosen as:
$\piv = (0,7,8,8,9,...,9,13,14)$. For $J=20$ we use $L=8$ and a parity bit distribution
$\piv = (0,9,8,9,8,9,8,20)$.
The outer rates are $R_\text{out} = 0.4167$ and $R_\text{out} = 0.5563$ respectively.
For the SE curves we first estimate the required effective inner channel strength
by setting the error rates in the outer channel to $p_\text{md} = P_e/L$ and
$p_\text{fa} = \Delta/2^J$. The required inner channel strength is then calculated by
\eqref{eq:tradeoff}.
We use the potential \eqref{eq:RS-scalar-potential} to estimate
the power to achieve the required inner channel strength. 
For the curves with power allocation we use the
method of Section \ref{sec:pa} to find the optimal power allocation for $K_a = 300$.
We can see
that in both cases, $J=15$ and $J=20$, the required power decreases for $K_a = 300$
but increases for all other values of $K_a$. So the power allocation has to be adapted to the
expected number of users. 
The empirical values match the theoretically estimated SE curves very well, which confirms
the precision of the asymptotic analysis, even though the number of sections is very small. 

The obtained value of $4.3$ dB for $K_a = 300$ is at the point of writing $0.7$ dB better
than the best reported value of $5$ dB, which was achieved in \cite{Ama2020c}.
For smaller values of $K_a$ other coding schemes have achieved better results, the best of
which at the point of writing are \cite{Pra2020a} and \cite{Mar2019},
but both of those schemes have shown a rapid increase
in required energy as $K_a$ grows large.
In \cite{Ama2020c} an enhanced version of
the discussed concatenated coding scheme was presented, where another outer code was used that
enabled the passing of soft decoding information between the AMP decoder and the outer decoder,
resulting in an turbo-like iterative decoding scheme, alternating between inner and outer
decoder. With this type of decoding the required power for
$K_a \leq 250$ is reduced significantly,
but for $K_a = 300$ the required power is still around $5$dB.
\begin{figure}
    \centering
    \includegraphics[width=\linewidth]{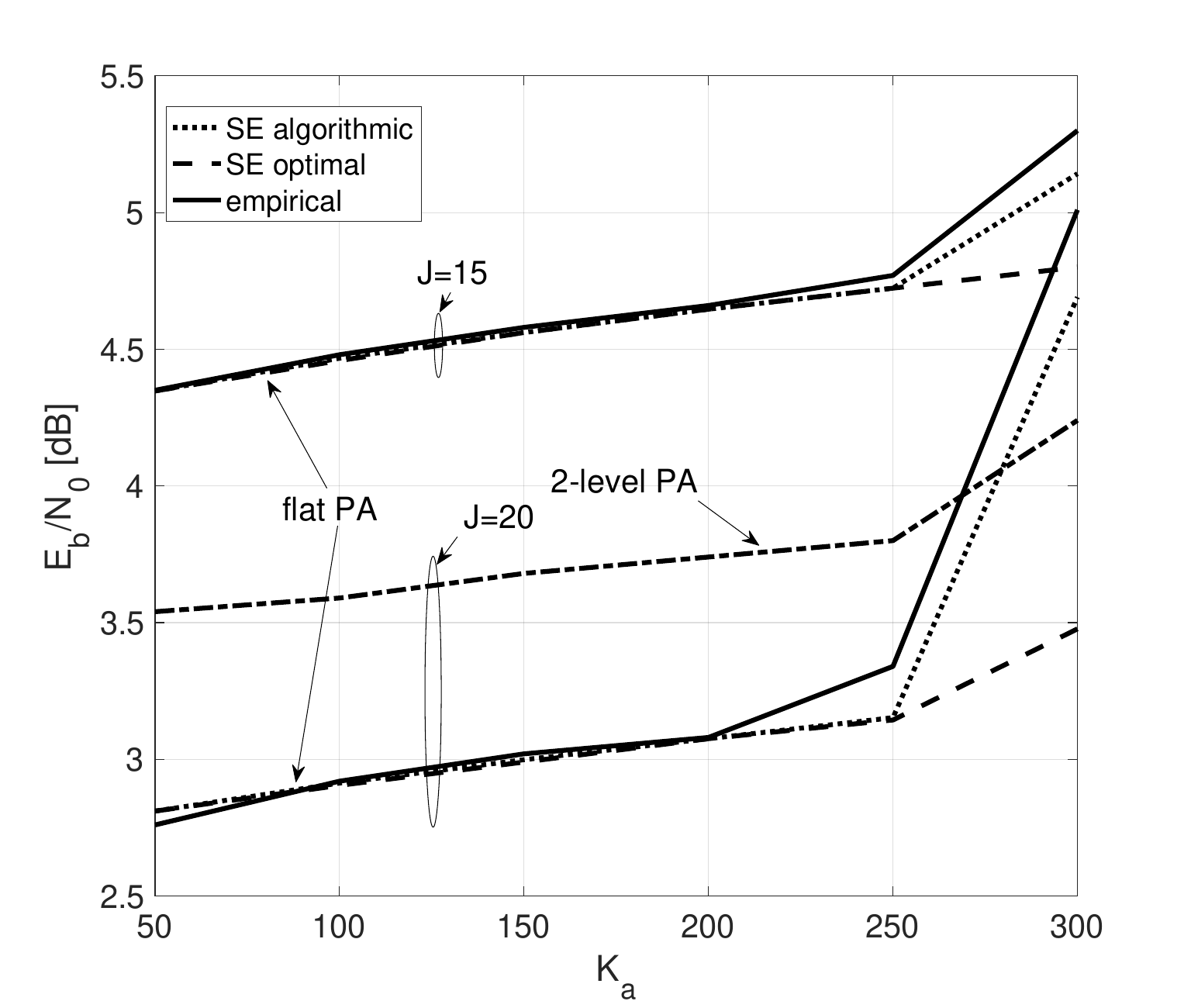}
    \caption{ Required $E_b/N_0$ to achieve a per-user error probability $P_e<0.05$.
        For $J=15$ we used $B=100$, $n=30000$,
        and for $J=20$ we used $B=89$, $n=26229$.
    }
    \label{fig:ebn0}
\end{figure}

\section{Summary and Outlook}
In this work we have introduced a concatenated coding construction that
extends the concept of sparse regression codes 
to the unsourced random access scenario. In this construction an inner code is used
as an efficient single user channel code for the AWGN channel and an outer code is
used to resolve the multiple access interference. The structural similarity to the coupled
compressed sensing scheme allowed us to use the tree code presented in
\cite{Ama2020a} as an outer code. We use the AMP algorithm as inner decoder, for which
we have introduced a low-complexity approximation to the Bayesian optimal
denoiser. We introduced a decomposition of the channel into an inner and an outer channel
to analyse the asymptotic limits under optimal decoding. Furthermore, we calculated
the
asymptotically required energy-per-bit of the inner AMP decoder and compared
it to the optimal decoder.
Finite-length simulations show that the calculated results describe the actual
required energy-per-bit for a fixed per-user error probability very precisely.
We find that as $J\to\infty$, where $2^J$ is the size of the outer alphabet,
the achievable sum-rates of the concatenated code converge to the Shannon limit, even
if the number of active users $K_a$ grows to infinity simultaneously, but much faster than
$J$. This is in stark contrast to the typical information theoretic limit, where
the size of the message is considered to be much larger than the number of active users.
Therefore, it is noteworthy that even under short message length (compared to the number of
users) and no coordination between users the Shannon limit can be achieved. 

Unfortunately, also the difference in required energy between the AMP decoder and the
optimal decoder grows rapidly with $J$ once $J$ surpasses a certain value that depends
on the rate. When there is a difference in performance between the AMP and the optimal inner
decoder, a non-uniform power allocation can be used to improve the performance of the AMP
decoder. We present a linear programming algorithm to find an optimized power allocation.
Although for small sum-spectral efficiencies existing U-RA coding schemes like \cite{Pra2020a,Mar2019}
are more energy-efficient than the presented scheme,
we show that for a sum-spectral efficiency of 1 bit/c.u. and $K_a = 300$ users the presented
approach improves on existing ones by almost 1 dB.
The good performance at high spectral efficiencies and the availability of a precise
analysis make the presented coding scheme stand out among the existing U-RA approaches
and therefore an interesting candidate for massive MTC.

The extension of the presented coding scheme and analysis to more general channel models
incorporating fading, asynchronicity or multiple receivers seems to be in reach and promising
research directions. Furthermore, the presented analysis can be seen as a basis to
analyse the more complex turbo-like decoder of \cite{Ama2020c}.

\appendices

\section{Optimal product distribution}
\label{appendix:product}
\begin{lemma}
Let $\Omega \subset \RR$ be some discrete set.
Let $p(\sv)$ with $p: \Omega^{2^J} \to \RR_+$ be a probability mass function on $\Omega^{2^J}$.
Let $p_{s_1}(s_1),....,p_{s_{2^J}}(s_{2^J})$ denote the marginals of $p(\sv)$.
Further let,
\beq
\mathcal{P}_\text{prod} :=
\left\{q(\sv) = \prod_{i=1}^{2^J}q_i(s_i), q_i:\Omega\to\RR_+\middle|\sum_{s\in\Omega} q_i(s) = 1\right\}
\eeq
denote the space of product distributions
on $\Omega^{2^J}$. Then
\beq
\argmin_{q \in \mathcal{P}_\text{prod}}D(p\ \|\ q)
= \prod_{i=1}^{2^J}p_{s_i}(s_i)
\eeq
\end{lemma}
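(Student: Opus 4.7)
The plan is to rewrite the KL-divergence as a sum that cleanly separates into a $q$-independent part plus one KL-divergence per coordinate, so that the optimization decouples across the $2^J$ factors.

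First I would expand
\[
D(p \ \| \ q) = \sum_{\sv \in \Omega^{2^J}} p(\sv) \log p(\sv) - \sum_{\sv \in \Omega^{2^J}} p(\sv) \sum_{i=1}^{2^J} \log q_i(s_i),
\]
using that $\log q(\sv) = \sum_i \log q_i(s_i)$ because $q$ is a product. The first term is $-H(p)$ and does not depend on $q$. In the second term I would interchange the sums and marginalize $p$ over all coordinates other than $i$, obtaining
\[
-\sum_{i=1}^{2^J}\sum_{s_i \in \Omega} p_{s_i}(s_i) \log q_i(s_i).
\]
Next I would add and subtract $\sum_i \sum_{s_i} p_{s_i}(s_i) \log p_{s_i}(s_i)$ to get the identity
\[
D(p \ \| \ q) \;=\; \Bigl(-H(p) + \sum_{i=1}^{2^J} H(p_{s_i})\Bigr) \;+\; \sum_{i=1}^{2^J} D(p_{s_i} \ \| \ q_i).
\]
The term in parentheses depends only on $p$, not on the choice of $q \in \mathcal{P}_\text{prod}$.

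From here the result is immediate: by Gibbs' inequality each $D(p_{s_i} \ \| \ q_i) \geq 0$, with equality if and only if $q_i = p_{s_i}$. Since the factors $q_i$ can be chosen independently subject to being probability mass functions on $\Omega$, the unique joint minimizer is $q(\sv) = \prod_{i=1}^{2^J} p_{s_i}(s_i)$, which establishes the claim. There is no real obstacle here; the only care needed is to handle the boundary case where $p_{s_i}(s_i) > 0$ but $q_i(s_i) = 0$, in which case $D(p \ \| \ q) = +\infty$ and the minimization excludes such $q$ automatically, and the case where $p_{s_i}(s_i)=0$ for some $s_i$, which contributes $0$ to the relevant sum under the standard convention $0 \log 0 = 0$.
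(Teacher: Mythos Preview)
Your proof is correct and follows essentially the same approach as the paper: both decompose $D(p\,\|\,q)$ into a $q$-independent constant plus $\sum_i D(p_{s_i}\,\|\,q_i)$ and then invoke non-negativity of the KL divergence. The only cosmetic difference is that the paper writes the constant directly as $D\bigl(p\,\big\|\,\prod_i p_{s_i}\bigr)$ by multiplying and dividing by $\prod_i p_{s_i}(s_i)$ inside the logarithm, whereas you obtain the equivalent expression $-H(p)+\sum_i H(p_{s_i})$ by adding and subtracting the marginal entropy terms.
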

\begin{proof}
    For a product distribution $q \in \mathcal{P}_\text{prod}$,
    $D(p\ \|\ q)$ can be expressed as:
\begin{align}
D(p\ \|\ q)
&= \sum_\sv p(\sv) \log\frac{p(\sv)}{q(\sv)} \\
&= \sum_\sv p(\sv) \log\frac{p(\sv)}{\prod_ip_{s_i}(s_i)}\frac{\prod_ip_{s_i}(s_i)}{q(\sv)} \\
&= D\left(p\ \middle\|\ \prod_ip_{s_i}\right) + \sum_\sv\sum_{i=1}^{2^J}p(\sv)\log\frac{p_{s_i}(s_i)}{q_i(s_i)} 
\end{align}
The first term is independent of $q$ and the second term can be rewritten as
\begin{align}
    &\sum_\sv\sum_{i=1}^{2^J}p(\sv)\log\frac{p_{s_i}(s_i)}{q_i(s_i)}  \\
    &= \sum_{i=1}^{2^J}\sum_{s_i}\left(\sum_{\sv\setminus s_i}p(\sv)\right)\log\frac{p_{s_i}(s_i)}{q_i(s_i)}\\
    &= \sum_{i=1}^{2^J}\sum_{s_i}p_{s_i}(s_i)\log\frac{p_{s_i}(s_i)}{q_i(s_i)}\\
&= \sum_{i=1}^{2^J}D(p_{s_i}\ \|\ q_i)
\end{align}
which is non-negative and minimized by $q_i \equiv p_{s_i}$.
\end{proof}

\section{$D_\text{KL}$(Mult $\|$ Bin)}
\label{appendix:kl}
A vector $(z_1,...,z_{2^J})$ is called multinomial distributed with parameter $n$
and probabilities
$p_1,...,p_{2^J}$ if
\begin{align}
    p(\zv) &= \PP(Z_1 = z_1,..,Z_{2^J} = z_{2^J}) \\
         &=\begin{cases}
             \frac{n!}{z_1!\cdots z_{2^J}!} p_1^{z_1}\cdots p_{2^J}^{z_{2^J}} & \quad \sum_{i=1}^{2^J} z_i = n \\
        0 & \quad \text{else}
    \end{cases}.
    \label{eq:multinomial_app}
\end{align}
It follows from the multinomial theorem, that the distribution is normalized
$\sum_\zv p(\zv) = 1$.
An important property of the multinomial distribution is that the marginals follow a binomial
distribution:
\beq
p(Z_i = z_i) = \sum_{\zv\backslash z_i}\PP(z_1,...,z_{2^J}) = \binom{n}{z_i} p_i^{z_i}(1-p_i)^{n-z_i}
    \label{eq:marginal}
\eeq
with covariance given by $\text{cov}(Z_i,Z_j) = -np_ip_j$.

Let $n=K_a$ and
$p_i = 2^{-J}$ for all $i = 1,...,K_a$ and let $q(Z)$
denote the binomial distribution with parameters $n=K_a$ and $p_i =  2^{-J}$.
Then the marginals of $p$ are all identical and equal to $q(Z)$ and it holds 
\begin{align}
    D_\text{KL}\left(p \ \middle\|\ \prod_{i=1}^{2^J} q_i \right) 
    &= \sum_\zv p(\zv)\log_2 \frac{p(\zv)}{q_i(z_i)} \\
    &= - H(p) + 2^JH(q)
    \label{eq:kl_1}
\end{align}
where $H(p)$ denotes the entropy of the multinomial distribution $p$ and $H(q)$
the entropy of the binomial distribution $q$.
Both entropies are well known and given by
\begin{align}
    &H(p) = JK_a - 
    \log_2 K_a!\\
    &+ 2^J\sum_{t=0}^{K_a}{{K_a}\choose{t}}\left(2^{-J}\right)^t\left(1-2^{-J}\right)^{K_a-t}\log_2 t! 
    \label{eq:h_add}
\end{align}
and
\begin{align}
    H(q) 
    &= -\log_2 K_a! + \EE_q [\log Z!] + \EE_q [\log (K_a-Z)!]\\
    &+ J \EE_q [Z] - \EE_q [K_a-Z] \log_2 (1-2^{-J}) 
    \label{eq:h_binom}
\end{align}
We have $\EE_q [Z] = K_a/2^J$ and $\EE_q [K_a-Z] = K_a - K_a/2^J$.
In the limit for large $J$, we can expand $\EE_q [\log_2 (K_a-Z)!]$ in terms of $2^{-J}$ and get:
\begin{align}
    \EE_q [\log_2 (K_a - Z)!] = \log_2 K_a! - \frac{K_a}{2^J}\log_2 K_a + \mathcal{O}\left(\frac{1}{2^{2J}}\right).
\end{align}
Inserting this, \eqref{eq:h_add} and \eqref{eq:h_binom} into \eqref{eq:kl_1},
many terms cancel and we get:
\begin{align}
     &D_\text{KL}\left(p \ \middle\|\ \prod_{i=1}^{K_a} q_i\right) \\
     &= \log_2 K_a! - K_a(2^J-1)\log_2 (1 - 2^{-J}) - K_a\log_2 K_a
\end{align}
Using $\log_2 (1 - 2^{-J}) = -\log_2(e)/2^J + \mathcal{O}(1/2^{2J})$ for large $J$
and the Stirling approximation $\log_2 K_a! = K_a\log_2 K_a - K_a\log_2 e + \mathcal{O}(\log K_a)$ we get
\beq
D_\text{KL}\left(p(\xv)\ \middle\|\ \prod_{i=1}^{K_a} q_i(x_i)\right) 
  = \mathcal{O}(\log K_a) - \frac{K_a\log_2 e}{2^J}.
\eeq
which implies \eqref{eq:kl}.

\section{Exponential $L^1$ convergence implies exponential convergence in measure}
\label{appendix:eta_bound}
\begin{theorem}
    Let $(f_J)_{J=1,2,...}$ be a sequence of integrable functions s.t.
    \beq
    \|f_J - f\|_{L^1} \leq \frac{c}{2^J}
    \label{eq:exp_L1_condition}
    \eeq 
    for some constant $c>0$ and all large enough $J$. For any $\delta > 0$
    there is a $J_\delta$ such that for all $J\geq J_\delta$
    \beq
    |f_J(t) - f(t)| \leq \frac{\sqrt{J}}{\delta2^{J}}
    \eeq 
    holds for all $t$ except for a set of size $\mathcal{O}(\delta J^{-1/2})$.
    $\hfill\square$
\end{theorem}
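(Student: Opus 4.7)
The plan is to apply a Markov-type inequality to the difference $g_J := f_J - f$, which by hypothesis satisfies $\|g_J\|_{L^1} \le c\,2^{-J}$ for all sufficiently large $J$. The key observation is that the threshold $2^{-(1-\delta)J}$ is strictly larger (in the logarithmic sense) than the $L^1$ bound $2^{-J}$, so the set on which $|g_J|$ exceeds the threshold cannot be too big without violating the integral bound.

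First I would introduce the exceptional set
\begin{equation}
E_J := \left\{\, t : |f_J(t) - f(t)| > 2^{-(1-\delta)J}\,\right\}
\end{equation}
and bound its Lebesgue measure from below by its contribution to the $L^1$ norm, namely
\begin{equation}
\frac{c}{2^J} \;\ge\; \|f_J-f\|_{L^1} \;\ge\; \int_{E_J} |f_J(t)-f(t)|\,\mathrm{d}t \;\ge\; 2^{-(1-\delta)J}\,m(E_J).
\end{equation}
Rearranging gives $m(E_J) \le c\,2^{-\delta J}$, which is exactly the claimed $\mathcal{O}(2^{-\delta J})$ bound. By the definition of $E_J$, on its complement the pointwise bound $|f_J(t)-f(t)|\le 2^{-(1-\delta)J}$ holds, and the threshold $J_\delta$ is simply the index beyond which the hypothesis $\|f_J-f\|_{L^1}\le c\,2^{-J}$ is available.

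There is no substantial obstacle here: the statement is a one-line application of Markov's inequality combined with the exponential gap between the rate $2^{-J}$ of $L^1$ convergence and the slightly slower rate $2^{-(1-\delta)J}$ that one is willing to accept pointwise. The only thing worth emphasizing, for its use in the proof of Theorem~\ref{thm:error_bound}, is that the excluded set has measure going to zero exponentially fast, so that when combined with the smoothness of the MMSE functions (cited from \cite{Guo2011}), the pointwise convergence can be upgraded from almost-everywhere to everywhere, and in particular holds at the fixed points of the state-evolution equations \eqref{eq:SE} and \eqref{eq:SE_vector}.
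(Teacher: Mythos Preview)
Your proof is correct and is actually cleaner than the paper's. Both arguments rest on the same Markov/Chebyshev-type observation, namely that
\[
m\bigl(\{t:|f_J(t)-f(t)|>2^{-(1-\delta)J}\}\bigr)\;\le\;2^{(1-\delta)J}\|f_J-f\|_{L^1}\;\le\;c\,2^{-\delta J},
\]
which is exactly what the statement asks for. The paper, however, wraps this in a Borel--Cantelli style tail-sum: it defines $A_j=\{t:|f_j(t)-f(t)|>2^{-(1-\delta)J}\}$ for all $j\ge J$, bounds $\sum_{j\ge J}\mu(A_j)$ via $\sum_{j\ge J}\|f_j-f\|_{L^1}\le c\,2^{-(J-1)}$, and concludes that outside $\mathcal{A}_J=\bigcup_{j\ge J}A_j$ the pointwise bound holds simultaneously for \emph{every} $j\ge J$, with $\mu(\mathcal{A}_J)=\mathcal{O}(2^{-\delta J})$. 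This is strictly stronger than the theorem as stated (where the exceptional set is allowed to vary with $J$), and it is what justifies the ``a.e.'' in the appendix title. Your direct one-step Markov bound already delivers the stated conclusion, and for the application in Theorem~\ref{thm:error_bound} either version suffices once combined with the smoothness of the MMSE functions.
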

\begin{proof}
    Let $\epsilon > 0$ then
    \beq
    \sum_{j=J}^\infty \|f_j - f\|_{L^1} \leq \frac{c}{2^{J-1}} < \epsilon\frac{\sqrt{J}}{\delta2^{J}}
    \eeq
    holds for all 
    \beq
    J > J_\delta = \left(\frac{2c\delta}{\epsilon}\right)^2
    \label{eq:J_delta}
    \eeq
    where we have used condition \eqref{eq:exp_L1_condition} and the formula
    \beq
    \sum_{j=J}^\infty 2^{-j} = 2^{-(J-1)}.
    \eeq 
    Now define the sets 
    \beq
    A_j = \left\{t: |f_j(t)- f(t)|>\frac{\sqrt{j}}{\delta2^{j}}\right\}
    \eeq 
    and let $\mu(A_j)$ denote the Lebesgue measure of $A_j$. Then it follows
    from elementary properties of the integral that
    \beq
    \sum_{j=J}^\infty \mu(A_j)\frac{\sqrt{J}}{\delta2^{J}} < 
    \sum_{j=J}^\infty \|f_j - f\|_{L^1} < \epsilon\frac{\sqrt{J}}{\delta2^{J}}
    \eeq
    and so
    \beq
        \mu\left(\bigcup_{j=J}^\infty A_j\right) \leq
        \sum_{j=J}^\infty \mu(A_j) < \epsilon
        \label{eq:mu_bound}
    \eeq
    Let $\mathcal{A}_J:=\bigcup_{j=J}^\infty A_j$,
    then
    \beq
        \mathcal{A}_J^c = 
        \left\{t: |f_j(t)- f(t)|\leq\frac{\sqrt{J}}{\delta2^{J}},\ \forall j\geq J\right\}
    \eeq 
    and so \eqref{eq:mu_bound} states that $\mu(\mathcal{A}_J)$, the measure of the set of points on which
    the pointwise convergence does not hold, can be made arbitrarily small. More precisely,
    it follows from \eqref{eq:J_delta} that
    \beq
    \mu(\mathcal{A}_J) = \mathcal{O}\left(\frac{\delta}{\sqrt{J}}\right)
    \eeq   
\end{proof}
\section{Proof of Theorem \ref{thm:or_mmse}}
\label{appendix:or_mmse}
Let 
\beq
q(k) = p_k
\eeq
for $k = 0,..,K_a$, where $p_k$
are the binomial probabilities defined in \eqref{eq:binomial} and
\begin{align}
    q_\text{OR}(0) &= p_0 \\
    q_\text{OR}(1) &= 1 - p_0.
\end{align}
Let $r,s,z$ be jointly distributed according to the Gaussian model 
\beq
    r = \sqrt{t}s + z,
\eeq
with $z\sim \mathcal{N}(0,1)$ independent of $s$ for some fixed $t\geq0$ 
and $s$ distributed according to $q$.
Let $\text{mmse}(t)$ be the MMSE of estimating $s$ from the Gaussian observation $r$
and let $f(r)$ be the PME of $s$, given by
\beq
f(r) = \frac{1}{Z(r)}\sum_{k=0}^{K_a}p_kke^{-\left(r-k\sqrt{t}\right)^2/2}
\eeq
with
\beq
Z(r) = \sum_{k=0}^{K_a}p_ke^{-\left(r-k\sqrt{t}\right)^2/2}
\eeq
Let
$f^\text{OR}(r)$ be the mismatched PME, which estimates $s$ from $r$ assuming that $s$ is
distributed according to $q_\text{OR}$. It is given by
\beq
f^\text{OR}(r) 
= \frac{(1-p_0)e^{-\left(r-\sqrt{t}\right)^2/2}}{Z^\text{OR}(r)}
\label{eq:for}
\eeq
with
\beq
Z^\text{OR}(r) = p_0e^{-r^2/2} + (1-p_0)e^{-\left(r-\sqrt{t}\right)^2/2}
\eeq
Let $\text{mmse}_\text{OR}(t)$ be the 
mean square error of $f^\text{OR}(r)$. Since $f^\text{OR}(r)$ is mismatched we have 
$\text{mmse}(t) \leq \text{mmse}_\text{OR}(t)$
and it holds:
\begin{align}
    0 &\leq \text{mmse}_\text{OR}(t) - \text{mmse}(t) \\
      &= \sum_{k=0}^{K_a}p_k \EE\left\{\left[k - f^\text{OR}\left(\sqrt{t}k + z\right)\right]^2 
      - [k - f(\sqrt{t}k + z)]^2\right\} \\
      &= p_0 \EE\left\{\left[f^\text{OR}(z)\right]^2 - \left[f(z)\right]^2\right\}
        \label{eq:mse_1}\\
        &+ p_1\EE\left\{\left[1 - f^\text{OR}\left(\sqrt{t} + z\right)\right]^2 
        - \left[1 - f\left(\sqrt{t} + z\right)\right]^2\right\}
        \label{eq:mse_2}\\
      &+ \sum_{k=2}^{K_a}p_k \EE\left\{\left[k - f^\text{OR}\left(\sqrt{t}k + z\right)\right]^2 
        - \left[k - f\left(\sqrt{t}k + z\right)\right]^2\right\} \label{eq:mse_3}
\end{align}
We can bound the terms \eqref{eq:mse_1} - \eqref{eq:mse_3} individually.
Since $0\leq f^\text{OR}(r) \leq 1$, \eqref{eq:mse_3} is bound by
\begin{align}
    \eqref{eq:mse_3} &\leq \sum_{k=2}^{K_a}p_k k^2 \\
                     &= \text{Var}(s) + [\EE(s)]^2 - p_1 \\
                     &= \frac{K_a}{2^J}\left(1 - \frac{K_a}{2^J}\right) + \frac{K_a^2}{2^{2J}}
                     - \frac{K_a}{2^J}\left(1-2^{-J}\right)^{K_a-1} \\
                     &= \mathcal{O}\left(\frac{K_a^2}{2^{2J}}\right)
\end{align}
For the remaining terms we split the expected values over $z$ in \eqref{eq:mse_1} and $\eqref{eq:mse_2}$ in two parts,
depending on whether $Z(r) = Z(\sqrt{t} + z)$ is smaller or larger than $Z^\text{OR}(r)$.
\subsubsection{$Z(r) \leq Z^\text{OR}(r)$} We have:
\begin{align}
    f^\text{OR}(r) &= \frac{1}{Z^\text{OR}(r)}(1-p_0)e^{-\left(r - \sqrt{t}\right)^2/2} \\
                         &\leq \frac{1}{Z(r)}(1-p_0)e^{-\left(r - \sqrt{t}\right)^2/2}
\end{align}
and
\begin{align}
    f(r) &=
    \frac{1}{Z(r)}\sum_{k=0}^{K_a}p_kke^{-\left(r - \sqrt{t}k\right)^2/2} \\
                         &\geq \frac{1}{Z(r)}p_1e^{-\left(r - \sqrt{t}\right)^2/2}
\end{align}
because all the summands are non-negative. It follows for all $r$:
\begin{align}
    &[f^\text{OR}(r)]^2 - [f(r)]^2  \\
    &\leq \left[\frac{p_1e^{-\left(r-\sqrt{t}\right)^2/2}}{Z(r)}\right]^2\left[\left(\frac{1-p_0}{p_1}\right)^2 - 1\right] \\
    &\leq \left[\left(\frac{1-p_0}{p_1}\right)^2 - 1\right] \\
    &= \left[\left(\frac{1 - (1 - 2^{-J})^{K_a}}{\frac{K_a}{2^J}(1-2^{-J})^{K_a-1}}\right)^2 - 1\right] \\
    &\leq \mathcal{O}\left(\frac{K_a^2}{2^{2J}}\right)
    \label{eq:mse_bound_1}
\end{align}
which bounds the integral in  $\eqref{eq:mse_1}$ on the set $\{Z(z)\leq Z^\text{OR}(z)\}$.
For \eqref{eq:mse_2} notice that 
\begin{align}
    &\left[1 - f^\text{OR}(r)\right]^2  - [1 - f(r)]^2  \\
    &= f^\text{OR}(r)^2 - f(r)^2 
    + 2\left[f(r) - f^\text{OR}(r)\right]
    \label{eq:mse_square}
\end{align}
The first term was already bound in \eqref{eq:mse_bound_1},  we bound the second term by
\begin{align}
    &f(r) - f^\text{OR}(r)  \\
    &= \left[1 - f^\text{OR}(r)\right] - [1 - f(r)]  \\
    &= \frac{p_0e^{-r^2/2}}{Z^\text{OR}(r)} 
    - \frac{p_0e^{-r^2/2} +
    \sum_{k=1}^{K_a}p_k(1-k)e^{-\left(r-\sqrt{t}k\right)^2/2}}{Z(r)} \\
    &\leq 
    \frac{1}{Z(r)}\sum_{k=2}^{K_a}p_k(k-1)e^{-\left(r-\sqrt{t}k\right)^2/2}  \\
    &\leq 
    \frac{1}{Z(r)}\max_{k\geq2}\left\{e^{-\left(r-\sqrt{t}k\right)^2/2}\right\}
    \sum_{k=2}^{K_a}kp_k \\
    &=\frac{1}{Z(r)}\max_{k\geq2}\left\{e^{-\left(r-\sqrt{t}k\right)^2/2}\right\}
    \left(\frac{K_a}{2^J} - p_1\right) \\
    &=\frac{e^{-\left(r-\sqrt{t}k^*\right)^2/2}}{p_1e^{-\left(r-\sqrt{t}\right)^2/2}}
    \mathcal{O}\left(\frac{K_a^2}{2^{2J}}\right) 
    \label{eq:mse_bound_2}
\end{align}
where $k^* = \argmax_{k\geq2}\left\{\exp(-(r-\sqrt{t}k)^2/2)\right\}$.
For the last line, notice that $Z(r) \geq p_k\exp(-(r-\sqrt{t}k)^2/2)$ for all $k$ and
especially for $k=1$.
In \eqref{eq:mse_bound_2} we have $r = \sqrt{t} + z$. For the expected value in \eqref{eq:mse_2} we
have to integrate over $z$, restricted to the values of $z$ for which $Z(r)\leq Z^\text{OR}(r)$.
Since 
\begin{align}
    \int_{-\infty}^\infty e^{-z^2/2}\frac{e^{-(z-\sqrt{t}(k^*-1))^2/2}}{p_1e^{-z^2/2}} = \frac{1}{p_1}
\end{align}
and the integrand is non-negative, the same integral, restricted to $\{z:Z(\sqrt{t}+z) \leq Z^\text{OR}(\sqrt{t}+z)\}$,
is also bounded by $1/p_1$.
\subsubsection{$Z(r) > Z^\text{OR}(r)$}
Let $r$ be such that, $Z(r) > Z^\text{OR}(r)$. It holds that
\begin{align}
        f^\text{OR}(r) 
        &= 1 - \frac{p_0e^{-r^2/2}}{Z^\text{OR}(r)} \\
        &\leq 1 - \frac{p_0e^{-r^2/2}}{Z(r)} \\
        &= \frac{1}{Z(r)}
        \sum_{k=1}^{K_a}p_ke^{-\left(r-k\sqrt{t}\right)^2/2} \\
        &\leq \frac{1}{Z(r)}
        \sum_{k=1}^{K_a}kp_ke^{-\left(r-k\sqrt{t}\right)^2/2} \\
        &= f(r) 
\end{align}
Since both terms are non-negative we get
\beq
    f^\text{OR}(r)^2 - f(r)^2 \leq 0
    \label{eq:mse_bound_3}
\eeq
which, together with \eqref{eq:mse_bound_1}, shows that \eqref{eq:mse_1} is bounded
by a term of order $\mathcal{O}\left(\frac{K_a^2}{2^{2J}}\right)$.

For \eqref{eq:mse_2}, the same argumentation as in 
\eqref{eq:mse_square} holds and it remains to bound
$f(r) - f^\text{OR}(r)$. We have that
\begin{align}
    &f(r)
    - f^\text{OR}(r) \\
    &= \frac{1}{Z(r)}
    \sum_{k=1}^{K_a}kp_ke^{-\left(r-k\sqrt{t}\right)^2/2} 
    - \frac{1}{Z^\text{OR}(r)}
    (1-p_0)e^{-\left(r-\sqrt{t}\right)^2/2} \\
    &\leq \frac{1}{Z(r)}
    \sum_{k=1}^{K_a}kp_ke^{-\left(r-k\sqrt{t}\right)^2/2} 
    - \frac{1}{Z(r)}
    (1-p_0)e^{-\left(r-\sqrt{t}\right)^2/2} \\
    &\leq \frac{1}{Z(r)}
    \sum_{k=2}^{K_a}kp_ke^{-\left(r-k\sqrt{t}\right)^2/2} \\
    &\leq\frac{1}{Z(r)}
    \max_{k\geq2}\left\{e^{-\left(r-\sqrt{t}k\right)^2/2}\right\}\left(\frac{K_a}{2^J} - p_1\right) \\
    &=\frac{e^{-\left(r-\sqrt{t}k^*\right)^2/2}}{p_1e^{-\left(r-\sqrt{t}\right)^2/2}}
    \mathcal{O}\left(\frac{K_a^2}{2^{2J}}\right) 
    \label{eq:mse_bound_b}
\end{align}
This is the same term as in \eqref{eq:mse_bound_b}, for which we have
shown that its expected value over $z$ is bounded by $p_1^{-1}\mathcal{O}(K_a^2/2^{2J})$.
\eqref{eq:mse_bound_b}, together with \eqref{eq:mse_bound_2}, shows that $f(r)-f^\text{OR}(r)$ is
bounded by  $p_1^{-1}\mathcal{O}(K_a^2/2^{2J})$ for all $r$.
\eqref{eq:mse_bound_2} and \eqref{eq:mse_bound_3}
show that $f^\text{OR}(r)^2 - f(r)^2 = \mathcal{O}(K_a^2/2^{2J})$ for all $r$, and
therefore, by \eqref{eq:mse_square}, also \eqref{eq:mse_2} is bound
by a term of order $\mathcal{O}\left(\frac{K_a^2}{2^{2J}}\right)$.
This concludes the proof of Theorem \ref{thm:or_mmse}.
\section{Proof of Theorem \ref{thm:limit}}
\label{appendix:proof_limit}
The RS-potential \eqref{eq:RS-OR-potential}, rescaled by $\beta/2^J$
takes the form
\beq
i^\text{RS}(\eta) = \frac{R_\text{in}2^J}{J}I(\eta\hat{P}) +
\frac{\log_2 e}{2}[(\eta - 1) - \ln \eta]
\label{eq:rs_potential_rescaled}
\eeq
with the mutual information
\beq
I(\eta\hat{P}) := I(X;Y) = H(Y)-H(Y|X)
\eeq
for
$P(X=0) = p_0$, $P(X=1) = 1-p_0$ and $Y = (\eta\hat{P})^\frac{1}{2}X + Z$,
for $Z\sim\mathcal{N}(0,1)$ independent of $X$. 
The mutual information $I(\eta\hat{P})$ can be evaluated as follows.
First, note that in an additive channel $H(Y|X) = H(Z)$,
so $H(Y|X)$ is independent of $\eta$ and therefore we can ignore it when evaluating $i^\text{RS}(\eta)$.
The distribution of $Y$ is given by
\beq
\begin{split}
    p(y) &= p_0p(y|x=0) + (1-p_0)p(y|x=1)\\ 
         &= \frac{p_0}{\sqrt{2\pi}}\exp\left(-\frac{y^2}{2}\right)
         + \frac{1-p_0}{\sqrt{2\pi}}\exp\left(-\frac{1}{2}\left(y-(\eta\hat{P})^\frac{1}{2}\right)^2\right),
\end{split}
\eeq
so the differential output entropy $H(Y) = -\int p(y)\log_2 p(y)\mathrm{d}y$ can be split into the
sum of two parts. Define $H_0$ and $H_1$ respectively by
\beq
H_0 := -\frac{1}{\sqrt{2\pi}}\int_{-\infty}^\infty \exp\left(-\frac{y^2}{2}\right)\log_2(p(y))\mathrm{d}y
\label{def:H0}
\eeq
and
\beq
\begin{split}
    H_1 &:= -\frac{1}{\sqrt{2\pi}}\int_{-\infty}^\infty
    \exp\left(-\frac{1}{2}\left(y-(\eta\hat{P})^\frac{1}{2}\right)^2\right)\log_2(p(y))\mathrm{d}y\\
    &= -\frac{1}{\sqrt{2\pi}}\int_{-\infty}^\infty
    \exp\left(-\frac{y^2}{2}\right)\log_2\left(p\left(y+(\eta\hat{P})^\frac{1}{2}\right)\right)\mathrm{d}y
\end{split}
\label{def:H1}
\eeq
such that the following relation holds:
\beq
I(\eta\hat{P}) = p_0H_0 + (1-p_0)H_1.
\eeq
Taking into account the scaling factor in \eqref{eq:rs_potential_rescaled}
and using that $\lim_{J\to\infty}2^J(1-p_0) = K_a$ and $\lim_{J\to\infty}p_0 = 1$
we get that
\beq
\begin{split}
    \lim_{J\to\infty}\frac{R_\text{in}2^J}{J}I(\eta\hat{P}) 
    &= \lim_{J\to\infty}\left(\frac{R_\text{in}2^J}{J}H_0 + \frac{S}{J}H_1\right)
    \label{eq:lim_I}
\end{split}
\eeq
Now let us take a closer look at $\log_2 p(y) = \log_2(e) \ln p(y)$ which appears
in both $H_0$ and $H_1$.
Let $x_1,x_2 > 0$ with $x_2>x_1$. Then for 
the logarithm of the sum of exponentials it holds that
\beq
-\ln (e^{-x_1} + e^{-x_2}) = x_1 + \ln(1+e^{-(x_2-x_1)}).
\eeq
The error term $\ln(1+e^{-(x_2-x_1)})$ decays exponentially as the difference $x_2-x_1$ grows.
Since $p(y)$ is the sum of two exponentials we can approximate $\ln p(y)$ by:
\beq
\begin{split}
    -\ln p(y) \approx
    \min\left\{\frac{y^2}{2} - \ln(p_0),\frac{1}{2}\left(y-(\eta\hat{P})^\frac{1}{2}\right)^2-\ln(1-p_0)\right\}
\end{split}
\label{eq:max_logsum}
\eeq
This approximation is justified, since the difference of the two exponents
in $p(y)$ is proportional to $\sqrt{J}$, and so it grows large with $J$. 
\footnote{Technically, this approximation does not hold at the point where the two exponents in
    $p(y)$ are equal. However, since the integral of a function does not depend
on the value of the function at points of measure zero, we can redefine $\ln p(y)$ arbitrarily 
at that point.}
First, note, that since $\min\{a,b\} \leq a$ and $\min\{a,b\} \leq b$ holds for all $a,b\in \RR$,
$-\ln p(y) \leq y^2/2-\ln(1-p_0)$ as well as $-\ln p(y+(\eta\hat{P})^\frac{1}{2})\leq y^2/2 + \ln (2^J/K_a)$.
This means that each of the integrands in
$H_0$ and $H_1/J$ resp. is bounded uniformly, for all $J$, by an integrable function.
This allows us to evaluate the integrals by using Lebesgue's theorem on dominated convergence.
For this purpose we need to calculate the pointwise limits of $\ln p(y)$ and
$\ln p(y+(\eta\hat{P})^\frac{1}{2})/J$. The theorem on dominated convergence then states,
that the limit of the integrals is given by the integral of the pointwise limits.\\
The minimum in \eqref{eq:max_logsum} can be expressed as
\beq
-\ln p(y) =
\begin{cases}
    \frac{y^2}{2} &\quad y<\gamma \\
    \frac{1}{2}\left(y-(\eta\hat{P})^\frac{1}{2}\right)^2 +\ln\left(\frac{2^J}{K_a}\right)&\quad y\geq\gamma
\end{cases}
\eeq
where we neglected $\ln(p_0)=\ln(1 - K_a/2^J) \sim K_a/2^J$ and $\gamma$ is given by
\beq
\gamma = \frac{1}{2}\left(\eta\hat{P}\right)^\frac{1}{2} + \ln\left(\frac{2^J}{K_a}\right)\left(\eta\hat{P}\right)^{-\frac{1}{2}}.
\eeq
Given the considered scaling constraints and
$\hat{P} = J\SNR/R_\text{in} = 2J\mathcal{E}_\text{in}$,
$\gamma$ can be rewritten as
\beq
\gamma = \sqrt{\frac{J}{2}}\left(\sqrt{\eta \mathcal{E}_\text{in}} + \frac{1-\frac{1}{\alpha}}{\log e\sqrt{\eta \mathcal{E}_\text{in}}}\right)
\eeq
The term in parenthesis is strictly positive for all $\eta$ so $\lim_{J\to\infty}\gamma = \infty$
and therefore the pointwise limit of $\ln p(y)$ is give by $\lim_{J\to\infty} \ln p(y) = -y^2/2$.
It follows from
Lebesgue's theorem on dominated convergence that
\beq
\lim_{J\to\infty} H_0 = \log_2 e
\label{eq:lim_H0}
\eeq
which is independent of $\eta$, so we can ignore it when evaluating $i^\text{RS}(\eta)$.
For the calculation of $H_1$ note that:
\beq
-\ln p\left(y+(\eta\hat{P})^\frac{1}{2}\right) =
\begin{cases}
    \frac{1}{2}\left(y+(\eta\hat{P})^\frac{1}{2}\right)^2 &\quad y<\gamma'\\
    \frac{y^2}{2} + \ln \left(\frac{2^J}{K_a}\right) &\quad y\geq\gamma'
\end{cases}
\eeq
where we defined
$\gamma' := \gamma - (\eta\hat{P})^\frac{1}{2}$. $\gamma'$ is not non-negative anymore and
therefore the asymptotic behavior of $\gamma'$ depends
on $\eta$ in the following way:
\beq
\lim_{J\to\infty}\gamma' = 
\begin{cases}
    \infty  &\ \text{if } \eta < \bar{\eta}\\
    0       &\ \text{if } \eta = \bar{\eta}\\
    -\infty &\ \text{if } \eta > \bar{\eta}\\
\end{cases}
\eeq
where $\bar{\eta}$ was defined in \eqref{def:eta_bar}.
This gives the following asymptotic behavior:
\beq
-\lim_{J\to\infty}\frac{\ln p(y + (\eta\hat{P})^\frac{1}{2})}{J} = 
\begin{cases}
    \eta \mathcal{E}_\text{in} + \frac{1}{2J} &\ \eta < \bar{\eta}\\
    (1-\alpha^{-1})/\log_2 e &\ \eta \geq \bar{\eta}
\end{cases}
\label{eq:lim_ln}
\eeq
Finally, using \eqref{eq:lim_H0}, \eqref{def:H1}, \eqref{eq:lim_I}, \eqref{eq:lim_ln} and the $\theta$ function defined in
\eqref{def:theta} we get:
\beq
\begin{split}
    &\lim_{J\to\infty}\left(\frac{i^\text{RS}(\eta)}{\log_2 e} - \frac{R_\text{in}2^J}{J}\right) \\
    &= \eta S\mathcal{E}_\text{in}[1-\theta(\eta-\hat{\eta})] \\
    &+ \frac{S}{\log_2 e} \left(1-\frac{1}{\alpha}\right)\theta(\eta-\bar{\eta})
+\frac{1}{2}\left[(\eta - 1) - \ln \eta\right]
\end{split}
\eeq

\newcommand{\mmseor}{\text{mmse}_\text{OR}}
\newcommand{\for}{f^\text{OR}}

This proves the expression \eqref{eq:rs_limit} for the pointwise limit of the potential function.
A similar calculation shows that the derivatives of the potentials, i.e., the functions $\text{mmse}_\text{or}$
also converge pointwise: The $\text{mmse}_\text{or}$ function is given by
\beq
\mmseor(t) = p_0\EE[\for(z)^2] + (1-p_0)\EE[(1-\for(\sqrt{t}+z))^2]
\eeq
where $\for(r)$ was defined in \eqref{eq:for}.
First, we show that the first summand of $\beta\hat{P}\mmseor(\eta\hat{P})$ goes to zero.
Note that $\beta\hat{P} = 2^JR_\text{in}/J\cdot2J\mathcal{E}_\text{in} = \mathcal{O}(2^J/K_a)$
since $R_\text{in} = S/K_a$.
It is apparent that $\for(z) = \mathcal{O}(K_a/2^J)$. Therefore, $\beta\hat{P}\for(z)^2 = \mathcal{O}(K_a/2^J)$,
which goes to zero for all $z$. Furthermore, $\for(z)^2$ can be bounded by $1$ which is integrable w.r.t
the Gaussian density. So, by the dominated convergence theorem, also  $\beta\hat{P}\EE[\for(z)^2] \to 0$.
For the second summands, note that $\beta\hat{P}(1-p_0) = \mathcal{O}(1)$, so it remains to analyze
$\lim \EE[(1-\for(\sqrt{\eta\hat{P}} + z))^2]$. After some algebraic manipulations we find
\beq
\begin{split}
    1-\for\left(\sqrt{\eta\hat{P}} + z\right) = \frac{c}{c' + \exp(\gamma''(z))}
\end{split}
\eeq
where $c,c'>0$ are some constants independent of all variables and
\beq
\gamma''(z) = \eta J\mathcal{E}_\text{in} + J(1-\alpha^{-1}) + z\sqrt{2\eta J\mathcal{E}_\text{in}}.
\eeq

We can see that $\gamma'' \to 0$ for $\eta < \bar{\eta}$, i.e., $\lim (1-\for(\sqrt{\eta\hat{P}} + z))^2 = \text{const.} > 0$,
and $\gamma'' \to \infty$ for $\eta > \bar{\eta}$, i.e., $\lim (1-\for(\sqrt{\eta\hat{P}} + z))^2 = 0$.
Again, the integrand can be bound by $1$, so by the dominated convergence theorem the convergence carries over
to the expected values.

Note that the limiting mmse function is constant on the intervals $(0,\bar{\eta})$ and $(\bar{\eta},\eta)$.
Since both the mutual information and the mmse functions are non-increasing in $\eta$,
a standard result from real analysis states that pointwise convergence implies uniform convergence in
all continuity points of the limit functions. This shows that both the RS-potential and its derivatives converge
uniformly on the intervals $(0,\bar{\eta})$ and $(\bar{\eta},\eta)$. By another standard results from real analysis
this implies that also the sequence of derivatives converges.
This proves the statement of the theorem.

\section*{Acknowledgement} 
We would like to thank the editor and the anonymous reviewers for helpful suggestions
and insightful comments.
PJ has been supported by DFG grant JU 2795/3 and DAAD grant 57417688. AF has
been partially supported by DAAD grant 57512510. AF and GC have been supported
by the Alexander-von-Humbold foundation.

\printbibliography

\end{document}